\documentclass[aps, twocolumn, superscriptaddress, 10pt]{revtex4-2}

\usepackage{amsmath, amssymb, amsthm}
\usepackage{bm}
\usepackage{braket}
\usepackage[normalem]{ulem}
\usepackage{physics}
\usepackage{quantikz,tikz}
\usepackage{color}
\usepackage[caption=false]{subfig}
\usepackage{orcidlink}
\usepackage{epsfig}

\usepackage{booktabs}
\usepackage{multirow}
\usepackage{makecell}

\usepackage{todonotes}

\newcommand{\trnorm}[1]{\norm*{#1}_{\mathrm{tr}}}
\newcommand{\trnormBig}[1]{\norm\Big{#1}_{\mathrm{tr}}}
\newcommand{\trnormBigg}[1]{\norm\Bigg{#1}_{\mathrm{tr}}}
\newcommand{\spec}{\operatorname{spec}}

\DeclareMathOperator{\cnot}{\mathrm{CNOT}}

\DeclareMathOperator{\crz}{\mathrm{CRZ}}

\theoremstyle{plain}
\newtheorem{theorem}{Theorem}
\newtheorem{proposition}[theorem]{Proposition}
\newtheorem{lemma}[theorem]{Lemma}

\newtheorem*{proposition*}{Proposition}
\newtheorem*{theorem*}{Theorem}
\newtheorem*{lemma*}{Lemma}

\theoremstyle{definition}

\theoremstyle{remark}
\newtheorem*{remark}{Remark}

\begin{document}
\title{Hamiltonian simulation for 3D elastic wave equations in homogeneous elastic media}

\author{Kosuke Nakanishi}
\email[Contact author]{Kosuke.Nakanishi.kd@mosk.tytlabs.co.jp}
\affiliation{Toyota Central R\&D Labs., Inc., 1-41, Yokomichi, Nagakute, Aichi 480-1192, Japan}

\author{Hiroshi Yano}
\affiliation{Toyota Central R\&D Labs., Inc., 1-41, Yokomichi, Nagakute, Aichi 480-1192, Japan}

\author{Yuki Sato}
\affiliation{Toyota Central R\&D Labs., Inc., 1-41, Yokomichi, Nagakute, Aichi 480-1192, Japan}

\begin{abstract}
We present an explicit quantum circuit construction for Hamiltonian simulation of a first-order velocity--stress formulation of the three-dimensional elastic wave equation in homogeneous isotropic media. Previous studies have shown how elastic wave equations can be cast into forms amenable to Hamiltonian simulation, but they typically rely on black box Hamiltonian access assumptions, making gate complexity estimation difficult.
Starting from the first-order velocity--stress formulation, we discretize the system by finite differences, transform it into Schr\"odinger form, and exploit the separation between the component register and the spatial register to decompose the Hamiltonian into structured tensor product terms. This yields explicit implementations of first-order and second-order Trotter formulas for the resulting time evolution operator. We derive corresponding error bounds and constant sensitive qubit and CNOT complexity estimates in terms of the discretization parameter, simulation time, target accuracy, and material parameters. Numerical experiments validate the proposed framework through comparisons with the exact time evolution and reconstructed physical fields.
\end{abstract}

\maketitle

\section{Introduction}
Partial differential equations (PDEs) are fundamental mathematical models for a wide range of physical phenomena, including heat conduction, fluid flow, electromagnetic wave propagation, quantum dynamics, and elastic deformation.
Solving PDEs is therefore essential for understanding and predicting these phenomena~\cite{PartialDifferentialEquationsBasicTheory2023a}.
Among such PDEs, a representative example is the elastic wave equation, which describes the propagation of stress and strain in elastic media~\cite{Memoireloismouvementfluides1822}.
In three dimensions, its first-order formulation has nine components per spatial point, and this equation is also used in seismology and volcanic wave modeling~\cite{QuantitativeSeismology2002}
Numerical simulation of PDE dynamics typically starts with spatial discretization, using finite difference method, finite element, or spectral methods~\cite{Accuracyfinitedifferencefiniteelementmodelingscalarelasticwaveequations1984,1FiniteDifferenceApproximations2007}.
After discretization, the resulting large scale systems become increasingly expensive to solve on classical architectures as problem size grows~\cite{reviewspectralpseudospectralfinitedifferencefiniteelementmodellingtechniquesgeophysicalimaging2011}. In response to these computational challenges, quantum computing has attracted increasing interest as a potential approach to accelerating the simulation of differential equations and PDE governed physical systems~\cite{applicationsquantumcomputingplasmasimulations2021,Improvedquantumalgorithmslinearnonlineardifferentialequations2023,Quantumalgorithmpartialdifferentialequationsnonconservativesystemsspatiallyvaryingparameters2025,QuantumcomputingfluiddynamicsusinghydrodynamicSchrodingerequation2023,Timecomplexityanalysisquantumalgorithmslinearrepresentationsnonlinearordinarypartialdifferentialequations2023}. This motivates the study of quantum algorithms for PDE simulation.

Quantum algorithms for PDE simulation can be broadly categorized into two main strategies.
The first approach discretizes both space and time and reduces the PDE to linear systems, which are then solved using quantum linear system solver algorithms, such as Harrow--Hassidim--Lloyd (HHL) algorithms~\cite{QuantumAlgorithmLinearSystemsEquations2009}.
The second strategy discretizes spatial variables to reduce PDEs to ordinary differential equations (ODEs), and then simulates their time evolution within a Hamiltonian simulation framework.
Representative approaches in this class include Schr\"{o}dingerization~\cite{QuantumSimulationPartialDifferentialEquationsSchrodingerization2024,QuantumsimulationpartialdifferentialequationsApplicationsdetailedanalysis2023} and linear combination of Hamiltonian simulation (LCHS)~\cite{LinearcombinationHamiltoniansimulationnonunitarydynamicsoptimalstatepreparationcost2023}.
In addition to these general frameworks, structure exploiting quantum algorithms have also been developed for specific equations, such as the advection equation~\cite{Hamiltoniansimulationhyperbolicpartialdifferentialequationsscalablequantumcircuits2024}, the wave equation~\cite{Quantumalgorithmsimulatingwaveequation2019}, coupled oscillator systems~\cite{ExponentialQuantumSpeedupSimulatingCoupledClassicalOscillators2023}, and the elastic wave equations~\cite{Quantumwavesimulationsourceslossfunctions2025,quantumcomputingconcept1Delasticwavesimulationexponentialspeedup2024}.

Gate complexity estimates have been reported for several PDEs, including the advection and wave equations~\cite{Hamiltoniansimulationhyperbolicpartialdifferentialequationsscalablequantumcircuits2024}, the Black--Scholes equation~\cite{QuantumCircuitsBlackScholesequationsSchrodingerisation2025a}, and Maxwell's equations with time dependent source terms~\cite{SchrodingerizationbasedQuantumCircuitsMaxwellsEquationtimedependentsourceterms2024a}.
For the elastic wave equation, prior studies, including Schr\"odingerization-based methods and direct Schr\"odinger mapping methods, have established practical formulations for Hamiltonian simulation~\cite{QuantumsimulationelasticwaveequationsSchrodingerisation2025a,Quantumwavesimulationsourceslossfunctions2025,quantumcomputingconcept1Delasticwavesimulationexponentialspeedup2024}.
However, their complexity analyses typically assume efficient black box Hamiltonian-access oracles for implementing $e^{-i\mathcal{H}t}$, which makes it difficult to obtain explicit gate complexity estimate for explicit circuits.
Thus, explicit gate complexity estimates for the three-dimensional elastic wave equation remain limited.

In this work, we present an explicit quantum circuit construction for Hamiltonian simulation of the first-order velocity--stress formulation of the three-dimensional elastic wave equation in homogeneous isotropic media, together with constant-sensitive CNOT cost estimates for both first-order and second-order Trotter schemes.
The three-dimensional elastic wave system is vector valued, with a nine component state at each spatial point.
Our implementation separates the state register (velocity and stress components) from the spatial register (grid indices).
On the spatial register, we implement the evolution operator generated by difference operators following Sato et al.~\cite{Hamiltoniansimulationhyperbolicpartialdifferentialequationsscalablequantumcircuits2024}, while operators on the component space are constructed through an eigendecomposition based procedure tailored to isotropic media.
We also report numerical experiments to validate the proposed simulation framework.

The remainder of this paper is organized as follows.
Section \ref{sec:preliminaries} introduces the mapping from the elastic wave equation to the Schr\"{o}dinger equation.
Section \ref{sec:quantum-circuit-implementation} presents the explicit circuit construction for the mapped Schr\"{o}dinger dynamics and analyzes gate complexity.
Section \ref{sec:numerical-experiments} reports numerical experiments validating the proposed method.
Finally, Section \ref{sec:conclusion} concludes the paper with limitations and future directions.

\section{Preliminaries}\label{sec:preliminaries}
In this section, we first introduce the elastic wave equation and rewrite it as a first-order system in time.
We then discretize the resulting system by a finite difference method and reformulate it into the Schr\"{o}dinger form following B\"osch et al.~\cite{Quantumwavesimulationsourceslossfunctions2025}.

The elastic wave equation governs the time evolution of displacement and stress in elastic media.
Let $\Omega \subset \mathbb{R}^3$ be a bounded domain with boundary $\partial \Omega$ and let time $T>0$.
For $(\bm{x},t)\in \Omega\times[0,T]$, let $\bm{u}(\bm{x},t)$ and $\bm{\sigma}(\bm{x},t)$ denote the displacement field and Cauchy stress tensor, respectively.
The equation of motion is
\begin{align}\label{eq:ew-orig}
	\rho \frac{\partial^2 \bm{u}(\bm{x},t)}{\partial t^2} &= \nabla \cdot \bm{\sigma}(\bm{x}, t),
\end{align}
where $\rho$ is the material density, and the constitutive relation gives the relation:
\begin{align}
    \bm{\sigma}(\bm{x}, t) = S_{\mathrm{comp}} \nabla \bm{u}(\bm{x},t),
\end{align}
with the compliance tensor $S_{\mathrm{comp}}$.

To rewrite the elastic wave equation in first-order form, we introduce the velocity field and Voigt form.
Define the velocity field as $\bm{v}(\bm{x},t):=\partial_{t}\bm{u}(\bm{x},t)\in \mathbb{R}^3$, and represent the Cauchy stress tensor in Voigt form by
\begin{align}
	\mathrm{vec}[\bm{\sigma}](\bm{x},t):=
	[\sigma_{xx};\sigma_{yy};\sigma_{zz};\sigma_{xy};\sigma_{xz};\sigma_{yz}](\bm{x},t)\in\mathbb{R}^6.
\end{align}
Accordingly, we define
\begin{align}
	\bm{w}(\bm{x},t) := [\bm{v};\mathrm{vec}[\bm{\sigma}]]^{\mathrm{T}}(\bm{x},t)\in \mathbb{R}^9. \label{eq:w_vec}
\end{align}
Then the initial boundary value problem is written as
\begin{align}\label{eq:ew-base}
	\dv{t}\bm{w}(\bm{x},t) &= \mathcal{B}^{-1}\mathcal{A}\bm{w}(\bm{x},t), \quad (\bm{x},t)\in\Omega\times[0,T],\\
	\bm{w}(\bm{x},0) &= \bm{w}_0(\bm{x}),\\
	\bm{w}(\bm{x},t) &= \bm{0}, \quad \bm{x}\in\partial\Omega.
\end{align}
Here, $\bm{w}_0(\bm{x})$ is an initial state.
The operators $\mathcal{A}$ and $\mathcal{B}$ are defined by
\begin{align}
	\mathcal{A}:=
	\begin{bmatrix}
		\mathcal{O}_{3\times3} & \mathcal{D} \\
		\mathcal{D}^{\mathrm T} & \mathcal{O}_{6\times6}
	\end{bmatrix},\quad
	\mathcal{D}:=
	\begin{bmatrix}
		\partial_x & 0 & 0 & \partial_y & \partial_z & 0\\
		0 & \partial_y & 0 & \partial_x & 0 & \partial_z\\
		0 & 0 & \partial_z & 0 & \partial_x & \partial_y
	\end{bmatrix}, \label{eq:A-D}
\end{align}
and
\begin{align}
	\mathcal{B}:=
	\begin{bmatrix}
		\rho I_{3\times3} & \mathcal{O}_{3\times6}\\
		\mathcal{O}_{6\times3} & \mathcal{S}_{\mathrm{comp}}
	\end{bmatrix}. \label{eq:B-continuous}
\end{align}
with $\partial_k$ ($k=x,y,z$) denoting the spatial derivative, and $\mathcal{O}_{m\times n}$ denoting the $m\times n$ zero operator matrix.
Note that the boundary condition imposed in Eq.~\eqref{eq:ew-base} is introduced for the auxiliary first-order velocity--stress formulation considered in this paper and does not coincide with the boundary condition of the original second-order displacement equation.
This simplification is adopted to keep the discussion focused on the circuit implementation of the core dynamical part of the algorithm.
In the semidiscrete setting, boundary conditions are tied to the concrete realization of the difference operators. In this direction, Sato \textit{et al.} discussed several such realizations for hyperbolic PDEs, including Dirichlet, Neumann, and periodic treatments~\cite{Hamiltoniansimulationhyperbolicpartialdifferentialequationsscalablequantumcircuits2024}.
Accordingly, the scope of the present work is limited to the analysis and implementation of the semidiscrete first-order system under this specific boundary treatment. The relation to physically natural boundary conditions for the original elastic wave equation, as well as extensions to other boundary realizations, will be considered in future works.

For the circuit construction in this paper, we assume homogeneous material parameters; the mass density $\rho$ and the compliance matrix $\mathcal{S}_{\mathrm{comp}}$ are independent of the spatial coordinate $\bm{x}$ and time $t$.
This homogeneity assumption simplifies the operator structure used in the following discretization and decomposition.

To construct a quantum representation, we discretize the first-order system on a uniform grid.
For this finite difference method, we restrict the computational domain to the cubic domain $\Omega=[0,L]^3\subset\mathbb{R}^3$, and divide each spatial axis into $2^n$ grid intervals.
Let $x_j:=jh$ for $j\in \{0,\ldots,2^n-1\}$ with $h:=L/(2^n-1)$ (and similarly for $y,z$).
For quantum encoding, we embed the state vector $\bm{w}(\bm{x},t)$ in Eq.~\eqref{eq:w_vec} into a 16-dimensional vector (4 qubits) by zero padding:
\begin{align}
	\bm{w}(\bm{x},t):=[\bm{v};\mathrm{vec}[\bm{\sigma}];\bm{0}_{1\times 7}]^{\mathrm T}(\bm{x},t)\in\mathbb{R}^{16}.
\end{align}
The resulting (unnormalized) quantum state is
\begin{align}
\ket*{\bm{w}_h(t)}
:=\sum_{k=0}^{15}\sum_{j_x,j_y,j_z=0}^{2^n-1}
w_{k,j_x,j_y,j_z}(t)\ket*{k}\ket*{j_x}\ket*{j_y}\ket*{j_z},
\end{align}
where $w_{k,j_x,j_y,j_z}(t)$ is the $k$-th component of $w_{j_x,j_y,j_z}(t):=\bm{w}(x_{j_x},y_{j_y},z_{j_z},t)\in\mathbb{R}^{16}$.
The sets $\{\ket*{k}\}_{k=0}^{15}$ and
$\{\ket*{j_x}\ket*{j_y}\ket*{j_z}\}_{j_x,j_y,j_z=0}^{2^n-1}$ form computational bases, which we refer to as the state register and the spatial register, respectively.

As the fundamental building block of the spatial discretization, we define a one-dimensional central difference operator on a uniform grid.
Let $q=(q_0,\dots,q_{N-1})^{\mathsf T}\in\mathbb{R}^N,\,N:=2^n$ be an arbitrary grid vector, and define $D_{\mathrm{cell}}\in\mathbb{R}^{N\times N}$ by
\begin{align}
	(D_{\mathrm{cell}}q)_j:=\frac{q_{j+1}-q_{j-1}}{2h}, 
\end{align}
for $j=0,\dots,N-1$ and $q_{-1}=q_N=0$.
Under the ghost-point convention adopted here, $D_{\mathrel{cell}}$ is anti-Hermitian and admits a matrix product operator (MPO) representation with bond dimension $2$~\cite{Tensornetworkreducedordermodelswallboundedflows2023}.
For the present first derivative operator, the MPO form is written as
\begin{align}
	D_{\mathrm{cell}}=\frac{1}{2h}\sum_{k=1}^{n}S_k^{\mathrm{cell}},
\end{align}
where
\begin{align}
	S_k^{\mathrm{cell}}
	&:=I^{\otimes(n-k)}\otimes \sigma_{01}\otimes \sigma_{10}^{\otimes(k-1)} \notag\\
	&\qquad -I^{\otimes(n-k)}\otimes \sigma_{10}\otimes \sigma_{01}^{\otimes(k-1)}.
\end{align}
with
\begin{align}
	\sigma_{01}:=
	\begin{bmatrix}
	0 & 1\\
	0 & 0
	\end{bmatrix},
	\qquad
	\sigma_{10}:=
	\begin{bmatrix}
	0 & 0\\
	1 & 0
	\end{bmatrix}.
\end{align}
This is a specific instance of a more general fact: local finite difference operators admit low bond dimension MPO representations.

We now lift the one-dimensional core operator to three dimensions in an axiswise manner.
For $\alpha\in\{1,2,3\}$ (corresponding to $x,y,z$), define
\begin{align}
	S_k^{(\alpha)}
	&:=I^{\otimes n(\alpha-1)}\otimes S_k^{\mathrm{cell}}\otimes I^{\otimes n(3-\alpha)},\\
	D^{(\alpha)}
	&:=\frac{1}{2h}\sum_{k=1}^{n}S_k^{(\alpha)}.
\end{align}
By construction, $D^{(\alpha)}$ acts only along axis $\alpha$.
For example for $\alpha=1$,
\begin{align}
	(D^{(1)}\bm{w})_{j_x,j_y,j_z}
	=
	\frac{\bm{w}_{j_x+1,j_y,j_z}-\bm{w}_{j_x-1,j_y,j_z}}{2h}.
\end{align}
The same ghost-point convention is applied on each axis.

Using $D^{(\alpha)}$ and $S_k^{(\alpha)}$, we define the discretized operators $A$ and $B$, completing the spatial discretization and leading to a finite dimensional ODE system.
Consistent with the 16-dimensional embedding of $\bm{w}(\bm{x},t)$, $\mathcal{A}$ and $\mathcal{B}$ are represented on the same dimensional state space.
The discrete counterpart of $\mathcal{A}$ is
\begin{align}\label{eq:A}
	A
	:=\sum_{\alpha=1}^{3}A^{(\alpha)}\otimes D^{(\alpha)}
	=\sum_{\alpha=1}^{3}\sum_{k=1}^{n}A^{(\alpha)}\otimes \frac{1}{2h}S_k^{(\alpha)},
\end{align}
where
\begin{align}
	A^{(\alpha)}:=
	\begin{bmatrix}
	O_{3\times3} & C^{(\alpha)} & O_{3\times7}\\
	(C^{(\alpha)})^{\mathsf T} & O_{6\times6} & O_{6\times7}\\
	O_{7\times3} & O_{7\times6} & O_{7\times7}
	\end{bmatrix},\label{eq:A-definition}
\end{align}
with
\begin{align}
C^{(1)}&:=
\begin{pmatrix}
1&0&0&0&0&0\\
0&0&0&1&0&0\\
0&0&0&0&1&0
\end{pmatrix},\\
C^{(2)}&:=
\begin{pmatrix}
0&0&0&1&0&0\\
0&1&0&0&0&0\\
0&0&0&0&0&1
\end{pmatrix},\\
C^{(3)}&:=
\begin{pmatrix}
0&0&0&0&1&0\\
0&0&0&0&0&1\\
0&0&1&0&0&0
\end{pmatrix}.\label{eq:c-matrix}
\end{align}
By construction, $A$ is anti-Hermitian.
Since $\mathcal{B}$ is position independent, its discrete counterpart is
\begin{align}\label{eq:B}
B
:=B_{\mathrm{cell}}\otimes I^{\otimes 3n},\qquad
B_{\mathrm{cell}}:=
\begin{bmatrix}
\rho I_{3\times3} & O_{3\times6} & O_{3\times7}\\
O_{6\times3} & S_{\mathrm{comp}} & O_{6\times7}\\
O_{7\times3} & O_{7\times6} & I_{7\times7}
\end{bmatrix}.
\end{align}
For physically stable materials, where $\rho>0$ and $S_{\mathrm{comp}}$, we have $B_{\mathrm{cell}}>0$, and thus $B>0$.
Therefore, the resulting finite dimensional ODE system is
\begin{align}
\dv{t}\ket*{\bm{w}_h(t)}=B^{-1}A\ket*{\bm{w}_h(t)}.\label{eq:discretize-ew}
\end{align}

The finite dimensional dynamics is now cast into the Schr\"{o}dinger form.
Starting from Eq.~\eqref{eq:discretize-ew}, define the transformed (generally unnormalized) state
\begin{align}
	\ket*{\tilde{\bm{u}}_h(t)}:=B^{1/2}\ket*{\bm{w}_h(t)}.
\end{align}
Multiplying Eq.~\eqref{eq:discretize-ew} from the left by $iB^{-1/2}$ yields
\begin{align}\label{eq:discrize-ew-schr}
	i\dv{t}\ket*{\tilde{\bm{u}}_h(t)}
	=
	\mathcal{H}\ket*{\tilde{\bm{u}}_h(t)},
\end{align}
where
\begin{align}\label{eq:H-aba}
	\mathcal{H}:=iB^{-1/2}AB^{-1/2}.
\end{align}
Since $A$ is anti-Hermitian and $B$ is positive definite, $\mathcal{H}$ is Hermitian.
The normalized state is defined by
\begin{align}
	\ket*{\bm{u}_h(t)}
	:=
	\frac{\ket*{\tilde{\bm{u}}_h(t)}}{\norm*{\ket*{\tilde{\bm{u}}_h(t)}}},
\end{align}
with $\norm*{\cdot}$ denoting $l^2$ norm.
\section{Quantum Circuit Implementation}\label{sec:quantum-circuit-implementation}
To perform the time evolution of Eq. (\ref{eq:discrize-ew-schr}), we construct an explicit quantum circuit based on the Trotter decomposition framework.
While advanced frameworks such as quantum signal processing (QSP) and quantum singular value transformation (QSVT) provide stronger asymptotic guarantees~\cite{HamiltonianSimulationQubitization2019,OptimalHamiltonianSimulationQuantumSignalProcessing2017,Quantumsingularvaluetransformationexponentialimprovementsquantummatrixarithmetics2019}, we prioritize circuit realization and constant sensitive gate counting in the present setting, and therefore adopt Trotter formulas as the Hamiltonian-simulation subroutine.

\subsection{Quantum Circuit for time evolution by Trotter decomposition}
This subsection focuses on one step ($\tau$ scale) simulation of the time evolution operator $\exp(-i\mathcal{H}\tau)$.
We construct first-order and second-order Trotter approximation and implement the corresponding quantum circuits.

To obtain a termwise decomposition of $\mathcal{H}$, we apply eigendecomposition to each block associated with one axis.
Substituting Eqs.~\eqref{eq:A} and \eqref{eq:B} into Eq.~\eqref{eq:H-aba}, we obtain
\begin{align}
	\mathcal{H}
	= \sum_{\alpha=1}^{3}\sum_{k=1}^{n}B_{\mathrm{cell}}^{-\frac{1}{2}}A^{(\alpha)} B_{\mathrm{cell}}^{-\frac{1}{2}}\otimes \frac{i}{2h}S_k^{(\alpha)} \label{eq:H-axis-decomposition}.
\end{align}
For each $\alpha \in \{1,2,3\}$, $B_{\mathrm{cell}}^{-\frac{1}{2}}A^{(\alpha)} B_{\mathrm{cell}}^{-\frac{1}{2}}$ is Hermitian on the 4-qubit state register, and thus admits the eigendecomposition
\begin{align}
	B_{\text{cell}}^{-\frac{1}{2}}A^{(\alpha)} B_{\text{cell}}^{-\frac{1}{2}}
	= \sum_{j=0}^{15} \lambda_j^{(\alpha)}\ketbra*{\phi_j^{(\alpha)}}
	\label{eq:eigendecomposition}
\end{align}
where $\{\ket*{\phi_j^{(\alpha)}}\}_{j=0}^{15}$ is an orthonormal set of eigenstates, and $\lambda_j^{(\alpha)}\in \mathbb{R}$ are the corresponding eigenvalue.
Defining $P_j^{(\alpha)}:=\ketbra*{\phi_j^{(\alpha)}}$, we rewrite $\mathcal{H}$ as
\begin{align}
\mathcal{H}
&= \sum_{\alpha=1}^{3}\sum_{j=0}^{15}\sum_{k=1}^{n}
H_{jk}^{(\alpha)},\quad H_{jk}^{(\alpha)}
:= P_j^{(\alpha)}\otimes \frac{i\lambda_j^{(\alpha)}}{2h}S_k^{(\alpha)}.
\label{eq:H-decomposition-with-48n-components}
\end{align}

The one step first-order Trotter operator is defined by
\begin{align}
    U_1(\tau)
    :=\prod_{\alpha =1}^{3}\prod_{j=0}^{15}\prod_{k=1}^{n} e^{-iH_{jk}^{(\alpha)}\tau}\label{eq:U1}
\end{align}
where the product order is fixed as
\begin{align}
	\prod_{\alpha=1}^{3}\prod_{j=0}^{15}\prod_{k=1}^{n}F_{jk}^{(\alpha)}
	:=F_{15,n}^{(3)}F_{15,n-1}^{(3)}\cdots F_{0,1}^{(1)},
\end{align}
for an operator family $\{ F_{jk}^{(\alpha)} \}_{j, k, \alpha}$.
Applying the first-order Trotter formula~\cite{UniversalQuantumSimulators1996}, for sufficiently small $\tau$ we have
\begin{align}
\left\|e^{-i\mathcal{H}\tau}-U_1(\tau)\right\| = O(\tau^2),
\end{align}
where $\norm*{\cdot}$ denotes the operator norm.

Since $P_j^{(\alpha)}$ is a projector $(P_j^{(\alpha)})^2=P_j^{(\alpha)}$, each single term exponential is rewritten as
\begin{align}\label{eq:hermite-term-exponential-map-controlled}
    &\exp\left(-iH_{jk}^{(\alpha)}\tau\right)\notag\\
    & =\exp\left(-iP_j^{(\alpha)}\otimes \frac{\lambda_j^{(\alpha)}\tau}{2h} S_k^{(\alpha)}\right)\notag\\
    & =P_j^{(\alpha)}\otimes W_{jk}^{(\alpha)}(\tau) + \left(I^{\otimes 4}-P_j^{(\alpha)}\right)\otimes I^{\otimes 3n},
\end{align}
where an operator $W_{jk}^{(\alpha)}(\tau)$ on the spatial register is denoted by
\begin{align}
    W_{jk}^{(\alpha)}(\tau)=\exp\!\left(-i\theta_{jk}^{(\alpha)}(\tau)S_k^{(\alpha)}\right),\quad
\theta_{jk}^{(\alpha)}(\tau):=\frac{\lambda_j^{(\alpha)}\tau}{2h}.
\end{align}
This gives a projector controlled implementation form for each term.
Following Ref.~\cite{Hamiltoniansimulationhyperbolicpartialdifferentialequationsscalablequantumcircuits2024}, the operator $W_{jk}^{(\alpha)}(\tau)$ is written as
\begin{align}
	&W_{jk}^{(\alpha)}(\tau) =  I^{\otimes (n(\alpha-1)+n-k)}\otimes\\
    &\qquad\qquad\left( G_k \crz_{k}^{1,\ldots,k-1}\left(-\theta_{jk}^{(\alpha)}(\tau)\right) G_{k}^{\dagger}\right) \otimes I^{\otimes n(3-\alpha)},\label{eq:W-gamma-alpha}
\end{align}
where we define
\begin{align}
	&\crz_{k}^{1,\ldots,k-1}(\theta)\notag\\ 
    &:= \exp(iZ\theta)\otimes \ketbra*{1}^{\otimes (k-1)} + I \otimes (I^{\otimes (k-1)} - \ketbra*{1}^{\otimes (k-1)})
\end{align}
Here, the superscript \(1,\ldots,k-1\) indicates the control qubits, while the subscript \(k\) denotes the target qubit on which the \(R_z\) rotation acts.
and
\begin{align}
	G_k := \left(\prod_{l=1}^{k-1}\cnot_{l}^{k}\right)\mathrm{S}_k \mathrm{H}_k.
\end{align}
Here, we denote the Hadamard gate and S gate as
\begin{align}
	\mathrm{H}_k =\frac{1}{\sqrt{2}} 
	\begin{bmatrix}
		1&1\\
		1&-1
	\end{bmatrix}
	\otimes I^{\otimes k-1}
	,\quad
	\mathrm{S}_k = 
	\begin{bmatrix}
		1&0\\
		0&i
	\end{bmatrix}
	\otimes I^{\otimes k-1}.
\end{align}

Since direct control by the projector $P_j^{(\alpha)}$ is not directly implementable in the computational basis, we convert it into computational basis form by a basis change on the 4-qubit state register.
Let $\left\{\ket*{j}\right\}_{j=0}^{15}$ be the computational basis, and define
\begin{align}\label{eq:changeunitary}
	V^{(\alpha)} := \sum_{j=0}^{15}\ketbra*{\phi_j^{(\alpha)}}{j}.
\end{align}
Then, $V^{(\alpha)}\ket*{j}=\ket*{\phi_j^{(\alpha)}}$, and hence
\begin{align}
    P_j^{(\alpha)} = V^{(\alpha)}\ketbra*{j}V^{(\alpha)\dagger}.
\end{align}
Using this relation, define
\begin{widetext}
\begin{align}
\mathcal{W}_{j}^{(\alpha)}(\tau)
:=
\left(V^{(\alpha)} \otimes I^{\otimes 3n}\right)
\left(\prod_{k=1}^{n}
\left(
\ketbra*{j}\otimes W_{jk}^{(\alpha)}(\tau)
+\left(I^{\otimes 4}-\ketbra*{j}\right)\otimes I^{\otimes 3n}
\right)\right)
\left(V^{(\alpha)\dagger} \otimes I^{\otimes 3n}\right).\label{eq:basis-change-by-V}
\end{align}
\end{widetext}
Then the first-order operator is written as
\begin{align}
	U_1(\tau)=
	\prod_{\alpha =1}^{3}\prod_{j=0}^{15}\mathcal{W}_{j}^{(\alpha)}(\tau).
	\label{eq:U1-circuit}
\end{align}
Here, $W_{jk}^{(\alpha)}(\tau)$ is available in explicit circuit form in Ref.~\cite{Hamiltoniansimulationhyperbolicpartialdifferentialequationsscalablequantumcircuits2024}, and $V^{(\alpha)}$ acts on the 4-qubit state register, so its synthesis into single gate and CNOT gates is standard.
Therefore, Eq.~\eqref{eq:basis-change-by-V} gives an explicit gate-level implementation of $U_1(\tau)$, and the corresponding circuit structure is shown in Fig.~\ref{fig:single-circuit-j1-j2-j15}.
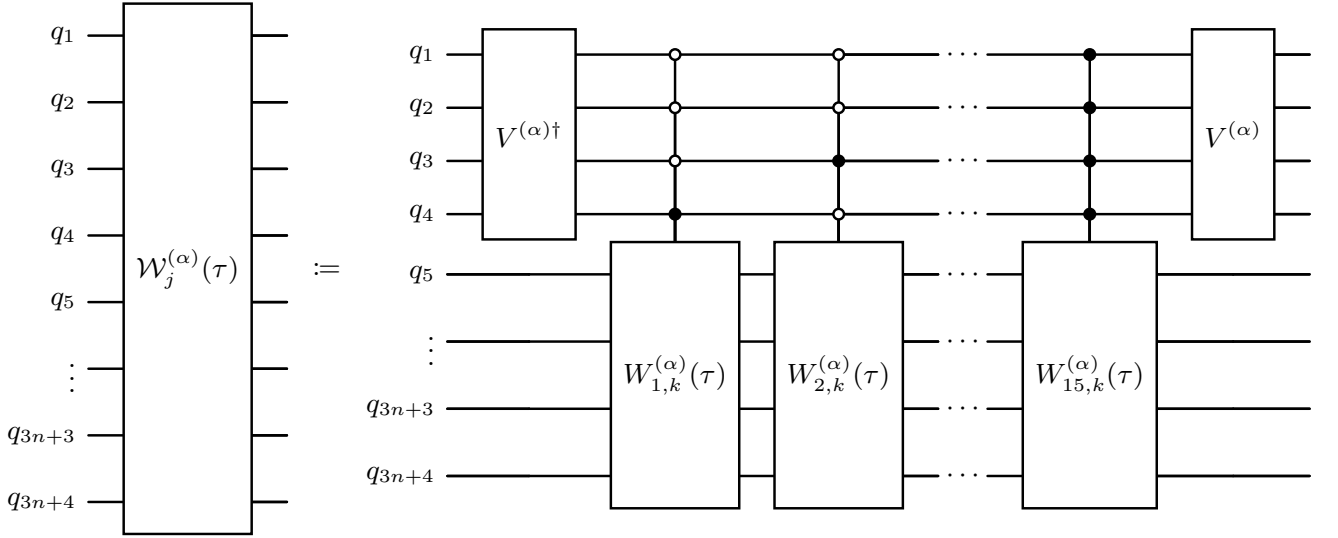
\begin{figure*}[t]
\centering
\resizebox{0.98\textwidth}{!}{$%
  \vcenter{\hbox{%
    \begin{quantikz}[row sep=0.03cm, column sep=0.40cm]
      \lstick{$q_1$}      & \gate[wires=8]{\mathcal{W}_{j}^{(\alpha)}(\tau)} & \qw \\
      \lstick{$q_2$}      & \ghost{\mathcal{W}_{j}^{(\alpha)}(\tau)}          & \qw \\
      \lstick{$q_3$}      & \ghost{\mathcal{W}_{j}^{(\alpha)}(\tau)}          & \qw \\
      \lstick{$q_4$}      & \ghost{\mathcal{W}_{j}^{(\alpha)}(\tau)}          & \qw \\
      \lstick{$q_5$}      & \ghost{\mathcal{W}_{j}^{(\alpha)}(\tau)}          & \qw \\
      \lstick{$\vdots$}   & \ghost{\mathcal{W}_{j}^{(\alpha)}(\tau)}          & \qw \\
      \lstick{$q_{3n+3}$} & \ghost{\mathcal{W}_{j}^{(\alpha)}(\tau)}          & \qw \\
      \lstick{$q_{3n+4}$} & \ghost{\mathcal{W}_{j}^{(\alpha)}(\tau)}          & \qw
    \end{quantikz}%
  }}%
  \;\mathrel{\coloneqq}\;
  \vcenter{\hbox{%
    \begin{quantikz}[row sep=0.03cm, column sep=0.40cm]
      \lstick{$q_1$} & \gate[wires=4]{V^{(\alpha)\dagger}}
        & \octrl{4} & \octrl{4} & \qw \push{\;\cdots\;}
        & \ctrl{4}
        & \gate[wires=4]{V^{(\alpha)}} & \qw \\
      \lstick{$q_2$} & \ghost{V^{(\alpha)\dagger}}
        & \octrl{3} & \octrl{3} & \qw \push{\;\cdots\;}
        & \ctrl{3}
        & \ghost{V^{(\alpha)}} & \qw \\
      \lstick{$q_3$} & \ghost{V^{(\alpha)\dagger}}
        & \octrl{2} & \ctrl{2}  & \qw \push{\;\cdots\;}
        & \ctrl{2}
        & \ghost{V^{(\alpha)}} & \qw \\
      \lstick{$q_4$} & \ghost{V^{(\alpha)\dagger}}
        & \ctrl{1}  & \octrl{1} & \qw \push{\;\cdots\;}
        & \ctrl{1}
        & \ghost{V^{(\alpha)}} & \qw \\
      \lstick{$q_5$} & \qw
        & \gate[wires=4]{W^{(\alpha)}_{1,k}(\tau)}
        & \gate[wires=4]{W^{(\alpha)}_{2,k}(\tau)}
        & \qw \push{\;\cdots\;}
        & \gate[wires=4]{W^{(\alpha)}_{15,k}(\tau)}
        & \qw & \qw \\
      \lstick{$\vdots$} & \qw
        & \ghost{W^{(\alpha)}_{1,k}(\tau)}
        & \ghost{W^{(\alpha)}_{2,k}(\tau)}
        & \qw \push{\;\cdots\;}
        & \ghost{W^{(\alpha)}_{15,k}(\tau)}
        & \qw & \qw \\
      \lstick{$q_{3n+3}$} & \qw
        & \ghost{W^{(\alpha)}_{1,k}(\tau)}
        & \ghost{W^{(\alpha)}_{2,k}(\tau)}
        & \qw \push{\;\cdots\;}
        & \ghost{W^{(\alpha)}_{15,k}(\tau)}
        & \qw & \qw \\
      \lstick{$q_{3n+4}$} & \qw
        & \ghost{W^{(\alpha)}_{1,k}(\tau)}
        & \ghost{W^{(\alpha)}_{2,k}(\tau)}
        & \qw \push{\;\cdots\;}
        & \ghost{W^{(\alpha)}_{15,k}(\tau)}
        & \qw & \qw
    \end{quantikz}%
  }}%
$}%
\caption{Circuit structure of the block \(\mathcal{W}_{j}^{(\alpha)}(\tau)\) in Eq.~\eqref{eq:basis-change-by-V}. The basis change operators \(V^{(\alpha)}\) and \(V^{(\alpha)\dagger}\) act on the 4-qubit state register, and each controlled block \(W_{jk}^{(\alpha)}(\tau)\) acts on the spatial register.}
\label{fig:single-circuit-j1-j2-j15}
\end{figure*}

Using the same construction as in the first-order case, we also define the second-order (symmetric) Trotter operator by
\begin{align}
    U_2(\tau)
    :=\prod_{\alpha =1}^{3}\prod_{j=0}^{15}\prod_{k=1}^{n} e^{-iH_{jk}^{(\alpha)}\tau/2}\prod_{\alpha =3}^{1}\prod_{j=15}^{0}\prod_{k=n}^{1} e^{-iH_{jk}^{(\alpha)}\tau/2}.\label{eq:U2}
\end{align}
In terms of the circuit blocks introduced above, this is equivalently written as
\begin{align}
	U_2(\tau)=
	\prod_{\alpha=1}^{3}\prod_{j=0}^{15}\mathcal{W}_{j}^{(\alpha)}(\tau/2)\prod_{\alpha'=3}^{1}\prod_{j'=15}^{0}\mathcal{W}_{j'}^{(\alpha')\,\mathrm{rev}}(\tau/2).\label{eq:U2-implement}
\end{align}
Here, $\mathcal{W}_{j'}^{(\alpha')\,\mathrm{rev}}(\tau/2)$ denotes the same block as $\mathcal{W}_{j}^{(\alpha)}(\tau/2)$ with the product order reversed from $k=1,\ldots,n$ to $k=n,\ldots,1$.
In the small-$\tau$ regime, the approximation error satisfies
\begin{align}
	\left\|e^{-i\mathcal{H}\tau}-U_2(\tau)\right\|=O(\tau^3),
\end{align}
where $\norm*{\cdot}$ is the operator norm~\cite{EfficientquantumalgorithmssimulatingsparseHamiltonians2007}.
\subsection{Complexity and Error Analysis}
In this subsection, we derive one step Trotter error bounds and convert them into global CNOT complexity bounds.

We first analyze one step approximation errors as functions of the step size $\tau$.
Then, for a total simulation time $T$ and target accuracy $\epsilon$, we choose the number of Trotter steps $m$ and set $\tau = T/m$.
\subsubsection{The first order Trotter formula}
We analyze the first-order formula: one step error bound, one step implementation cost, and the resulting global complexity.
We start from the standard commutator scaling bound~\cite{TheoryTrotterErrorCommutatorScaling2021a}.
\begin{proposition}[Commutator scaling bound for the first-order Trotter formula: Proposition 9 of Ref.~\cite{TheoryTrotterErrorCommutatorScaling2021a}]\label{prop:frist-order-trotter-error-original}
Let
\begin{align}
	\mathcal{H}=\sum_{r=1}^{R}H_r,
\end{align}
where each $H_r$ is Hermitian.
Define the first-order product formula by
\begin{align}
	U_1(\tau):=\prod_{r=1}^{R}e^{-iH_{r}\tau}.
\end{align}
Then the first-order Trotter error satisfies
\begin{align}
    \norm*{U_1(\tau)- e^{-i\mathcal{H}\tau}} \leq \frac{\tau^2}{2}\sum_{r_1=1}^{R}\norm\Bigg{\left[\sum_{r_2=r_1+1}^{R}H_{r_2},H_{r_1}\right]}.\label{eq:commutator-scaling}
\end{align}
In particular, by triangle inequality and $\norm*{[A,B]}\leq 2\norm*{A}\norm*{B}$, we obtain
\begin{align}
    \norm*{U_1(\tau)- e^{-i\mathcal{H}\tau}}
    &\leq \frac{\tau^2}{2}\left(\sum_{r=1}^{R}\norm*{H_{r}}\right)^2.\label{eq:norm-scaling}
\end{align}
\end{proposition}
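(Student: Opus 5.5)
The plan is to prove the commutator bound \eqref{eq:commutator-scaling} through a variation-of-parameters argument for $U_1(t)$ as a function of $t$, and then obtain \eqref{eq:norm-scaling} as a corollary.

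\textbf{Setting up the error equation.} Write the product so that $H_1$ acts first, $U_1(t)=e^{-iH_Rt}\cdots e^{-iH_1t}$. Differentiating gives
\begin{align}
\frac{d}{dt}U_1(t) = -i\mathcal{H}U_1(t) + \mathcal{R}(t)U_1(t).
\end{align}
The remainder is
\begin{align}
\mathcal{R}(t):=-i\sum_{r=1}^{R}\left(e^{-iH_Rt}\cdots e^{-iH_{r+1}t}H_r e^{iH_{r+1}t}\cdots e^{iH_Rt}-H_r\right).
\end{align}
By Duhamel's formula,
\begin{align}
U_1(\tau)-e^{-i\mathcal{H}\tau}=\int_0^\tau e^{-i\mathcal{H}(\tau-s)}\mathcal{R}(s)U_1(s)\,ds.
\end{align}
All the exponentials are unitary, so
\begin{align}
\norm*{U_1(\tau)-e^{-i\mathcal{H}\tau}}\le\int_0^\tau\norm*{\mathcal{R}(s)}ds.
\end{align}

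\textbf{Bounding the remainder.} For each $r$, the conjugation $e^{-iH_Rs}\cdots e^{-iH_{r+1}s}H_r(\cdots)^\dagger - H_r$ is handled by peeling off one exponential at a time, via a telescoping sum over $q=r+1,\dots,R$. Each single-layer term has the form $e^{-iH_qs}Xe^{iH_qs}-X = -i\int_0^s e^{-iH_qu}[H_q,X]e^{iH_qu}du$, with $X$ a conjugated copy of $H_r$.

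At first order in $s$, the conjugations inside $X$ can be dropped. The clean way to do this is a standard rearrangement: instead of peeling layer by layer, use the identity
\begin{align}
e^{-iKs}H_re^{iKs}-H_r=-i\int_0^s e^{-iKu}[K,H_r]e^{iKu}du
\end{align}
in a nested form, so that one obtains $\norm*{\mathcal{R}(s)}\le s\sum_r\norm*{[\sum_{q>r}H_q,H_r]}$. This step is the main obstacle. The naive layer-by-layer peeling yields $\sum_{q>r}\norm*{[H_q,H_r]}$ rather than the norm of the summed commutator, and the conjugations by the intermediate exponentials do not commute with the sum.

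To get the summed commutator, I would follow Childs et al.\ (Ref.~\cite{TheoryTrotterErrorCommutatorScaling2021a}) and reorganize the derivative differently. Differentiate $U_1(t)$ so that the error generator is expressed through the partial products $\prod_{q>r}e^{-iH_qt}$ acting on $H_r$. Then apply the fact that, to leading order, $\frac{d}{ds}\big(e^{-iH_Rs}\cdots e^{-iH_{r+1}s}H_r(\cdots)^\dagger\big)\big|$ evaluated under unitary conjugation has norm at most $\norm*{[\sum_{q>r}H_q,H_r]}$. Since the statement is quoted verbatim as Proposition~9 of that reference, it is acceptable to invoke their tight bound here rather than re-derive it fully. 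Integrating $\int_0^\tau s\,ds=\tau^2/2$ then gives \eqref{eq:commutator-scaling}.

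\textbf{Corollary.} For \eqref{eq:norm-scaling}, apply the triangle inequality and $\norm*{[A,B]}\le2\norm*{A}\norm*{B}$:
\begin{align}
\sum_{r_1}\norm*{\Big[\sum_{r_2>r_1}H_{r_2},H_{r_1}\Big]}\le 2\sum_{r_1<r_2}\norm*{H_{r_1}}\norm*{H_{r_2}}\le\Big(\sum_r\norm*{H_r}\Big)^2.
\end{align}
This completes the argument.
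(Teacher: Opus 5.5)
What you actually rely on matches the paper's proof. The paper also proves Eq.~\eqref{eq:commutator-scaling} only by citing Proposition~9 of Ref.~\cite{TheoryTrotterErrorCommutatorScaling2021a}, and it derives Eq.~\eqref{eq:norm-scaling} by exactly your chain: triangle inequality, $\|[A,B]\|\le 2\|A\|\|B\|$, and $2\sum_{r_1<r_2}\|H_{r_1}\|\|H_{r_2}\|\le(\sum_r\|H_r\|)^2$.

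The self-contained route you sketch before falling back on the citation, however, cannot work, and not merely because it is hard. The pointwise bound $\|\mathcal{R}(s)\|\le s\sum_r\|[\sum_{q>r}H_q,H_r]\|$ is false in general. The identity $e^{-iKs}H_re^{iKs}-H_r=-i\int_0^s e^{-iKu}[K,H_r]e^{iKu}\,du$ needs the conjugating unitary to be $e^{-iKs}$ with $K=\sum_{q>r}H_q$. In $\mathcal{R}(s)$ it is instead the product $e^{-iH_Rs}\cdots e^{-iH_{r+1}s}$, which differs from $e^{-iKs}$ at order $s^2$ by terms that $[K,H_r]$ does not control.

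A concrete counterexample: take $R=3$ and Pauli matrices $H_1=Z$, $H_2=X$, $H_3=Z-X$. Then $[H_2+H_3,H_1]=0$ and $\|[H_3,H_2]\|=2$, so your right-hand side is $2s$. A direct computation gives $i\mathcal{R}(s)=a_xX+a_yY+a_zZ$ with $a_x=O(s^4)$, $a_y=2s+O(s^5)$ and $a_z=-2s^2+O(s^4)$. Hence $\|\mathcal{R}(s)\|=2s\sqrt{1+s^2+O(s^4)}>2s$ for small $s>0$; numerically, $\|\mathcal{R}(0.1)\|\approx 0.201$. Statements valid only ``to leading order'' cannot repair this, since the target inequality must hold for every $\tau$.

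The missing idea is to apply Duhamel only to a two-factor product and induct on $R$. Write $U_1(\tau)=U_1'(\tau)e^{-iH_1\tau}$ with $U_1'(\tau):=e^{-iH_R\tau}\cdots e^{-iH_2\tau}$ and $K_1:=\sum_{r\ge2}H_r$. Then
\[
U_1(\tau)-e^{-i\mathcal{H}\tau}=(U_1'(\tau)-e^{-iK_1\tau})e^{-iH_1\tau}+(e^{-iK_1\tau}e^{-iH_1\tau}-e^{-i(K_1+H_1)\tau}).
\]
In the second term the conjugating unitary is the single exponential $e^{-iK_1t}$. Your Duhamel computation and the identity above therefore apply verbatim and bound it by $\frac{\tau^2}{2}\|[\sum_{r_2\ge2}H_{r_2},H_1]\|$, which is where the summed commutator comes from. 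Because $e^{-iH_1\tau}$ is unitary, the first term is bounded by the induction hypothesis for $H_2,\dots,H_R$ and supplies the terms $r_1\ge2$. Summing gives Eq.~\eqref{eq:commutator-scaling} exactly, with no appeal to the reference, and your corollary then goes through unchanged.
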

\begin{proof}
The detailed proof of Eq.~\eqref{eq:commutator-scaling} is given in Ref.~\cite{TheoryTrotterErrorCommutatorScaling2021a}.
Applying the triangle inequality to Eq.~\eqref{eq:commutator-scaling} gives 
\begin{align}
    \norm\Bigg{U_1(\tau)- e^{-i\mathcal{H}\tau}} 
    &\leq \frac{\tau^2}{2}\sum_{r_1=1}^{R}\norm\Bigg{\left[\sum_{r_2=r_1+1}^{R}H_{r_2},H_{r_1}\right]}\notag\\
    &\leq \frac{\tau^2}{2}\sum_{r_1=1}^{R}\sum_{r_2=r_1+1}^{R}\norm*{\left[H_{r_2},H_{r_1}\right]}\notag\\
    &\leq \frac{\tau^2}{2}\sum_{r_1=1}^{R}\sum_{r_2=r_1+1}^{R}2\norm*{H_{r_2}}\norm*{H_{r_1}}\notag\\
    &\leq \frac{\tau^2}{2}\left(\sum_{r=1}^{R}\norm*{H_{r}}\right)^2.
\end{align} 
\end{proof}
We first derive a baseline bound from Eq.~\eqref{eq:norm-scaling}; a sharper error bound based on Eq.~\eqref{eq:commutator-scaling} is given later.
\begin{proposition}[First-order Trotter error bound via norm scaling]\label{thm:1st-trotter-error-norm-scaling}
	Let $\mathcal{H}$ be given by Eq.~\eqref{eq:H-decomposition-with-48n-components}, and let $U_1(\tau)$ be defined in Eq.~\eqref{eq:U1}.
	Then, by Eq.~\eqref{eq:norm-scaling}, the one step approximation is bounded as
    \begin{align}\label{eq:1st-trotter-error-norm-scaling}
		\norm*{U_1(\tau)- e^{-i\mathcal{H}\tau}}
		\leq
		\frac{3^4\tau^2}{2h^2}\rho^{-1}\norm*{\mathcal{S}^{-1}_{\mathrm{comp}}}n^2.
	\end{align}
\end{proposition}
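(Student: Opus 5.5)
The plan is to apply the norm-scaling bound Eq.~\eqref{eq:norm-scaling} of Proposition~\ref{prop:frist-order-trotter-error-original} directly to the $48n$ terms $H_{jk}^{(\alpha)}$ of Eq.~\eqref{eq:H-decomposition-with-48n-components}. Everything then reduces to bounding $\sum_{\alpha,j,k}\norm*{H_{jk}^{(\alpha)}}$. Since the target equals $\frac{\tau^2}{2}\bigl(\frac{9n}{h}\sqrt{\rho^{-1}\norm*{\mathcal{S}_{\mathrm{comp}}^{-1}}}\bigr)^2$, it suffices to show
\begin{align}
\sum_{\alpha=1}^{3}\sum_{j=0}^{15}\sum_{k=1}^{n}\norm*{H_{jk}^{(\alpha)}}\le \frac{9n}{h}\sqrt{\rho^{-1}\norm*{\mathcal{S}_{\mathrm{comp}}^{-1}}}.
\end{align}

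\textbf{Step 1: factor each term's norm.} Because the norm is multiplicative over tensor products and $\norm*{P_j^{(\alpha)}}=1$, we have $\norm*{H_{jk}^{(\alpha)}}=\frac{|\lambda_j^{(\alpha)}|}{2h}\norm*{S_k^{\mathrm{cell}}}$. I would then show $\norm*{S_k^{\mathrm{cell}}}\le 1$. Write $S_k^{\mathrm{cell}}=X-X^\dagger$ with $X=I\otimes\sigma_{01}\otimes\sigma_{10}^{\otimes(k-1)}$. Here $X$ is a partial isometry mapping a single basis state in each block to another, and $X^\dagger X$ and $XX^\dagger$ are orthogonal projectors onto disjoint computational basis subspaces. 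Hence $S_k^{\mathrm{cell}}$ acts as a signed partial permutation whose nonzero $2\times 2$ blocks have norm $1$. Summing over $k$ then gives
\begin{align}
\sum_{j,k}\norm*{H_{jk}^{(\alpha)}}\le \frac{n}{2h}\sum_{j}|\lambda_j^{(\alpha)}|.
\end{align}

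\textbf{Step 2: bound the eigenvalue sum.} This is the main step. Let $M^{(\alpha)}:=B_{\mathrm{cell}}^{-1/2}A^{(\alpha)}B_{\mathrm{cell}}^{-1/2}$. Using the block-diagonal form of $B_{\mathrm{cell}}$ and the block structure of Eq.~\eqref{eq:A-definition}, $M^{(\alpha)}$ is the off-diagonal block matrix with upper block $K:=\rho^{-1/2}C^{(\alpha)}\mathcal{S}_{\mathrm{comp}}^{-1/2}$, its transpose below, and zeros in the padded $7\times7$ block. Two facts follow:
\begin{itemize}
\item The nonzero eigenvalues of $M^{(\alpha)}$ are $\pm$ the singular values of the $3\times 6$ matrix $K$. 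So there are at most $6$ nonzero eigenvalues.
\item Each nonzero eigenvalue has modulus at most $\norm*{K}\le\rho^{-1/2}\norm*{C^{(\alpha)}}\,\norm*{\mathcal{S}_{\mathrm{comp}}^{-1}}^{1/2}$.
\end{itemize}
Since the rows of $C^{(\alpha)}$ are distinct standard basis vectors, $C^{(\alpha)}$ has orthonormal rows and $\norm*{C^{(\alpha)}}=1$. Therefore
\begin{align}
\sum_j|\lambda_j^{(\alpha)}|\le 6\sqrt{\rho^{-1}\norm*{\mathcal{S}_{\mathrm{comp}}^{-1}}}.
\end{align}
A sharper bound $2\cdot 3\norm*{K}$ coincides with this, so no refinement is needed.

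\textbf{Step 3: combine.} Summing over $\alpha=1,2,3$ gives
\begin{align}
\sum\norm*{H_{jk}^{(\alpha)}}\le 3\cdot\frac{n}{2h}\cdot 6\sqrt{\rho^{-1}\norm*{\mathcal{S}_{\mathrm{comp}}^{-1}}}=\frac{9n}{h}\sqrt{\rho^{-1}\norm*{\mathcal{S}_{\mathrm{comp}}^{-1}}}.
\end{align}
Squaring and multiplying by $\tau^2/2$ yields Eq.~\eqref{eq:1st-trotter-error-norm-scaling}.

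The main obstacle is Step 2, namely the rank and eigenvalue count for $M^{(\alpha)}$. I would handle it via the standard fact that $\begin{bmatrix}0&K\\K^{\mathsf T}&0\end{bmatrix}$ has spectrum $\{\pm s_i(K)\}\cup\{0\}$. A secondary point is checking $\norm*{S_k^{\mathrm{cell}}}\le 1$ carefully rather than using the cruder triangle bound of $2$, since the latter would spoil the constant $3^4$.
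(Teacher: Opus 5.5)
Your proposal is correct and matches the paper's proof: apply Eq.~\eqref{eq:norm-scaling}, use $\norm*{H_{jk}^{(\alpha)}}=|\lambda_j^{(\alpha)}|/(2h)$, and bound $\sum_j|\lambda_j^{(\alpha)}|$, the trace norm of the off-diagonal block matrix, by $6\rho^{-1/2}\norm*{\mathcal{S}_{\mathrm{comp}}^{-1/2}}$, as in Lemma~\ref{lemma:BAB-bounded}. The only differences are minor and give the same constant: you prove $\norm*{S_k^{\mathrm{cell}}}=1$ directly instead of citing it, and you bound the trace norm via rank at most $3$ and $\norm*{K}$ rather than via $\trnorm{C^{(\alpha)}}=3$.
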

\begin{proof}
By Eq.~\eqref{eq:norm-scaling},
\begin{align}
\left\|e^{-i\mathcal{H}\tau}-U_1(\tau)\right\|
\le
\frac{\tau^2}{2}
\left(
\sum_{\alpha,j,k}
\left\|H_{jk}^{(\alpha)}\right\|
\right)^2.
\label{eq:p-main}
\end{align}
For each $(\alpha,j,k)$,
\begin{align}
	\norm*{H_{jk}^{(\alpha)}}
	&=
		\norm*{P_j^{(\alpha)}\otimes \frac{i\lambda_j^{(\alpha)}}{2h}S_k^{(\alpha)}}\notag\\
	&=
		\frac{\abs*{\lambda_j^{(\alpha)}}}{2h}\norm*{P_j^{(\alpha)}}\norm*{S_k^{(\alpha)}}\notag\\
	&=
		\frac{\abs*{\lambda_j^{(\alpha)}}}{2h}
	\label{eq:p-local}
\end{align}
where we used $\norm*{A\otimes B}=\norm*{A}\norm*{B}$, $\norm*{P_j^{(\alpha)}}=1$, and 
\begin{align}
	\norm*{S_k^{(\alpha)}}=1
\end{align}
from Eq.~(34) of Ref.~\cite{Hamiltoniansimulationhyperbolicpartialdifferentialequationsscalablequantumcircuits2024}.
Therefore,
\begin{align}
\sum_{\alpha=1}^{3}\sum_{j=0}^{15}\sum_{k=1}^{n}\left\|H_{jk}^{(\alpha)}\right\|
&=
\frac{1}{2h}
\sum_{\alpha=1}^{3}\sum_{j=0}^{15}\sum_{k=1}^{n}
|\lambda_j^{(\alpha)}| \notag\\
&=
\frac{n}{2h}
\sum_{\alpha=1}^{3}\sum_{j=0}^{15}
|\lambda_j^{(\alpha)}| \notag\\
&=
\frac{n}{2h}
\sum_{\alpha=1}^{3}
\trnorm{B_{\mathrm{cell}}^{-1/2}A^{(\alpha)}B_{\mathrm{cell}}^{-1/2}}.
\label{eq:p-sum}
\end{align}
Applying Lemma~\ref{lemma:BAB-bounded} to Eq.~\eqref{eq:p-sum}, and substituting into Eq.~\eqref{eq:p-main}, we obtain
\begin{align}
\left\|U_1(\tau)-e^{-i\mathcal{H}\tau}\right\|
\le
\frac{3^4\tau^2}{2h^2}\rho^{-1}\left\|\mathcal{S}_{\mathrm{comp}}^{-1}\right\|n^2,
\end{align}
which is Eq.~\eqref{eq:1st-trotter-error-norm-scaling}.
\end{proof}
To derive the gate complexity, we next estimate the one step implementation cost of $U_1(\tau)$.
This cost bound is independent of the norm scaling error estimate above.
\begin{lemma}[One step implementation cost for $U_1$]\label{lemma:one-step-U1}
	Let $U_1(\tau)$ be the first-order Trotter operator defined in Eq.~\eqref{eq:U1}.
	Assume $U_1(\tau)$ is implemented by the circuit in Eq.~\eqref{eq:basis-change-by-V}.
	Then one application of $U_1(\tau)$ uses
	\begin{align}
		3n+4
	\end{align}
	qubits and at most
	\begin{align}
		432n^2+378
	\end{align}
	CNOT gates.
\end{lemma}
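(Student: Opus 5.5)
The plan is to count qubits and CNOTs block by block in the factorization $U_1(\tau)=\prod_{\alpha}\prod_{j}\mathcal{W}_j^{(\alpha)}(\tau)$ of Eq.~\eqref{eq:U1-circuit}, and then add up the totals.

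\textbf{Qubits.} The state register uses $4$ qubits and the spatial register uses $3n$ qubits. I will implement every gate ancilla-free, so the qubit count is exactly $3n+4$. This requires synthesizing each multi-controlled rotation (the $\mathrm{CRZ}$ with its $k-1$ spatial controls plus the $4$ state-register controls selecting $\ket{j}$) with no work qubits. I would use a standard linear-CNOT ancilla-free decomposition of multi-controlled single-qubit $SU(2)$ rotations. Such a decomposition uses $O(m)$ CNOTs for $m$ controls with an explicit constant.

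\textbf{Spatial part (the $432n^2$ term).} Fix $(\alpha,j,k)$. By Eq.~\eqref{eq:W-gamma-alpha}, the controlled block $\ketbra*{j}\otimes W_{jk}^{(\alpha)}+(I-\ketbra*{j})\otimes I$ consists of three pieces.
\begin{itemize}
\item The conjugation $G_k,G_k^\dagger$. This contributes $2(k-1)$ CNOTs. The conjugating gates need not be controlled on $\ket{j}$, because they cancel when the control is off.
\item The $\mathrm{CRZ}$. This has $k-1$ spatial controls plus $4$ state controls, so at most $n+3$ controls.
\end{itemize}
I would bound the CNOT cost of this controlled rotation by at most $9n-2(k-1)$, uniformly in $k\le n$. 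With the $G_k$ conjugation included, each $(\alpha,j,k)$ block then costs at most $9n$ CNOTs. Summing over $3\cdot16\cdot n$ triples gives $48n\cdot 9n=432n^2$. This is the step I expect to be the main obstacle. The difficulty is finding an ancilla-free multi-controlled-$R_z$ construction whose explicit constant fits under $9n-2(k-1)$, including the small-$n$ edge cases. I would first try the $R_z$-specific construction that alternates CNOT ladders with half-angle rotations. If its constant is too large, I would fall back on splitting the controls into two halves using the spatial qubits not acted upon as borrowed (dirty) ancillas, which is allowed because those qubits are idle during the rotation. The zero-eigenvalue indices $j$ (for which $W_{jk}^{(\alpha)}=I$) could also be skipped, but the stated bound does not require this.

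\textbf{State-register part (the constant $378$).} In $\mathcal{W}_{j}^{(\alpha)}\mathcal{W}_{j+1}^{(\alpha)}$, the inner factors $V^{(\alpha)\dagger}V^{(\alpha)}$ cancel. Hence for each $\alpha$ only one $V^{(\alpha)\dagger}$ and one $V^{(\alpha)}$ remain. Across the boundary between axes, the products $V^{(\alpha)\dagger}V^{(\alpha+1)}$ can be merged into a single $4$-qubit unitary. I will synthesize each remaining $4$-qubit unitary by quantum Shannon decomposition, using its known explicit CNOT bound for $4$ qubits. I will also exploit the zero-padded block structure of $B_{\mathrm{cell}}^{-1/2}A^{(\alpha)}B_{\mathrm{cell}}^{-1/2}$, which acts trivially on $7$ basis states, to choose $V^{(\alpha)}$ with reduced cost. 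The aim is to show that the total CNOT cost of all these basis changes is at most $378$, a constant independent of $n$.

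\textbf{Conclusion.} Adding the spatial and state-register contributions gives at most $432n^2+378$ CNOTs on $3n+4$ qubits for one application of $U_1(\tau)$.
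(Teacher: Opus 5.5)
Your outline follows the paper's architecture: $3n+4$ qubits, uncontrolled $G_k$ conjugation at $2(k-1)$ CNOTs, one rotation with $k+3$ controls per $(\alpha,j,k)$, and one surviving $V^{(\alpha)},V^{(\alpha)\dagger}$ pair per axis. The gap is that the two constants making up the lemma are reverse-engineered from the target rather than derived, and the first is out of reach of the tools you name.

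\textbf{The spatial constant.} A per-block cost of $9n-2(k-1)$ for a $(k+3)$-controlled rotation means, at $n=1$, a $4$-controlled $R_z$ with at most $9$ CNOTs.
\begin{itemize}
\item The Gray-code circuit uses $16$. Any circuit built only from CNOTs and $Z$-rotations needs at least $15$, since all $16$ parities $x_t\oplus x_S$, $S\subseteq\{1,\dots,4\}$, carry nonzero phase and each CNOT creates at most one new parity.
\item The linear decomposition the paper cites gives $16\cdot 4-40+3=27$.
\item At $k=n$ your bound asks for about $7n$ CNOTs with $n+3$ controls, against $16n+11$ from that decomposition.
\item The borrowed-ancilla route realizes the rotation through two multi-controlled $X$ gates. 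No construction of that kind that I know of approaches $7$ CNOTs per control.
\end{itemize}

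\textbf{The constant $378$.} This is also only a target. Quantum Shannon decomposition costs $100$ CNOTs for a generic $4$-qubit unitary, so even your four merged basis changes give $400>378$. The paper simply assumes $N_V\le 63$ for each of the six $V$'s.

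\textbf{The paper's own count.} The paper plugs in $N_{\mathrm{ladder}}(k)=2(k-1)$ and $N_{\mathrm{CRZ}}(k)=16(k+3)-40+3$, i.e.\ $18k+9$ per block. These sum to $48\sum_{k=1}^{n}(18k+9)+378=432n^2+864n+378$. So the stated $432n^2+378$ does not follow from the paper's count either; the linear term is dropped. Your uniform $9n$ per block is exactly what would absorb it, and it lies below the paper's average block cost $9n+18$.

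\textbf{A repair.} The remark you set aside closes the gap with standard ingredients. For each $\alpha$ the physical block is $\begin{bmatrix} O & G\\ G^{\dagger} & O\end{bmatrix}$ with $G=\rho^{-1/2}C^{(\alpha)}S_{\mathrm{comp}}^{-1/2}$ of rank $3$. Hence only $6$ of the $16$ eigenvalues $\lambda_j^{(\alpha)}$ are nonzero. For the remaining $10$ values of $j$, $\theta_{jk}^{(\alpha)}=0$ and $W_{jk}^{(\alpha)}=I$, so the controlled factor is the identity and costs nothing.
\begin{itemize}
\item With the per-block cost $18k+9$, the spatial part becomes $18(9n^2+18n)=162n^2+324n$, which is at most $432n^2$ for $n\ge 2$.
\item For $n=1$, merging the basis changes as you propose gives $486+4\cdot 63=738\le 810$.
\end{itemize}
Without this reduction, or a genuinely cheaper multi-controlled rotation that you do not currently have, the proposal does not establish the claimed CNOT count.
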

\begin{proof}
Equation~\eqref{eq:basis-change-by-V} gives an explicit one step implementation, so the qubit count $N_{\mathrm{qubit}}$ is
\begin{align}
	N_{\mathrm{qubit}} = 3n+4.
\end{align}
For CNOT counting, Eq.~\eqref{eq:hermite-term-exponential-map-controlled} applies state register control to the whole $W_{jk}^{(\alpha)}(\tau)$ block, and Eq.~\eqref{eq:W-gamma-alpha} shows that, by symmetry, the effective controlled part is only $\crz_k^{1,\dots,k-1}$.
Hence, the rotation gate to be implemented has $(k+3)$ control qubits.
Write
\begin{align}
	N_{\mathrm{ladder}}(k) = 2(k-1),\quad
	N_{\mathrm{CRZ}}(k) = 16(k+3) -40+3,
\end{align}
where $N_{\mathrm{ladder}}(k)$ is the CNOT ladder cost from $G_k, G_k^{\dagger}$, and $N_{\mathrm{CRZ}}(k)$ is the controlled $R_z$ cost for $\crz_k^{1,\dots,k-1}$.
$\crz_k^{1,\dots,k-1}$ is synthesized with $16(k+3) -40$ CNOT gates and one controlled single qubit rotation gate~\cite{CircuitDecompositionMulticontrolledSpecialUnitarySingleQubitGates2024}, and the latter is implemented with $3$ CNOT gates~\cite{QuantumComputationQuantumInformation10thAnniversaryEdition2010a}. 
For basis change on the state register, each of $V^{(\alpha)}$ and $V^{(\alpha)\dagger}$ costs at most
\begin{align}
	N_V \le 63
\end{align}
CNOT gates~\cite{TheoryTrotterErrorCommutatorScaling2021a}.
Within each fixed $\alpha$, adjacent $V^{(\alpha)\dagger}V^{(\alpha)}$ pairs cancel across $j$, leaving one pair per $\alpha$.
Therefore, the total number of CNOT gates $N_{\mathrm{CNOT}}$ is
	\begin{align}
		N_{\mathrm{CNOT}} 
		&= \sum_{\alpha=1}^{3}\left[\sum_{j=0}^{15}\sum_{k=1}^{n}\left(N_{\mathrm{ladder}}(k)+N_{\mathrm{CRZ}}(k)\right) + 2N_V\right]\notag\\
		&\le 432n^2+378.
	\end{align}
	Here one application of $U_1(\tau)$ uses $3n+4$ qubits and at most $432n^2+378$ CNOT gates.
\end{proof}
To pass from one step bounds to global gate complexity, we use the following generic conversion lemma.
\begin{lemma}\label{lemma:error-to-gate-conversion}
Let $\mathcal{H}$ be a Hamiltonian, and let $U(\tau)$ be a one step approximation of $e^{-i\mathcal{H}\tau}$.
Assume that for some constants $C>0$ and $p>1$,
\begin{align}
	\norm*{e^{-i\mathcal{H}\tau}-U(\tau)} \le C\,\tau^p .
\end{align}
Assume also that one application of $U(\tau)$ uses at most $N_{\mathrm{step}}$ CNOT gates.

Then, for simulation time $T>0$ and target error $\epsilon>0$, if
\begin{align}
	m \in \mathbb{N},\qquad
	m\ge \left(\frac{CT^p}{\epsilon}\right)^{\frac{1}{p-1}},
\end{align}
the m-step product $U(T/m)^m$ satisfies
\begin{align}
	\norm*{e^{-i\mathcal{H}T}-U(T/m)^m} \le \epsilon,
\end{align}
and the required CNOT gate count $N_{\mathrm{CNOT}}$ satisfies
\begin{align}
N_{\mathrm{CNOT}}
&\le
\left\lceil
\left(\frac{CT^p}{\epsilon}\right)^{\frac{1}{p-1}}
\right\rceil N_{\mathrm{step}}\notag\\
&\backsimeq 
\left(\frac{CT^p}{\epsilon}\right)^{\frac{1}{p-1}}N_{\mathrm{step}}.
\end{align}

\end{lemma}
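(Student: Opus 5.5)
The plan is the standard telescoping (hybrid) argument for product formulas, followed by a direct count of gates.

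\emph{Step 1: reduce the global error to a sum of one-step errors.} Write $E:=e^{-i\mathcal{H}\tau}$ and $U:=U(\tau)$ with $\tau=T/m$. Both operators are unitary: $E$ because $\mathcal{H}$ is Hermitian, and $U$ because it is a product of exponentials of Hermitian terms, as in Eqs.~\eqref{eq:U1} and \eqref{eq:U2}. I will use the identity
\begin{align}
E^m-U^m=\sum_{\ell=0}^{m-1}E^{m-1-\ell}(E-U)U^{\ell}.
\end{align}
Taking operator norms and using unitary invariance gives
\begin{align}
\norm*{E^m-U^m}\le m\norm*{E-U}\le mC\left(\frac{T}{m}\right)^p=\frac{CT^p}{m^{p-1}}.
\end{align}
The only point to watch here is that $U$ must be unitary. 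If it were merely a contraction, the same estimate would still hold, since $\norm*{U^\ell}\le 1$ is all the argument needs.

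\emph{Step 2: choose $m$.} Because $p>1$, the map $m\mapsto CT^p/m^{p-1}$ is decreasing. The condition $CT^p/m^{p-1}\le\epsilon$ is therefore equivalent to $m^{p-1}\ge CT^p/\epsilon$, that is, $m\ge (CT^p/\epsilon)^{1/(p-1)}$. This is exactly the hypothesis on $m$, and it yields $\norm*{e^{-i\mathcal{H}T}-U(T/m)^m}\le\epsilon$. Here we use $e^{-i\mathcal{H}T}=E^m$.

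\emph{Step 3: count CNOT gates.} Take $m$ to be the smallest admissible integer, $m=\lceil (CT^p/\epsilon)^{1/(p-1)}\rceil$. The circuit for $U(T/m)^m$ is $m$ consecutive copies of the one-step circuit, each using at most $N_{\mathrm{step}}$ CNOT gates, so the total is at most $mN_{\mathrm{step}}$. This gives the stated ceiling bound. The asymptotic form $\backsimeq$ follows by dropping the ceiling, which changes $m$ by less than $1$ and so is negligible when $CT^p/\epsilon$ is large.

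\emph{Possible subtlety.} One might worry that $N_{\mathrm{step}}$ depends on $\tau$. It does not: in Lemma~\ref{lemma:one-step-U1} the circuit structure is independent of $\tau$, and only the rotation angles $\theta_{jk}^{(\alpha)}(\tau)$ change with $\tau$. Apart from this, I do not expect any real obstacle; Step 1 is the only part that needs an argument.
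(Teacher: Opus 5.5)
Your proposal is correct and follows the paper's proof essentially step for step. Both bound the $m$-step error by $m$ times the one-step error to get $CT^p/m^{p-1}$, take the smallest admissible integer $m$, and multiply by $N_{\mathrm{step}}$. Your explicit telescoping identity, with the remark that it needs $\norm*{U^\ell}\le 1$ (true here since $U_1,U_2$ are unitary), justifies the first inequality, which the paper states without comment.
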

\begin{proof}
Setting $\tau=T/m$,
\begin{align}
\norm*{e^{-i\mathcal{H}T}-U(\tau)^m}
&\le
m\,\norm*{e^{-i\mathcal{H}\tau}-U(\tau)}\notag\\
&\le
m\,C\left(\frac{T}{m}\right)^p\notag\\
&=
\frac{CT^p}{m^{p-1}}.
\end{align}
Hence $m\ge (CT^p/\epsilon)^{1/(p-1)}$ implies error within $\epsilon$.
Taking the smallest such integer $m$ gives
\begin{align}
	m=\left\lceil\left(\frac{CT^p}{\epsilon}\right)^{\frac{1}{p-1}}\right\rceil,
\end{align}
and therefore
\begin{align}
	N_{\mathrm{CNOT}}
	\le
		mN_{\mathrm{step}}
		\le
		\left\lceil\left(\frac{CT^p}{\epsilon}\right)^{\frac{1}{p-1}}\right\rceil N_{\mathrm{step}}.
\end{align}

\end{proof}
Applying Lemma~\ref{lemma:error-to-gate-conversion} with one step error bound from Proposition~\ref{thm:1st-trotter-error-norm-scaling} (i.e., $p=2$) and the one step CNOT cost from Lemma~\ref{lemma:one-step-U1}, we obtain the following global CNOT complexity bound.
\begin{theorem}[First-order Trotter gate complexity: norm scaling baseline]\label{thm:1st-trotter-gates-norm-scaling}
	Consider the Hamiltonian $\mathcal{H}$ in Eq.~\eqref{eq:H-decomposition-with-48n-components}.
	Using the first-order Trotter formula, the time evolution $e^{-i\mathcal{H}T}$ can be approximated within additive error $\epsilon$ (in operator norm) by a circuit using
	\begin{align}
		3n+4
	\end{align}
	qubits and at most
    \begin{align}\label{eq:1st-trotter-gates}
		\frac{3^4T^2}{h^2\epsilon}\rho^{-1}\norm*{\mathcal{S}^{-1}_{\mathrm{comp}}}n^2\left(216n^2+189\right)
	\end{align}
	CNOT gates.
		Here, $2^n$ is the number of grid points per coordinate, $h$ is the grid interval, $\rho$ is the mass density, and $S_{\mathrm{comp}}$ is the compliance matrix.
\end{theorem}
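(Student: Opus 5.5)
The plan is a direct substitution into Lemma~\ref{lemma:error-to-gate-conversion}. From Proposition~\ref{thm:1st-trotter-error-norm-scaling} we take $p=2$ and
\begin{align}
C=\frac{3^4}{2h^2}\rho^{-1}\norm*{\mathcal{S}^{-1}_{\mathrm{comp}}}n^2 .
\end{align}
The exponent $1/(p-1)$ then equals $1$, so the number of Trotter steps becomes
\begin{align}
m=\left\lceil \frac{CT^2}{\epsilon}\right\rceil=\left\lceil\frac{3^4T^2}{2h^2\epsilon}\rho^{-1}\norm*{\mathcal{S}^{-1}_{\mathrm{comp}}}n^2\right\rceil .
\end{align}
With this choice, the $m$-step product $U_1(T/m)^m$ approximates $e^{-i\mathcal{H}T}$ within $\epsilon$ in operator norm.

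Next I take the one-step cost from Lemma~\ref{lemma:one-step-U1}, namely $N_{\mathrm{step}}\le 432n^2+378=2(216n^2+189)$. The factor $2$ cancels the $1/2$ in $C$. Multiplying $m$ by $N_{\mathrm{step}}$ therefore gives exactly the expression in Eq.~\eqref{eq:1st-trotter-gates}, up to the ceiling.

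The qubit count is unaffected by repetition. No ancillas are introduced, since the circuit of Eq.~\eqref{eq:basis-change-by-V} acts only on the state and spatial registers. It therefore uses $3n+4$ qubits, again by Lemma~\ref{lemma:one-step-U1}.

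The main obstacle is the ceiling. Strictly, $\lceil x\rceil N_{\mathrm{step}}\le (x+1)N_{\mathrm{step}}$, so the clean bound holds either in the asymptotic sense ($\backsimeq$) used in Lemma~\ref{lemma:error-to-gate-conversion}, or whenever $x:=CT^2/\epsilon$ is an integer. I would state the theorem in that approximate sense, consistent with the paper's convention. Alternatively, one can note that $\lceil x\rceil\le 2x$ for $x\ge1$, which changes only the constant factor.
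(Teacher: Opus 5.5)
Your proposal is exactly the paper's argument: substitute $p=2$ and $C=\frac{3^4}{2h^2}\rho^{-1}\norm*{\mathcal{S}^{-1}_{\mathrm{comp}}}n^2$ from Proposition~\ref{thm:1st-trotter-error-norm-scaling}, together with $N_{\mathrm{step}}=432n^2+378$ from Lemma~\ref{lemma:one-step-U1}, into Lemma~\ref{lemma:error-to-gate-conversion}, and your caveat about the ceiling is what the paper absorbs into its $\backsimeq$. One input that neither you nor the paper re-checks is $N_{\mathrm{step}}$ itself: the sum in the proof of Lemma~\ref{lemma:one-step-U1} actually evaluates to $3\bigl[16\sum_{k=1}^{n}(18k+9)+126\bigr]=432n^2+864n+378$, so a fully rigorous version of the bound would carry $216n^2+432n+189$ in place of $216n^2+189$.
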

\begin{proof}
By Proposition~\ref{thm:1st-trotter-error-norm-scaling},
\begin{align}
	\norm*{e^{-i\mathcal{H}\tau}-U_1(\tau)}
	\le
	C\tau^2, \qquad
	C=
		\frac{3^4}{2h^2}\rho^{-1}\norm*{\mathcal{S}_{\mathrm{comp}}^{-1}}\,n^2.
\end{align}
By Lemma~\ref{lemma:one-step-U1}, one step uses $3n+4$ qubits and at most \(432n^2+378\) CNOT gates.
Applying Lemma~\ref{lemma:error-to-gate-conversion} with \(N_{\mathrm{step}}=432n^2+378\), we obtain Eq.~\eqref{eq:1st-trotter-gates}.

\end{proof}
We next derive a sharper bound from Eq.~\eqref{eq:commutator-scaling}.
\begin{proposition}[First-order Trotter error bound via commutator scaling]\label{thm:1st-trotter-error}
	Let $\mathcal{H}$ be given by Eq.~\eqref{eq:H-decomposition-with-48n-components}, and let $U_1(\tau)$ be defined in Eq.~\eqref{eq:U1}.
Then, by Eq.~\eqref{eq:commutator-scaling}, the one step approximation is bounded as
    \begin{align}\label{eq:1st-trotter-error}
		\norm*{U_1(\tau)- e^{-i\mathcal{H}\tau}}
		\leq
			\frac{9\tau^2}{4h^2}\rho^{-1}\norm*{\mathcal{S}^{-1}_{\mathrm{comp}}}\left(5n-1\right).
	\end{align}
\end{proposition}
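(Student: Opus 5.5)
The plan is to evaluate the nested commutator sum in Eq.~\eqref{eq:commutator-scaling} exactly for the ordered family $\{H_{jk}^{(\alpha)}\}$ of Eq.~\eqref{eq:H-decomposition-with-48n-components}. Since the product order is $k$ innermost, then $j$, then $\alpha$, for a fixed $r_1=(\alpha,j,k)$ the terms that come later split into three groups:
\begin{itemize}
\item[(i)] the same $\alpha$ and $j$ with $k'>k$;
\item[(ii)] the same $\alpha$ with $j'>j$ (any $k'$);
\item[(iii)] all terms with $\beta>\alpha$.
\end{itemize}
I will treat these groups separately and use the triangle inequality only between groups, not inside them. Write $M^{(\alpha)}:=B_{\mathrm{cell}}^{-1/2}A^{(\alpha)}B_{\mathrm{cell}}^{-1/2}$.

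Group (ii) contributes nothing. The $P_j^{(\alpha)}$ are orthogonal projectors, so $P_{j'}^{(\alpha)}P_j^{(\alpha)}=P_j^{(\alpha)}P_{j'}^{(\alpha)}=0$ for $j'\neq j$. Both products in the commutator therefore vanish, whatever the spatial factors are.

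For group (i) the commutator factorizes:
\begin{align}
\Bigl[\textstyle\sum_{k'>k}H_{jk'}^{(\alpha)},H_{jk}^{(\alpha)}\Bigr]=-\frac{(\lambda_j^{(\alpha)})^2}{4h^2}\,P_j^{(\alpha)}\otimes\Bigl[\textstyle\sum_{k'>k}S_{k'}^{(\alpha)},S_k^{(\alpha)}\Bigr].
\end{align}
Its norm is $\frac{(\lambda_j^{(\alpha)})^2}{4h^2}$ times a purely one-dimensional quantity. I would bound $\sum_k\|[\sum_{k'>k}S_{k'}^{\mathrm{cell}},S_k^{\mathrm{cell}}]\|$ using the explicit $\sigma_{01},\sigma_{10}$ structure, as in Ref.~\cite{Hamiltoniansimulationhyperbolicpartialdifferentialequationsscalablequantumcircuits2024}. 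The $S_k$ act as hopping terms between disjoint pairs of basis states, so each commutator has norm $O(1)$. The target is a bound linear in $n$, something like $2(n-1)$. I then sum $(\lambda_j^{(\alpha)})^2$ over $j$ and $\alpha$. This equals $\|M^{(\alpha)}\|_F^2$, and I would bound it through Lemma~\ref{lemma:BAB-bounded} (or directly) by a multiple of $\rho^{-1}\|\mathcal{S}_{\mathrm{comp}}^{-1}\|$. The nonzero spectrum of $M^{(\alpha)}$ is $\pm$ the singular values of $\rho^{-1/2}C^{(\alpha)}\mathcal{S}_{\mathrm{comp}}^{-1/2}$. There are three such pairs, each bounded by $(\rho^{-1}\|\mathcal{S}_{\mathrm{comp}}^{-1}\|)^{1/2}$.

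For group (iii) I resum the inner sums: $\sum_{\beta>\alpha}\sum_{j',k'}H_{j'k'}^{(\beta)}=\sum_{\beta>\alpha}M^{(\beta)}\otimes iD^{(\beta)}$. Since $D^{(\beta)}$ and $S_k^{(\alpha)}$ act on different axes, they commute. Hence
\begin{align}
[X\otimes a,Y\otimes b]=[X,Y]\otimes ab
\end{align}
and the norm is at most
\begin{align}
\frac{1}{2h}\,\bigl\|[M^{(\beta)},\lambda_j^{(\alpha)}P_j^{(\alpha)}]\bigr\|\,\|D^{(\beta)}\|,
\qquad \|D^{(\beta)}\|\le 1/h .
\end{align}
Summing over $k$ gives a factor $n$. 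Summing over $j$ gives $\sum_j\|[M^{(\beta)},\lambda_jP_j]\|\le 2\|M^{(\beta)}\|\sum_j|\lambda_j^{(\alpha)}|$. Each factor is controlled by $\rho^{-1/2}\|\mathcal{S}_{\mathrm{comp}}^{-1}\|^{1/2}$ times small constants, and there are three ordered pairs $(\alpha,\beta)$. This gives an $O(n)$ term with constant proportional to $\rho^{-1}\|\mathcal{S}_{\mathrm{comp}}^{-1}\|/h^2$.

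Finally I would add the two contributions and multiply by $\tau^2/2$. Collecting them should produce a form $\frac{9\tau^2}{4h^2}\rho^{-1}\|\mathcal{S}_{\mathrm{comp}}^{-1}\|(an+b)$, and I will fix the constants so that $a=5$, $b=-1$. The $-1$ should come from group (i), where there are $n-1$ nonempty commutator sums.

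The main obstacle is getting exactly these constants. This needs a sharp count of $\sum_k\|[\sum_{k'>k}S_{k'},S_k]\|$ in group (i), and it must avoid the crude $2\|A\|\|B\|$ bound in group (iii). For group (iii) I would first try exploiting that $[M^{(\beta)},M^{(\alpha)}]$-type structure only involves the $3\times 6$ blocks $C^{(\alpha)}$. That structure lets one compute $\sum_j\|[M^{(\beta)},\lambda_jP_j^{(\alpha)}]\|$ directly from the explicit eigenvectors rather than by a norm product.
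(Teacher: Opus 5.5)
Your decomposition matches the paper's. Groups (i) and (ii) together are its $\Xi_\alpha^{\mathrm{same}}$, and group (iii) is its $\Xi_{\alpha\beta}^{\mathrm{cross}}$. Your group (iii) estimate is exactly the paper's: resum to $M^{(\beta)}\otimes iD^{(\beta)}$, then use $\|D^{(\beta)}\|\le 1/h$ and $\|[X,Y]\|\le 2\|X\|\|Y\|$. The gap is that you never derive the constants, and the one estimate you do name would not give the claim.

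Push your own bounds through Lemma~\ref{lemma:BAB-bounded}, i.e.\ $\sum_j(\lambda_j^{(\alpha)})^2\le 6\rho^{-1}\|\mathcal{S}_{\mathrm{comp}}^{-1}\|$ and $\|M^{(\beta)}\|\sum_j|\lambda_j^{(\alpha)}|\le 6\rho^{-1}\|\mathcal{S}_{\mathrm{comp}}^{-1}\|$. This yields
\begin{equation*}
\|U_1(\tau)-e^{-i\mathcal{H}\tau}\|\le\frac{9\tau^2}{4h^2}\rho^{-1}\|\mathcal{S}_{\mathrm{comp}}^{-1}\|\,(4n+Q),
\qquad
Q:=\sum_{k=1}^{n}\Bigl\|\Bigl[\sum_{k'>k}S_{k'}^{\mathrm{cell}},S_k^{\mathrm{cell}}\Bigr]\Bigr\|.
\end{equation*}
So the crude commutator bound in group (iii) is not an obstacle. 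It produces exactly the $4n$, and no eigenvector computation is needed there. Everything reduces to showing $Q\le n-1$. Your target ``something like $2(n-1)$'' (each of the $n-1$ nonempty sums bounded by $2$) gives $6n-2>5n-1$. So the plan as stated does not prove the proposition, and choosing the constants so that $a=5$, $b=-1$ is not a derivation.

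The missing ingredient is the commutation structure of the $S_k^{\mathrm{cell}}$. For $k\ge 2$, $S_k^{\mathrm{cell}}$ acts only on the index pairs $\{j-1,j\}$ with $j\equiv 2^{k-1}\pmod{2^k}$. These supports are disjoint for distinct $k,m\ge 2$, so $[S_k^{\mathrm{cell}},S_m^{\mathrm{cell}}]=0$ and every term of $Q$ with $k\ge 2$ vanishes, leaving only $k=1$.

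The paper then uses $\|[S_1^{\mathrm{cell}},S_m^{\mathrm{cell}}]\|=1$ for each $m\ge 2$ (Eq.~(42) of Ref.~\cite{Hamiltoniansimulationhyperbolicpartialdifferentialequationsscalablequantumcircuits2024}) and the triangle inequality to get $Q\le n-1$. Since you deliberately avoid splitting inside group (i), you could instead note that $\sum_{m\ge 2}S_m^{\mathrm{cell}}$ is a direct sum of disjoint $2\times 2$ blocks of norm $1$, so $Q\le 2\|S_1^{\mathrm{cell}}\|\,\|\sum_{m\ge 2}S_m^{\mathrm{cell}}\|=2$. Together with $Q=0$ and $Q=1$ for $n=1,2$, this gives $4n+Q\le 5n-1$ for every $n$. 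It is strictly better than the paper's count once $n\ge 4$.

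Your explanation of the $-1$ also uses the wrong mechanism. It does not come from $n-1$ nonempty sums each of norm at most $1$. The sums with $k\ge 2$ are identically zero, while the single $k=1$ sum already has norm $2\cos(\pi/5)>1$ at $n=3$.
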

\begin{proof}[Proof sketch]
From Proposition~\ref{prop:frist-order-trotter-error-original},
\begin{align}
\norm*{U_1(\tau)- e^{-i\mathcal{H}\tau}}
\le
\frac{\tau^2}{2}
\left[
\sum_{\alpha=1}^{3}\Xi_{\alpha}^{\mathrm{same}}
+
\sum_{\alpha<\beta}\Xi_{\alpha\beta}^{\mathrm{cross}}
\right],
\label{eq:t-error-in-main-paper}
\end{align}
where
\begin{align}
\Xi_{\alpha}^{\mathrm{same}}
&:=
\sum_{j=0}^{15}\sum_{k=1}^{n}
\norm\Bigg{\left[H^{(\alpha)}_{jk},
\sum_{l=1}^{15}\sum_{m=1}^{n}H^{(\alpha)}_{lm}
+\sum_{m=k+1}^{n}H^{(\alpha)}_{jm}\right]},\\
\Xi_{\alpha\beta}^{\mathrm{cross}}
&:=
\sum_{j=0}^{15}\sum_{k=1}^{n}
\norm\Bigg{\left[H^{(\alpha)}_{jk},
\sum_{l=0}^{15}\sum_{m=1}^{n}H^{(\beta)}_{lm}\right]}.
\end{align}
For each $\alpha$, the same axis part is bounded by
\begin{align}
\Xi_{\alpha}^{\mathrm{same}}
\le
\frac{n-1}{4h^2}
\trnormBigg{
\left(B_{\mathrm{cell}}^{-1/2}A^{(\alpha)}B_{\mathrm{cell}}^{-1/2}\right)^2
}.
\label{eq:normsum1-main}
\end{align}
For each $\alpha>\beta$, the cross axis part is bounded by
\begin{align}
\Xi_{\alpha\beta}^{\mathrm{cross}}
\le
\frac{n}{h^2}
\trnormBigg{B_{\mathrm{cell}}^{-1/2}A^{(\alpha)}B_{\mathrm{cell}}^{-1/2}}
\norm*{B_{\mathrm{cell}}^{-1/2}A^{(\beta)}B_{\mathrm{cell}}^{-1/2}}.
\label{eq:normsum2-main}
\end{align}
Applying Lemma~\ref{lemma:BAB-bounded} to Eqs.~\eqref{eq:normsum1-main} and \eqref{eq:normsum2-main}, we obtain
\begin{align}
\Xi_{\alpha}^{\mathrm{same}}
&\le
\frac{n-1}{4h^2}\cdot 6\rho^{-1}\norm*{\mathcal{S}_{\mathrm{comp}}^{-1}},\\
\Xi_{\alpha\beta}^{\mathrm{cross}}
&\le
\frac{n}{h^2}\cdot 6\rho^{-1}\norm*{\mathcal{S}_{\mathrm{comp}}^{-1}}.
\end{align}
Substituting these bounds into Eq.~\eqref{eq:t-error-in-main-paper} yields Eq.~\eqref{eq:1st-trotter-error}.
(Full derivation is given in Appendix~\ref{app:first-order-commutator}.)
\end{proof}
Combining Proposition~\ref{thm:1st-trotter-error}, Lemma~\ref{lemma:one-step-U1}, and Lemma~\ref{lemma:error-to-gate-conversion}, we obtain global CNOT complexity bound.
\begin{theorem}[First-order Trotter gate complexity: commutator scaling]\label{thm:1st-trotter-gates}
    Consider $\mathcal{H}$ in Eq.~\eqref{eq:H-decomposition-with-48n-components}.
    Using the first-order Trotter formula, the time evolution operator $e^{-i\mathcal{H}T}$ can be approximated within additive error $\epsilon$ (in operator norm) by a circuit using
	\begin{align}
		3n+4
	\end{align}
	qubits and at most
    \begin{align}\label{eq:1st-trotter-gates0}
			\frac{9T^2}{2h^2\epsilon}\rho^{-1}\norm*{\mathcal{S}^{-1}_{\mathrm{comp}}}\left(1080n^3-216n^2+945n-189\right)
	\end{align}
	CNOT gates.
		Here, $2^n$ is the number of grid points per coordinate, $h$ is the grid interval, $\rho$ is the mass density, and $S_{\mathrm{comp}}$ is the compliance matrix.
\end{theorem}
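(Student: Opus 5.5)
The plan is to repeat the argument used for Theorem~\ref{thm:1st-trotter-gates-norm-scaling}, replacing the norm-scaling constant with the commutator-scaling constant. Proposition~\ref{thm:1st-trotter-error} gives the one-step bound
\begin{align}
	\norm*{e^{-i\mathcal{H}\tau}-U_1(\tau)} \le C\tau^2,\qquad C=\frac{9}{4h^2}\rho^{-1}\norm*{\mathcal{S}^{-1}_{\mathrm{comp}}}(5n-1).
\end{align}
This has the form required by Lemma~\ref{lemma:error-to-gate-conversion} with $p=2$.

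For the cost of a single step, Lemma~\ref{lemma:one-step-U1} gives $3n+4$ qubits and at most $N_{\mathrm{step}}=432n^2+378$ CNOT gates. This count depends only on the circuit in Eq.~\eqref{eq:basis-change-by-V}, not on which error bound is used, so it carries over unchanged. The $m$-step product $U_1(T/m)^m$ acts on the same register, which gives the qubit count $3n+4$ directly.

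Applying Lemma~\ref{lemma:error-to-gate-conversion} with $p=2$, it suffices to take $m\ge CT^2/\epsilon$. The CNOT count is then approximately $(CT^2/\epsilon)N_{\mathrm{step}}$, in the same $\backsimeq$ sense (dropping the ceiling) as in Theorem~\ref{thm:1st-trotter-gates-norm-scaling}. The remaining step is to expand this product:
\begin{align}
	\frac{9T^2}{4h^2\epsilon}\rho^{-1}\norm*{\mathcal{S}^{-1}_{\mathrm{comp}}}(5n-1)\cdot 2(216n^2+189)
	=\frac{9T^2}{2h^2\epsilon}\rho^{-1}\norm*{\mathcal{S}^{-1}_{\mathrm{comp}}}(5n-1)(216n^2+189).
\end{align}
Multiplying out, $(5n-1)(216n^2+189)=1080n^3-216n^2+945n-189$, which is Eq.~\eqref{eq:1st-trotter-gates0}.

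No step here is a real obstacle, since the substantive work is already done in Proposition~\ref{thm:1st-trotter-error} and Lemma~\ref{lemma:one-step-U1}. The one thing to handle carefully is the ceiling in $m$. As stated, the bound holds only up to the additive $N_{\mathrm{step}}$ that the ceiling introduces, and this is the same convention the paper already adopts in Theorem~\ref{thm:1st-trotter-gates-norm-scaling}. If a strict inequality were wanted, I would add $432n^2+378$ to the bound, or note that this term is dominated by the leading term whenever $CT^2/\epsilon\ge 1$.
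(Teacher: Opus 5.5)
Your proposal is correct and is exactly the paper's proof: you take $C=\frac{9}{4h^2}\rho^{-1}\norm*{\mathcal{S}^{-1}_{\mathrm{comp}}}(5n-1)$ from Proposition~\ref{thm:1st-trotter-error} and $N_{\mathrm{step}}=432n^2+378$ from Lemma~\ref{lemma:one-step-U1}, apply Lemma~\ref{lemma:error-to-gate-conversion} with $p=2$, expand $(5n-1)(216n^2+189)$, and drop the ceiling under the paper's $\backsimeq$ convention, as you note. One caveat applies to both your argument and the paper's: the per-$k$ costs in the proof of Lemma~\ref{lemma:one-step-U1} are $N_{\mathrm{ladder}}(k)+N_{\mathrm{CRZ}}(k)=18k+9$, and summing these over $k$, $j$, $\alpha$ and adding the $6N_V\le 378$ basis-change gates gives $432n^2+864n+378$ rather than $432n^2+378$, so the final bound's lower-order coefficients are only as reliable as that lemma's count.
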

\begin{proof}
By Proposition~\ref{thm:1st-trotter-error},
\begin{align}
	\norm*{e^{-i\mathcal{H}\tau}-U_1(\tau)}\le C\tau^2, \\
	C = \frac{9}{4h^2}\rho^{-1}\norm*{\mathcal{S}_{\mathrm{comp}}^{-1}}(5n-1).
\end{align}
By Lemma~\ref{lemma:one-step-U1}, one step uses $3n+4$ qubits and at most \(432n^2+378\) CNOT gates.
Applying Lemma~\ref{lemma:error-to-gate-conversion} with \(N_{\mathrm{step}}=432n^2+378\), we obtain Eq.~\eqref{eq:1st-trotter-gates0}.
\end{proof}
Comparing Eq.~\eqref{eq:1st-trotter-gates} and Eq.~\eqref{eq:1st-trotter-gates0}, for fixed $T,\epsilon, h,\rho,S_{\mathrm{comp}}$, the leading $n$-dependence of the CNOT upper bound improves from $O(n^4)$ (norm scaling) to $O(n^3)$ (commutator scaling).
\begin{remark}[Material parameter interpretation]
To interpret the bounds in material parameters, we assume the linear isotropic medium.
The compliance matrix is
\begin{align}
    S_{\mathrm{comp}}=\frac{1}{E}
    \begin{pmatrix}
        1 & -\nu & -\nu &0&0&0\\
        -\nu & 1 & -\nu &0&0&0\\
        -\nu & -\nu & 1 &0&0&0\\
        0&0&0& 1+\nu &0&0\\
        0&0&0& 0 &1+\nu&0\\
        0&0&0& 0 &0&1+\nu\\
    \end{pmatrix}.
	\label{eq:Scomp-of-isotropic}
\end{align}
with the Young's modulus $E>0$ and the Poisson ratio $-1<\nu <\frac{1}{2}$.
Hence
\begin{align}
    \norm*{S_{\mathrm{comp}}^{-1}}
    =
    \begin{cases}
        \frac{E}{1-2\nu} & \text{ for } 0\leq \nu <\frac{1}{2},\\
        \frac{E}{1+\nu}  & \text{ for } -1<\nu < 0.
    \end{cases}
\end{align}
The details are given in Appendix~\ref{app:isotropic-compliance}.
Equivalently, we can rewrite
\begin{align}
	\norm*{S_{\mathrm{comp}}^{-1}}
	=
	\max\{3K,2\mu\},
\end{align}
where the bulk modulus $K$ and shear modulus $\mu$ are given by
\begin{align}
	K = \frac{E}{3(1-2\nu)},\quad
	\mu = \frac{E}{2(1+\nu)}.
\end{align}
Therefore, the error and gate bound increase with material stiffness (large $K,\mu$) and decrease with mass density $\rho$ through factors such as $\rho^{-1}\norm*{S_{\mathrm{comp}}^{-1}}$.
In particular, near the compressible limit $\nu\to\frac12$, the factor $E/(1-2\nu)$ becomes large, which worsens these bounds.
\end{remark}
\subsubsection{The second order Trotter formula}
We next analyze the second-order formula $U_2(\tau)$.
As in the first-order case, we first state the generic norm scaling error bound, and then specialize it to the Hamiltonian in Eq.~\eqref{eq:H-decomposition-with-48n-components} to obtain explicit parameter dependence.
\begin{proposition}[Norm scaling bound for the second-order Trotter formula: Lemma 1 in Ref.~\cite{EfficientquantumalgorithmssimulatingsparseHamiltonians2007}]\label{prop:second-order-trotter-error}
	Let
	\begin{align}
		\mathcal{H} = \sum_{r=1}^{R}H_r,
	\end{align}
	where each $H_r$ is Hermitian. Define the generic second-order Trotter formula by
	\begin{align}
		U_2(\tau):= \prod_{r=1}^{R}e^{-iH_r \tau/2}\prod_{r=R}^{1}e^{-iH_r \tau/2}.
	\end{align}
	Then the second-order Trotter error satisfies
    \begin{align}
        \norm*{e^{-i\mathcal{H}\tau} - U_2(\tau)} \leq
        2(10R \norm*{\mathcal{H}} \tau)^{3},\label{eq:general-norm-scaling-bound}
    \end{align}
	provided that	
    \begin{align}
        (10R \norm*{\mathcal{H}}\tau)^3 \leq 1.
    \end{align}
\end{proposition}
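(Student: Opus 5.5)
The plan is to compare the Taylor series of $U_2(\tau)$ and $e^{-i\mathcal{H}\tau}$ in powers of $\tau$. The two series should agree through order $\tau^2$, and the tails of order $\ge 3$ can then be bounded crudely in norm.

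Before starting, there is one subtlety. The stated bound involves only $\norm*{\mathcal{H}}$, but the individual exponentials are governed by $\Lambda:=\max_r\norm*{H_r}$. Following Ref.~\cite{EfficientquantumalgorithmssimulatingsparseHamiltonians2007}, I would work with $\Lambda$ and assume $\norm*{H_r}\le\norm*{\mathcal{H}}$ for all $r$. Equivalently, $\norm*{\mathcal{H}}$ in the statement is read as $\max\{\norm*{\mathcal{H}},\max_r\norm*{H_r}\}$. This is the convention under which Lemma~1 there is proved, and I would state it explicitly. Set $x:=R\Lambda\tau$, so the hypothesis $(10R\norm*{\mathcal{H}}\tau)^3\le 1$ gives $x\le 1/10$.

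\textbf{Low orders.} Write $U_2(\tau)$ as an ordered product of $2R$ exponentials $e^{-iH_{r}\tau/2}$, and expand each one as a power series.

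At order $\tau^1$, the coefficient is $-\tfrac{i}{2}\sum_r H_r\cdot 2=-i\mathcal{H}$. This matches the exact evolution.

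At order $\tau^2$, the coefficient has two contributions. The first is the diagonal terms $-\tfrac18\cdot 2\sum_r H_r^2$, counting the squared terms within each half and the two copies of $H_r$ adjacent in the middle. The second is the cross terms $-\tfrac14\sum_{(a,b)}H_aH_b$ over ordered pairs of distinct positions in the palindromic word. Because the word $H_1\cdots H_R H_R\cdots H_1$ is symmetric, every unordered pair $\{r,s\}$ with $r\ne s$ appears in both orders equally often. Together with the diagonal terms this collects to $-\tfrac12\mathcal{H}^2$, matching $e^{-i\mathcal{H}\tau}$.

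As a cross-check, the symmetry $U_2(\tau)U_2(-\tau)=I$ confirms the second-order agreement once the first order matches. Hence
\[
e^{-i\mathcal{H}\tau}-U_2(\tau)=\sum_{p\ge 3}\tau^p\,(E_p-F_p),
\]
where $E_p$ and $F_p$ are the order-$p$ coefficients of the exact and product series.

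\textbf{Tails.} For the exact evolution, $\norm*{E_p\tau^p}\le(\norm*{\mathcal{H}}\tau)^p/p!\le(R\Lambda\tau)^p/p!=x^p/p!$, using $\norm*{\mathcal{H}}\le R\Lambda$.

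For the product, the order-$p$ coefficient is a sum of products of the series coefficients of the $2R$ factors. Bounding each factor's coefficients termwise by those of $e^{\Lambda\tau/2}$, the order-$p$ coefficient of the product is dominated by that of $(e^{\Lambda\tau/2})^{2R}=e^{R\Lambda\tau}$. This gives $\norm*{F_p\tau^p}\le x^p/p!$.

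Therefore
\[
\norm*{e^{-i\mathcal{H}\tau}-U_2(\tau)}\le 2\sum_{p\ge3}\frac{x^p}{p!}\le \frac{x^3}{3}e^{x}.
\]
For $x\le 1/10$, this is far below $2(10x)^3=2000x^3$, which yields the claim. In fact the factor $10$ is generous.

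\textbf{Main obstacle.} The step needing the most care is the order-$\tau^2$ cancellation, specifically the combinatorial bookkeeping of the palindromic product. The other potentially delicate point is the norm convention relating $\norm*{H_r}$ to $\norm*{\mathcal{H}}$. Everything else is routine majorization of power series.
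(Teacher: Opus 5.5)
Your argument is correct and is essentially the proof of the cited Lemma~1 of Berry \emph{et al.}, which the paper invokes without giving a proof of its own: the Taylor series agree through order $\tau^2$ (your symmetry check $U_2(\tau)U_2(-\tau)=I$ settles this cleanly), and both order-$\ge 3$ tails are then majorized by the series of $e^{R\Lambda\tau}$ with $\Lambda=\max_r\|H_r\|$. Your caveat on the norm is genuinely needed, because with $\|\mathcal{H}\|$ read literally the statement fails (e.g.\ $\mathcal{H}=0$ with non-commuting $H_r$ would force $U_2(\tau)=I$); in the paper's application this is harmless, since the bound $\frac{3}{h}\rho^{-1/2}\|S_{\mathrm{comp}}^{-1/2}\|$ substituted for $\|\mathcal{H}\|$ also dominates every $\|H^{(\alpha)}_{jk}\|=|\lambda^{(\alpha)}_j|/(2h)$.
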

We now specialize Proposition~\ref{prop:second-order-trotter-error} to the Hamiltonian in Eq.~\eqref{eq:H-decomposition-with-48n-components}.
Substituting $R=48n$ and $\norm*{\mathcal{H}}$ yields the following explicit one step error bound.
\begin{proposition}[Second-order Trotter error bound via norm scaling]\label{thm:2nd-trotter-error}
	Let $\mathcal{H}$ be given in Eq.\eqref{eq:H-decomposition-with-48n-components}, and let $U_2(\tau)$ be defined in Eq.~\eqref{eq:U2}.
Then, by Eq.~\eqref{eq:general-norm-scaling-bound}, the one step approximation error is bounded as
    \begin{align}\label{eq:2nd-trotter-error}
		\norm*{U_2(\tau)- e^{-i\mathcal{H}\tau}}
		\leq
		\frac{2\cdot 1440^{3} \tau^{3}}{h^3} \rho^{-\frac{3}{2}} \norm*{S_{\mathrm{comp}}^{-1}}^{\frac{3}{2}}n^3.
	\end{align} 
\end{proposition}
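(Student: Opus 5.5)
The plan is to apply Proposition~\ref{prop:second-order-trotter-error} directly, with the decomposition in Eq.~\eqref{eq:H-decomposition-with-48n-components}. That decomposition has $R = 3\cdot 16\cdot n = 48n$ Hermitian terms $H_{jk}^{(\alpha)}$. The ordering in Eq.~\eqref{eq:U2} is exactly the generic symmetric formula $\prod_{r=1}^{R}e^{-iH_r\tau/2}\prod_{r=R}^{1}e^{-iH_r\tau/2}$ for this enumeration. Proposition~\ref{prop:second-order-trotter-error} then gives
\begin{align}
\norm*{e^{-i\mathcal{H}\tau}-U_2(\tau)} \le 2\left(480\, n\norm*{\mathcal{H}}\tau\right)^3,
\end{align}
under its standing small-$\tau$ condition $(10R\norm*{\mathcal{H}}\tau)^3\le 1$. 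I would carry this condition along as an implicit hypothesis, since the statement says nothing about it. So everything reduces to showing
\begin{align}
\norm*{\mathcal{H}}\le \frac{3}{h}\,\rho^{-1/2}\norm*{S_{\mathrm{comp}}^{-1}}^{1/2}.
\end{align}
Indeed, $480\cdot 3 = 1440$, and cubing this bound reproduces Eq.~\eqref{eq:2nd-trotter-error} exactly.

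For the norm of $\mathcal{H}$, I would not sum over $k$, since that would cost an extra factor of $n$. Instead I would group by axis: $\mathcal{H}=\sum_{\alpha=1}^3 M^{(\alpha)}\otimes iD^{(\alpha)}$ with $M^{(\alpha)}:=B_{\mathrm{cell}}^{-1/2}A^{(\alpha)}B_{\mathrm{cell}}^{-1/2}$. The triangle inequality and $\norm*{X\otimes Y}=\norm*{X}\norm*{Y}$ give $\norm*{\mathcal{H}}\le\sum_\alpha \norm*{M^{(\alpha)}}\,\norm*{D^{(\alpha)}}$.

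First I would bound the spatial factor. With $T$ the lower shift matrix with zero boundary, $D_{\mathrm{cell}}=\frac{1}{2h}(T^{\mathsf T}-T)$. Since $\norm*{T}\le 1$, this gives $\norm*{D^{(\alpha)}}=\norm*{D_{\mathrm{cell}}}\le 1/h$.

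Next I would bound the component factor. Since $A^{(\alpha)}$ vanishes on the padded $7$-dimensional block, only the off-diagonal blocks matter:
\begin{align}
M^{(\alpha)}=\begin{bmatrix} 0 & \rho^{-1/2}C^{(\alpha)}S_{\mathrm{comp}}^{-1/2}\\ S_{\mathrm{comp}}^{-1/2}(C^{(\alpha)})^{\mathsf T}\rho^{-1/2} & 0\end{bmatrix}\oplus O_{7\times 7}.
\end{align}
The norm of this Hermitian block matrix equals that of its off-diagonal block. Each $C^{(\alpha)}$ has orthonormal rows, so $\norm*{C^{(\alpha)}}=1$. Hence $\norm*{M^{(\alpha)}}\le \rho^{-1/2}\norm*{S_{\mathrm{comp}}^{-1/2}}=\rho^{-1/2}\norm*{S_{\mathrm{comp}}^{-1}}^{1/2}$, using $S_{\mathrm{comp}}>0$. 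This is presumably the operator-norm part of Lemma~\ref{lemma:BAB-bounded}, and I would cite it if it is stated there in this form.

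Summing over the three axes gives the desired bound on $\norm*{\mathcal{H}}$, and substituting into the inequality above finishes the proof. The only step needing care is the spatial norm. Using the MPO sum $\sum_k S_k^{(\alpha)}$ with $\norm*{S_k^{(\alpha)}}=1$ would give $n/(2h)$, which produces the wrong power of $n$. So the shift-matrix representation of $D_{\mathrm{cell}}$ is the key point; the rest is bookkeeping of the constant $10\cdot 48\cdot 3=1440$.
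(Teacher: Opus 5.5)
Your proposal is correct and follows the paper's proof essentially step for step. Like the paper, you group $\mathcal{H}$ by axis and bound $\norm*{\mathcal{H}}\le \frac{3}{h}\rho^{-1/2}\norm*{S_{\mathrm{comp}}^{-1/2}}$ using the operator-norm part of Lemma~\ref{lemma:BAB-bounded} together with $\norm*{D^{(\alpha)}}\le 1/h$, then substitute $R=48n$ into Proposition~\ref{prop:second-order-trotter-error} while carrying its small-$\tau$ condition; the only difference is that you derive $\norm*{D^{(\alpha)}}\le 1/h$ yourself from the shift-matrix form of $D_{\mathrm{cell}}$, where the paper cites an external reference.
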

\begin{proof}
To apply Proposition~\ref{prop:second-order-trotter-error}, we first bound $\mathcal{H}$.
From Eq.~\eqref{eq:H-axis-decomposition}, write
\begin{align}
\mathcal{H}
=
\sum_{\alpha=1}^{3}
\left(B_{\mathrm{cell}}^{-\frac12}A^{(\alpha)}B_{\mathrm{cell}}^{-\frac12}\right)\otimes D^{(\alpha)}.
\end{align}
Here, by submultiplicativity of the operator norm,
\begin{align}
\norm*{\mathcal{H}}
&\le
\sum_{\alpha=1}^{3}
\norm*{B_{\mathrm{cell}}^{-\frac12}A^{(\alpha)}B_{\mathrm{cell}}^{-\frac12}}
\norm*{D^{(\alpha)}}\notag\\
&\le
\sum_{\alpha=1}^{3}
\left(\rho^{-\frac12}\norm*{S_{\mathrm{comp}}^{-\frac12}}\right)\frac{1}{h}\notag\\
&=
\frac{3}{h}\rho^{-\frac12}\norm*{S_{\mathrm{comp}}^{-\frac12}}.
\label{eq:H-bound-by-params}
\end{align}
In the second inequality, we used 
\(\norm*{B_{\mathrm{cell}}^{-\frac12}A^{(\alpha)}B_{\mathrm{cell}}^{-\frac12}}
\le \rho^{-\frac12}\norm*{S_{\mathrm{comp}}^{-\frac12}}\)
(from Lemma~\ref{lemma:BAB-bounded}) and
\(\norm*{D^{(\alpha)}}\le 1/h\) (from Ref.~\cite{LectureNotesQuantumAlgorithmsScientificComputation2022}, Proposition~4.12).

Now we apply Proposition~\ref{prop:second-order-trotter-error} with $R=3\cdot 16n=48n$ and with $U_2(\tau)$ defined in Eq.~\eqref{eq:U2}.
Substituting Eq.~\eqref{eq:H-bound-by-params} into Eq.~\eqref{eq:general-norm-scaling-bound}, we obtain
\begin{align}
\norm*{U_2(\tau)-e^{-i\mathcal{H}\tau}}
\le
\frac{2\cdot 1440^{3}}{h^3}\,
\rho^{-\frac32}\,
\norm*{S_{\mathrm{comp}}^{-1}}^{\frac32}\,
n^3\,\tau^3,
\end{align}
where we used
\(\norm*{S_{\mathrm{comp}}^{-\frac12}}^3
=
\norm*{S_{\mathrm{comp}}^{-1}}^{\frac32}\).

The applicability condition of Proposition~\ref{prop:second-order-trotter-error} becomes
\begin{align}
\frac{1440^{3}}{h^3}\,
\rho^{-\frac32}\,
\norm*{S_{\mathrm{comp}}^{-1}}^{\frac32}\,
n^3\,\tau^3
\le 1.
\end{align}
Therefore Eq.~\eqref{eq:2nd-trotter-error} follows.
\end{proof}
As in the first-order case, we move from one step error bound to one step implementation cost.
\begin{lemma}[One step implementation cost for $U_2$]\label{lemma:one-step-U2}
	Let $U_2(\tau)$ be the second-order Trotter operator defined in Eq.~\eqref{eq:U2}.
	Assume $U_2(\tau)$ is implemented by the circuit in Eq.~\eqref{eq:U2-implement}.
	Then one application of $U_2(\tau)$ uses
	\begin{align}
		3n+4
	\end{align}
	qubits and at most
	\begin{align}
		2(432n^2+378)
	\end{align}
	CNOT gates.
\end{lemma}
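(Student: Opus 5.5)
The plan is to reduce the statement to Lemma~\ref{lemma:one-step-U1} by reading off the structure of Eq.~\eqref{eq:U2-implement}. The qubit count is immediate. Every block $\mathcal{W}_{j}^{(\alpha)}(\tau/2)$ and $\mathcal{W}_{j'}^{(\alpha')\,\mathrm{rev}}(\tau/2)$ acts on the same 4-qubit state register and $3n$-qubit spatial register as in the first-order circuit, and no ancillas are introduced. Hence $N_{\mathrm{qubit}}=3n+4$.

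For the CNOT count, I would observe that $U_2(\tau)$ is the concatenation of two first-order-type sweeps.

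\emph{Forward sweep.} The first product in Eq.~\eqref{eq:U2-implement} is exactly $U_1(\tau/2)$ in the form of Eq.~\eqref{eq:U1-circuit}. Its CNOT count does not depend on the rotation angle $\theta_{jk}^{(\alpha)}(\tau/2)$, since the angles enter only through the single controlled rotation inside each $\crz_k^{1,\dots,k-1}$. So Lemma~\ref{lemma:one-step-U1} applies verbatim and gives at most $432n^2+378$ CNOTs.

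\emph{Reversed sweep.} The second product uses the same multiset of gadgets: for each $(\alpha,j,k)$, one projector-controlled $W_{jk}^{(\alpha)}(\tau/2)$, consisting of the ladder $G_k,G_k^\dagger$ and the $(k+3)$-controlled rotation. Only the order of $\alpha$, $j$, and $k$ is reversed. The per-term costs $N_{\mathrm{ladder}}(k)+N_{\mathrm{CRZ}}(k)$ are unchanged. The basis changes still cancel in adjacent pairs $V^{(\alpha)\dagger}V^{(\alpha)}$ within each fixed $\alpha$, because consecutive blocks with the same $\alpha'$ sit next to each other in the reversed order. This leaves at most one $V^{(\alpha)}$ and one $V^{(\alpha)\dagger}$ per $\alpha$, each costing $N_V\le 63$. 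Re-running the sum in the proof of Lemma~\ref{lemma:one-step-U1} with the order of summation permuted therefore gives the same bound $432n^2+378$.

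Adding the two sweeps gives at most $2(432n^2+378)$ CNOTs, as claimed. Further cancellations are possible: the middle pair where the sweeps meet ($\alpha=3$ on both sides), and the two half-steps on the middle terms, which could be merged into one. These would only lower the count, and since the lemma is an upper bound I would simply not use them.

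The main point needing care is the claim that reversing the $k$-order inside $\mathcal{W}^{\mathrm{rev}}$ does not break the cancellation or the per-term accounting. I would check this directly. The controlled-$W_{jk}$ factors for different $k$ each sit between the same $V^{(\alpha)\dagger}$ and $V^{(\alpha)}$ conjugation, so reordering them affects neither the conjugation structure nor the gate count of any individual factor.
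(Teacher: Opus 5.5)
Your proposal is correct and is the paper's argument: the first product in Eq.~\eqref{eq:U2-implement} is $U_1(\tau/2)$ and the second is its reversed-order copy with the same gates (still one surviving $V^{(\alpha)}$, $V^{(\alpha)\dagger}$ pair per $\alpha$), so the qubit count stays $3n+4$ and the CNOT count is twice that of Lemma~\ref{lemma:one-step-U1}; your checks on angle-independence and on the cancellation surviving the reversal make explicit what the paper leaves implicit. One caveat, which you inherit rather than introduce: if you actually re-run the sum in Lemma~\ref{lemma:one-step-U1}, you get $\sum_{k=1}^{n}(N_{\mathrm{ladder}}(k)+N_{\mathrm{CRZ}}(k))=\sum_{k=1}^{n}(18k+9)=9n^2+18n$, hence $432n^2+864n+378$ per sweep, so your doubling step is sound but the constant $2(432n^2+378)$ is only as reliable as that earlier bound, which drops the linear term.
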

\begin{proof}
The quantum circuit $U_2(\tau)$ given in Eq.~\eqref{eq:U2} for the second-order Trotter formula consists of two parts. One is the same circuit as the case of the first order Trotter formula for time $\tau/2$, i.e., $U_1(\tau/2)$ given by Eq.~\eqref{eq:U1-circuit}. The other is constructed as the reverse-order product of $U_1(\tau/2)$.
Therefore, the required CNOT gate count is twice the CNOT gate count that is needed for the first-order Trotter formula, so that each Trotter step needs $2(432n^2+378)$ CNOT gates.
\end{proof}
Combining Proposition~\ref{thm:2nd-trotter-error}, Lemma~\ref{lemma:one-step-U2}, and Lemma~\ref{lemma:error-to-gate-conversion}, we obtain the following second-order CNOT gate complexity bound.
\begin{theorem}(Second-order Trotter gate complexity: norm scaling)\label{thm:2nd-trotter-gates}
	Consider the Hamiltonian $\mathcal{H}$ in Eq.~\eqref{eq:H-decomposition-with-48n-components}.
	Using the second-order Trotter formula, the time evolution $e^{-i\mathcal{H}T}$ can be approximated within additive error $\epsilon$ (in operator norm) by a circuit using
	\begin{align}
		3n+4
	\end{align}
	qubits and at most
    \begin{align}\label{eq:2nd-trotter-gates}
	        \frac{2880^{\frac{3}{2}}T^{\frac{3}{2}}}{h^{\frac{3}{2}}\epsilon^{\frac{1}{2}}} \rho^{-\frac{3}{4}} \norm*{S_{\mathrm{comp}}^{-1}}^{\frac{3}{4}} n^{\frac{3}{2}}(432n^2+378).
    \end{align}
		Here, $2^n$ is the number of grid points per coordinate, $h$ is the grid interval, $\rho$ is the mass density, and $S_{\mathrm{comp}}$ is the compliance matrix.
\end{theorem}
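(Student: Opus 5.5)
The plan is to combine the three ingredients already established, in the same way as the proofs of Theorems~\ref{thm:1st-trotter-gates-norm-scaling} and \ref{thm:1st-trotter-gates}. The qubit count $3n+4$ comes directly from Lemma~\ref{lemma:one-step-U2}: the circuit in Eq.~\eqref{eq:U2-implement} acts only on the 4-qubit state register and the $3n$-qubit spatial register, with no ancillas.

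Next we read off the one step error constant from Proposition~\ref{thm:2nd-trotter-error}. It gives $\norm*{e^{-i\mathcal{H}\tau}-U_2(\tau)}\le C\tau^3$ with $p=3$ and
\begin{align}
C=\frac{2\cdot 1440^3}{h^3}\rho^{-\frac32}\norm*{S_{\mathrm{comp}}^{-1}}^{\frac32}n^3 .
\end{align}
We then invoke Lemma~\ref{lemma:error-to-gate-conversion} with $p=3$ and $N_{\mathrm{step}}=2(432n^2+378)$ from Lemma~\ref{lemma:one-step-U2}. This yields $m\ge (CT^3/\epsilon)^{1/2}$ and $N_{\mathrm{CNOT}}\lesssim (CT^3/\epsilon)^{1/2}\,2(432n^2+378)$.

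The remaining work is bookkeeping on the constant:
\begin{align}
(CT^3/\epsilon)^{1/2}=\sqrt2\,1440^{3/2}\,T^{3/2}h^{-3/2}\epsilon^{-1/2}\rho^{-3/4}\norm*{S_{\mathrm{comp}}^{-1}}^{3/4}n^{3/2}.
\end{align}
Multiplying by the factor $2$ from $N_{\mathrm{step}}$ gives the prefactor $2\sqrt2\,1440^{3/2}=(2\cdot1440)^{3/2}=2880^{3/2}$. This reproduces Eq.~\eqref{eq:2nd-trotter-gates}, with the ceiling dropped in the same $\backsimeq$ sense as in Lemma~\ref{lemma:error-to-gate-conversion}.

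The main obstacle is that Proposition~\ref{thm:2nd-trotter-error} holds only under the applicability condition $(10R\norm*{\mathcal{H}}\tau)^3\le1$ with $R=48n$. We must check that this condition holds for $\tau=T/m$ with the chosen $m$. Writing $x:=10R\norm*{\mathcal{H}}\tau$, the per-step error bound is $2x^3$. Since $m\ge1$, the choice of $m$ enforces $m\cdot 2x^3\le\epsilon$, so $2x^3\le\epsilon$ and hence $x^3\le 1$ whenever $\epsilon\le 2$. The case $\epsilon\le 2$ is the only nontrivial regime, because any two unitaries differ by at most $2$ in operator norm. If the condition turned out not to hold automatically, the fallback would be to take $m$ as the maximum of the bound above and $\lceil 10R\norm*{\mathcal{H}}T\rceil$. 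That term scales as $O(nT/h)$, and we would note that it is dominated in the stated regime.
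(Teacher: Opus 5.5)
Your proposal is correct and follows the paper's argument: you insert $C=\frac{2\cdot 1440^{3}}{h^{3}}\rho^{-3/2}\norm*{S_{\mathrm{comp}}^{-1}}^{3/2}n^{3}$ with $p=3$ and $N_{\mathrm{step}}=2(432n^2+378)$ into Lemma~\ref{lemma:error-to-gate-conversion}, and your bookkeeping $2\sqrt{2}\cdot 1440^{3/2}=2880^{3/2}$ reproduces Eq.~\eqref{eq:2nd-trotter-gates}. Your extra check of the applicability condition is also valid, and the paper's proof omits it: the choice of $m$ gives $mC\tau^3\le\epsilon$, and since $C$ uses an upper bound on $\norm*{\mathcal{H}}$, this yields $(10R\norm*{\mathcal{H}}\tau)^3\le C\tau^3/2\le\epsilon/2\le 1$ for $\epsilon\le 2$, so your fallback on $m$ is never needed.
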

\begin{proof}
By Proposition~\ref{thm:2nd-trotter-error},
\begin{align}
	\norm*{e^{-i\mathcal{H}\tau}-U_2(\tau)}\le C\tau^3, \\
	C = \frac{2\cdot 1440^{3}}{h^3} \rho^{-\frac{3}{2}} \norm*{S_{\mathrm{comp}}^{-1}}^{\frac{3}{2}}n^3.
\end{align}
By Lemma~\ref{lemma:one-step-U2}, one step uses $3n+4$ qubits and at most \(2\left(432n^2+378\right)\) CNOT gates.
Applying Lemma~\ref{lemma:error-to-gate-conversion} with \(N_{\mathrm{step}}=2\left(432n^2+378\right)\), we obtain Eq.~\eqref{eq:2nd-trotter-gates}.
\end{proof}
Compared with Eq.~\eqref{eq:1st-trotter-gates} in Theorem~\ref{thm:1st-trotter-error-norm-scaling}, Eq.~\eqref{eq:2nd-trotter-gates} improves the dependence on $T,\epsilon$ from $T^2/\epsilon$ to $T^{3/2}/\epsilon^{1/2}$, and the leading $n$-dependence from $O(n^4)$ to $O(n^{7/2})$.

\subsubsection{Comparison with classical time integration of the same semidiscrete system}
\label{sec:comparison-classical-same-ode}

To make the comparison as direct as possible, we compare the quantum algorithm with a classical explicit integrator applied to the same semidiscrete ODE defined by Eq.~\eqref{eq:discretize-ew}, rather than with a different PDE-level discretization. Since the seven zero-padded components are dynamically decoupled, it suffices to restrict the classical discussion to the physical nine-component sector. In Sec.~\ref{sec:preliminaries}, the corresponding semidiscrete quantum states were denoted by \(\ket*{\bm{w}_h(t)}\) and \(\ket*{\tilde{\bm{u}}_h(t)}\). In the present classical discussion, we use the same symbols without ket notation, namely \(\bm{w}_h(t)\) and \(\tilde{\bm{u}}_h(t)\), for the underlying coefficient vectors. Let \(N:=2^n\) denote the number of grid points per coordinate, so that \(N=\Theta(L/h)\). Removing the decoupled padding sector from Eqs.~\eqref{eq:A-definition} and \eqref{eq:B}, define
\begin{align}
\widetilde{A}
&:=
\sum_{\alpha=1}^{3}\widetilde{A}^{(\alpha)}\otimes D^{(\alpha)},
\qquad
\widetilde{B}
:=
\widetilde{B}_{\mathrm{cell}}\otimes I^{\otimes 3n},
\\
\widetilde{A}^{(\alpha)}
&:=
\begin{bmatrix}
O_{3\times 3} & C^{(\alpha)}\\
(C^{(\alpha)})^{\mathsf T} & O_{6\times 6}
\end{bmatrix},
\qquad
\widetilde{B}_{\mathrm{cell}}
:=
\begin{bmatrix}
\rho I_{3\times 3} & O_{3\times 6}\\
O_{6\times 3} & S_{\mathrm{comp}}
\end{bmatrix}.
\end{align}
The corresponding classical semidiscrete evolution is
\begin{align}
\dv{t}\bm{w}_h(t)=\widetilde{B}^{-1}\widetilde{A}\bm{w}_h(t),
\quad
\bm{w}_h(0)=\bm{w}_h^0,
\label{eq:classical-same-ode-main}
\end{align}
with \(\bm{w}_h(t)\in\mathbb{C}^{9N^3}\). Introducing the transformed variables
\begin{align}
\tilde{\bm{u}}_h(t):=\widetilde{B}^{1/2}\bm{w}_h(t),
\quad
\tilde{\bm{u}}_h(0)=:\tilde{\bm{u}}_h^0=\widetilde{B}^{1/2}\bm{w}_h^0,
\end{align}
we obtain
\begin{align}
\dv{t}\tilde{\bm{u}}_h(t)=K_h\tilde{\bm{u}}_h(t),
\label{eq:classical-Kh-main}
\end{align}
where
\begin{align}
K_h
&:=\widetilde{B}^{-1/2}\widetilde{A}\widetilde{B}^{-1/2}
=
\begin{bmatrix}
O & L_h\\
-L_h^* & O
\end{bmatrix},\\
L_h
&:=
\sum_{\alpha=1}^{3}
\rho^{-1/2}C^{(\alpha)}S_{\mathrm{comp}}^{-1/2}\otimes D^{(\alpha)}.
\label{eq:classical-Lh-main}
\end{align}

As a classical integrator for Eq.~\eqref{eq:classical-Kh-main}, we consider the partitioned leapfrog method, equivalently the St\"ormer--Verlet method, or Strang splitting,
\begin{align}
\Psi_\tau
:=
e^{\frac{\tau}{2}K_{1,h}}
e^{\tau K_{2,h}}
e^{\frac{\tau}{2}K_{1,h}},
\label{eq:classical-Psi-main}
\end{align}
with
\begin{align}
K_{1,h}
:=
\begin{bmatrix}
O & L_h\\
O & O
\end{bmatrix},
\qquad
K_{2,h}
:=
\begin{bmatrix}
O & O\\
-L_h^* & O
\end{bmatrix}.
\end{align}
Since \(K_{1,h}^2=K_{2,h}^2=O\), each split step is evaluated exactly. Writing
\begin{align}
\tilde{\bm{u}}_h(t)=
\begin{bmatrix}
\bm{q}(t)\\
\bm{r}(t)
\end{bmatrix},
\qquad
\bm{q}(t)\in\mathbb{C}^{3N^3},
\quad
\bm{r}(t)\in\mathbb{C}^{6N^3},
\end{align}
and, for discrete times \(t_m:=m\tau\), write
\begin{align}
\tilde{\bm{u}}_h^m:=\tilde{\bm{u}}_h(t_m)
=
\begin{bmatrix}
\bm{q}^m\\
\bm{r}^m
\end{bmatrix}.
\end{align}
Then one time step is
\begin{align}
\bm{q}^{m+\frac12}
&=
\bm{q}^m+\frac{\tau}{2}L_h\bm{r}^m,
\\
\bm{r}^{m+1}
&=
\bm{r}^m-\tau L_h^*\bm{q}^{m+\frac12},
\\
\bm{q}^{m+1}
&=
\bm{q}^{m+\frac12}+\frac{\tau}{2}L_h\bm{r}^{m+1}.
\label{eq:classical-update-main}
\end{align}

The detailed analysis is given in Appendix~\ref{app:classical-same-ode}. There we show that if
\begin{align}
\tau\|L_h\|\le \eta<2,
\end{align}
then \(\Psi_\tau\) is power-bounded:
\begin{align}
\|\Psi_\tau^m\|\le C_\eta,
\qquad
C_\eta:=\left(1-\frac{\eta^2}{4}\right)^{-1/2}.
\label{eq:classical-stability-main}
\end{align}
Moreover, if \(T=M\tau\) and \(\tau\|L_h\|\le \min\{1,\eta\}\), then the final-time operator-norm error satisfies
\begin{align}
\|e^{T K_h}-\Psi_\tau^M\|
\le
\frac{C_\eta}{2}T\tau^2\|L_h\|^3.
\label{eq:classical-global-error-main}
\end{align}
Therefore, \(\|e^{T K_h}-\Psi_\tau^M\|\le \epsilon\) is guaranteed by
\begin{align}
\tau
\le
\min\left\{
\frac{\eta}{\|L_h\|},
\sqrt{\frac{2\epsilon}{C_\eta T\|L_h\|^3}}
\right\},
\label{eq:classical-step-condition-main}
\end{align}
which yields
\begin{align}
m_{\mathrm{cl}}
=
\mathcal{O}\!\left(
\max\left\{
T\|L_h\|,
\sqrt{\frac{T^3\|L_h\|^3}{\epsilon}}
\right\}
\right)
\label{eq:classical-step-count-main}
\end{align}
time steps.

To convert this into a parameter-dependent bound, note that
\begin{align}
\|L_h\|
&\le
\sum_{\alpha=1}^{3}
\rho^{-1/2}\|C^{(\alpha)}\|\,
\|S_{\mathrm{comp}}^{-1/2}\|\,
\|D^{(\alpha)}\|\\
&\le
\frac{3}{h}\rho^{-1/2}\|S_{\mathrm{comp}}^{-1/2}\|\\
&=
\frac{3v}{h},
\label{eq:classical-Lh-bound-main}
\end{align}
where
\begin{align}
v:=\sqrt{\rho^{-1}\norm*{S_{\mathrm{comp}}^{-1}}}.
\end{align}
Furthermore, each time step applies \(L_h\) twice and \(L_h^*\) once, together with vector additions and scalar multiplications, so the arithmetic cost per step is \(\mathcal{O}(N^3)\). Consequently,
\begin{align}
W_{\mathrm{cl}}
&=
\mathcal{O}\!\left(
N^3
\max\left\{
\frac{Tv}{h},
\frac{T^{3/2}v^{3/2}}{h^{3/2}\epsilon^{1/2}}
\right\}
\right),
\label{eq:classical-total-cost-main}
\\
M_{\mathrm{cl}}
&=
\mathcal{O}(N^3).
\end{align}

Using Theorems~\ref{thm:1st-trotter-gates} and \ref{thm:2nd-trotter-gates} for the quantum algorithm, we obtain the asymptotic comparison in Table~\ref{tab:complexity}. The main structural gap remains exponential in spatial resolution: the classical integrator stores \(\mathcal{O}(N^3)\) degrees of freedom, whereas the quantum algorithm uses \(\mathcal{O}(\log N)\) qubits. 
Also, given that $N = \Theta (L / h)$, our quantum algorithm achieves a nearly cubic improvement in the bound on time complexity with respect to $N$, which suggests the potential for polynomial quantum speedup.
Note that this is a comparison based on the same semidiscretization rather than a full PDE-level comparison. Spatial discretization error, source terms, state preparation, readout, and fault-tolerance overhead are excluded.
\begin{table}[t]
\centering
\caption{Asymptotic complexity comparison for the same semidiscrete system. Here \(N=2^n=\Theta(L/h)\) is the number of grid points per coordinate, \(T\) is the simulation time, \(h\) is the grid spacing, \(v=\sqrt{\rho^{-1}\norm*{S_{\mathrm{comp}}^{-1}}}\), and \(\epsilon\) is the target operator-norm error bound for both the classical and quantum simulations.}
\label{tab:complexity}
\scriptsize
\setlength{\tabcolsep}{3pt}
\renewcommand{\arraystretch}{1.1}
\resizebox{\columnwidth}{!}{%
\begin{tabular}{@{}llcc@{}}
\toprule
Category & Method & Memory & Time \\
\midrule
Classical
& Partitioned leapfrog / Strang
& $\mathcal{O}(N^3)$
& $\mathcal{O}\!\left(
N^3
\max\left\{
\frac{Tv}{h},
\frac{T^{3/2}v^{3/2}}{h^{3/2}\epsilon^{1/2}}
\right\}
\right)$ \\
\midrule
\multirow{2}{*}{Quantum}
& 1st-order Trotter
& $\mathcal{O}(\log N)$
& $\mathcal{O}\!\left(
(\log N)^3\frac{T^2v^2}{h^2\epsilon}
\right)$ \\
& 2nd-order Trotter
& $\mathcal{O}(\log N)$
& $\mathcal{O}\!\left(
(\log N)^{7/2}\frac{T^{3/2}v^{3/2}}{h^{3/2}\epsilon^{1/2}}
\right)$ \\
\bottomrule
\end{tabular}%
}
\end{table}

\section{Numerical experiments}\label{sec:numerical-experiments}
In this section, we numerically validate that the proposed quantum circuit based on the first-order Trotter formula approximates the target real-time evolution generated by the Hamiltonian $\mathcal{H}$.
As a reference, we compute the exact evolution by direct matrix exponentiation and compare it with the quantum state produced by the Trotterized circuit.
We examine this agreement at two levels: the quantum state level through fidelity, and the physical field level through reconstructed variables $v_z$ and $\sigma_{zz}$.

The circuits were implemented in Qiskit (1.3.0)~\cite{qiskit2024}.
We consider the semidiscrete first-order formulation of the three-dimensional elastic wave equation introduced in Sec.~\ref{sec:preliminaries}.
Through this section, we assume a linear isotropic medium with mass density $\rho=1$, Young's modulus $E=0.646$, and Poisson ratio $\nu=0.255$.
We set total simulation time to $T=30$, the grid spacing to $h=1$, and the number of qubits used for each spatial coordinate to $n=5$.
Hence, the number of grid points per coordinate is $N=2^n$.

Define
\begin{align}
I_{\mathrm{c}} &:= \left\{\frac{N}{2}-1,\frac{N}{2}\right\}, &
I_{\mathrm{b}} &:= \left\{2,3,\dots,N-3\right\}.
\end{align}
We use the following initial states. The subscript ``pulse'' denotes a localized excitation supported on the central \(2\times 2\times 2\) grid block, while ``p'' and ``s'' denote initial states supported on the mid-plane \(j_x\in I_{\mathrm{c}}\), representing longitudinal (P-wave) and transverse (S-wave) excitations, respectively.
\begin{align}
\ket*{\bm{w}_{\mathrm{pulse}}}
&=
\frac{1}{C_{\mathrm{pulse}}}
\sum_{j_x\in I_{\mathrm{c}}}
\sum_{j_y\in I_{\mathrm{c}}}
\sum_{j_z\in I_{\mathrm{c}}}
\ket*{2}\ket*{j_x}\ket*{j_y}\ket*{j_z},
\label{eq:pulse-wave}\\
\ket*{\bm{w}_{\mathrm{p}}}
&=
\frac{1}{C_{\mathrm{p}}}
\sum_{j_x\in I_{\mathrm{c}}}
\sum_{j_y\in I_{\mathrm{b}}}
\sum_{j_z\in I_{\mathrm{b}}}
\ket*{2}\ket*{j_x}\ket*{j_y}\ket*{j_z},
\label{eq:p-wave}\\
\ket*{\bm{w}_{\mathrm{s}}}
&=
\frac{1}{C_{\mathrm{s}}}
\sum_{j_x\in I_{\mathrm{c}}}
\sum_{j_y\in I_{\mathrm{b}}}
\sum_{j_z\in I_{\mathrm{b}}}
\ket*{0}\ket*{j_x}\ket*{j_y}\ket*{j_z},
\label{eq:s-wave}
\end{align}
where \(C_{\mathrm{pulse}}, C_{\mathrm{p}}, C_{\mathrm{s}} \in \mathbb{C}\) are normalization constants.

For a fixed Trotter step size $\tau$, we compare the exact state
\begin{align}
\ket{\psi_{\mathrm{ref}}(t_m)}
=
e^{-i\mathcal{H}t_m}\ket{\psi_0},
\qquad
t_m:=m\tau,
\end{align}
with the Trotterized state
\begin{align}
\ket{\psi_{\mathrm{T}}(t_m)}
=
\left(U_1(\tau)\right)^m\ket{\psi_0},
\end{align}
where $m=0,1,\ldots,T/\tau$.
In fidelity comparison below, we use the step sizes
\begin{align}
\tau \in \{0.1,\,0.2,\,0.5,\,1.0\},
\end{align}
so that $T/\tau$ is an integer in every case.

As a state-level metric, we use the fidelity
\begin{align}
F(t_m)
=
\left|\braket{\psi_{\mathrm{ref}}(t_m)}{\psi_{\mathrm{T}}(t_m)}\right|^2.
\end{align}

\begin{figure*}[t]
\centering
\subfloat[Different Trotter step sizes\label{fig:fidelity_time_intervals}]{{\includegraphics[width=0.47\textwidth]{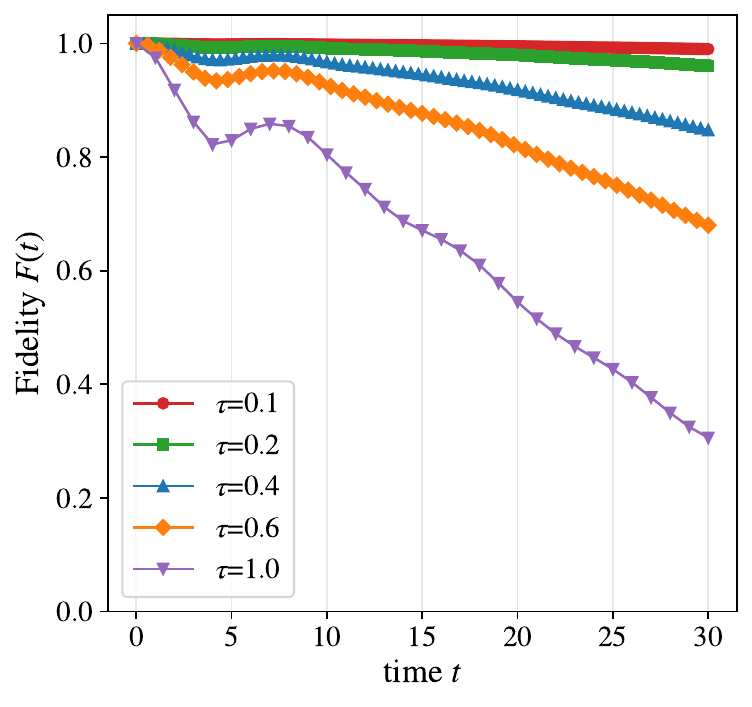}}}
\hfill
\subfloat[Different initial states ($\tau=0.1$)\label{fig:fidelity_3initstate}]{{\includegraphics[width=0.47\textwidth]{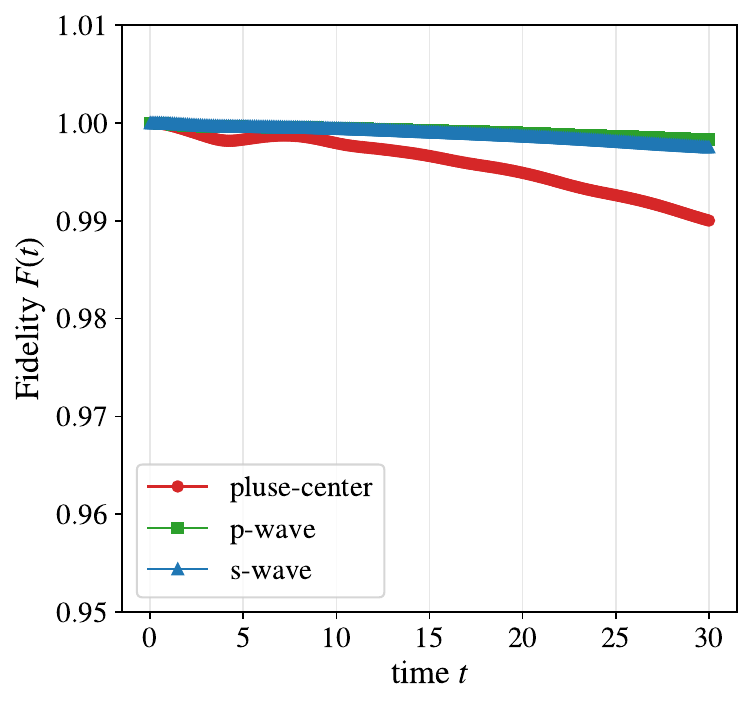}}}
\caption{Fidelity \(F(t)\) between the exact state and the Trotterized state.}
\label{fig:fidelity_time_intervals_3initstate}
\end{figure*}

Figure~\ref{fig:fidelity_time_intervals_3initstate} presents the state-level comparison through fidelity.
In Fig.~\ref{fig:fidelity_time_intervals}, we fix the initial state to the center pulse state in Eq.~\eqref{eq:pulse-wave} and vary the Trotter step size.
Since the only approximation here is the Trotter formula, the fidelity gets closer to $1$ as $\tau$ becomes smaller.

In Fig.~\ref{fig:fidelity_3initstate}, we fix the Trotter step size to $\tau=0.1$ and compare three initial states in Eqs.~\eqref{eq:pulse-wave}--\eqref{eq:s-wave}.
The fidelity remains close to $1$ in all three cases, while the center pulse state yields a slightly smaller fidelity than the primary wave and secondary wave initial states.
Since our analysis is based on the operator norm bounds, this difference is not reflected in the theoretical worst case estimate; rather it illustrates state dependent behavior observed in these numerical examples.

As a field level comparison, we map both states back to the physical variables through the inverse transformation associated with the Schr\"{o}dinger-form formulation, and compare the reconstructed fields $v_z$ and $\sigma_{zz}$.

\begin{figure*}[t]
	\centering
	\subfloat[$v_z$ (2D)\label{fig:vz_2d}]{{\includegraphics[width=0.84\textwidth]{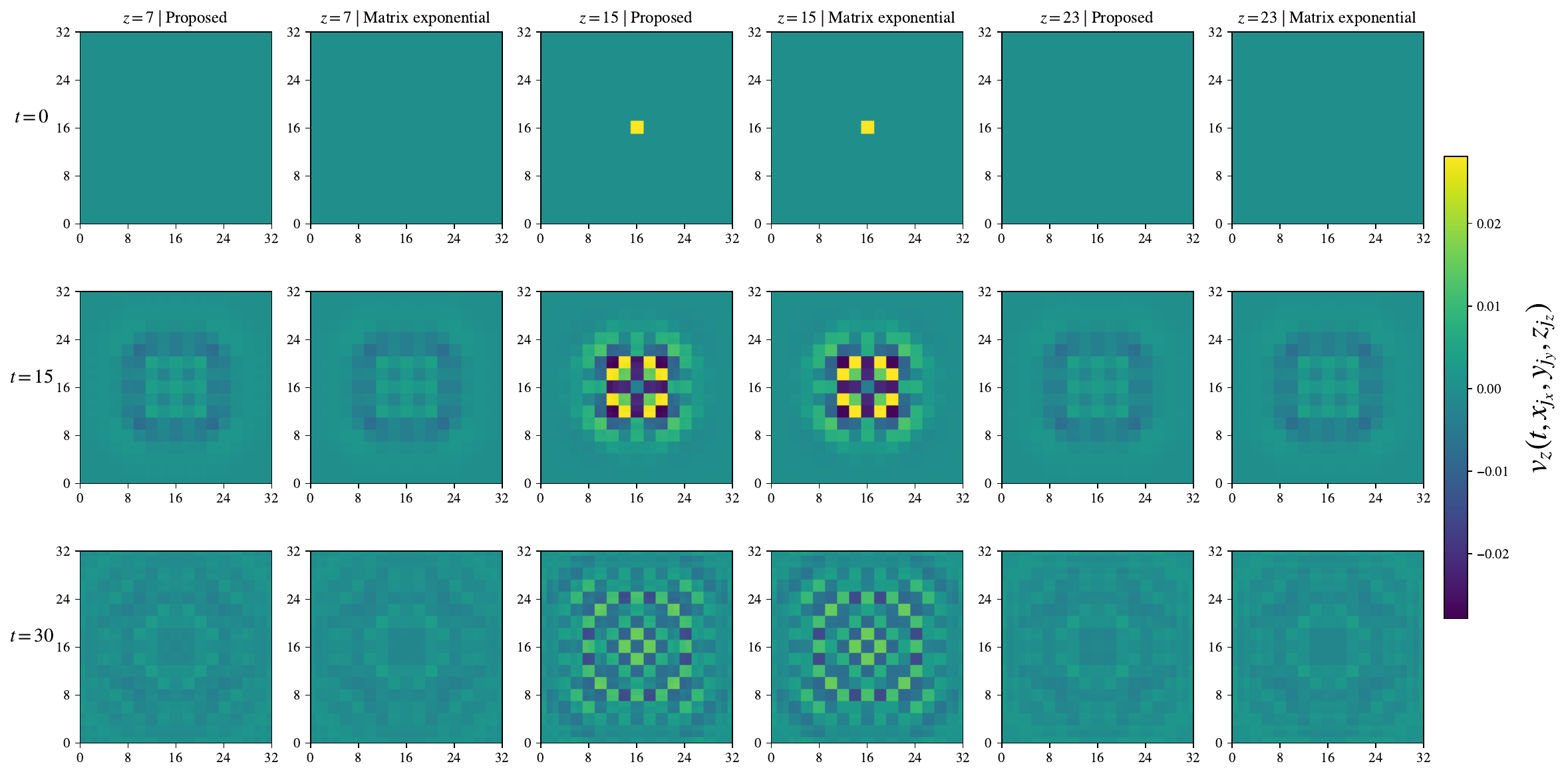}}}\\[0.4em]
	\subfloat[$\sigma_{zz}$ (2D)\label{fig:sigma_zz_2d}]{{\includegraphics[width=0.84\textwidth]{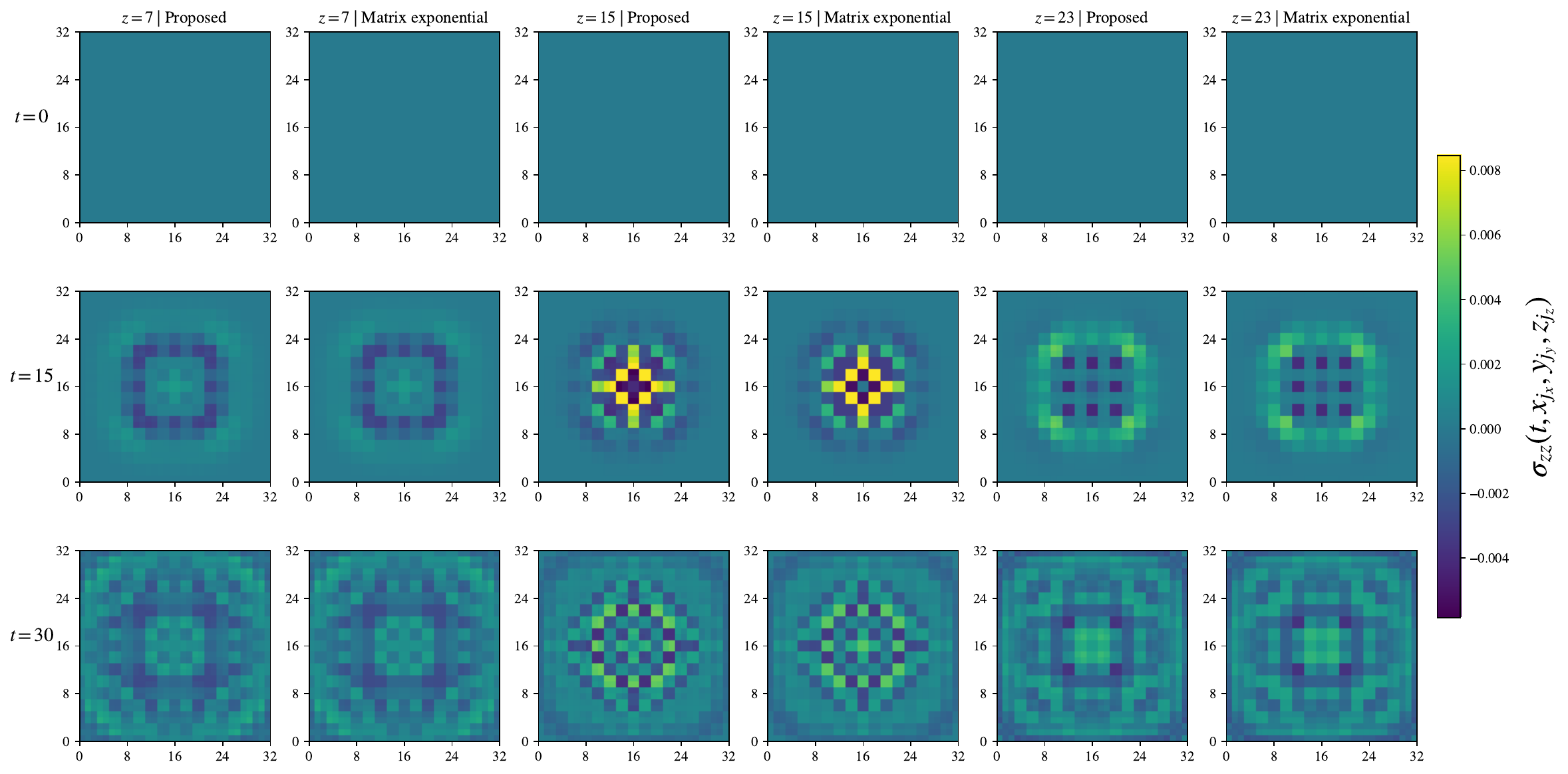}}}
		\caption{Comparison of the reconstructed fields \(v_z\) and \(\sigma_{zz}\) between the exact evolution and the first-order Trotter circuit for the center pulse initial state with \(\tau=0.1\). For visualization, the top 2\% of values are clipped so that lower amplitude structures are more clearly visible.}
	\label{fig:vz_sigma_zz_2d}
\end{figure*}

Figure~\ref{fig:vz_sigma_zz_2d} compares the reconstructed fields $v_z$ and $\sigma_{zz}$ obtained from the exact evolution and from the proposed Trotter circuit.
The snapshots are shown on a two-dimensional slice for several representative times under the center pulse initial state in Eq.~\eqref{eq:pulse-wave}.
The two sets of fields are in close agreement over the entire time interval, indicating that the proposed circuit reproduces not only the quantum state itself but also the physically relevant observables derived from it.

Taken together, the fidelity results in Fig.~\ref{fig:fidelity_time_intervals_3initstate} and the field-level agreement in Fig.~\ref{fig:vz_sigma_zz_2d} support the conclusion that the proposed quantum circuit provides an accurate approximation of the target real time evolution for the parameter range considered here.

\section{Conclusion}\label{sec:conclusion}
In this paper, we proposed an explicit quantum circuit implementation for simulating a first-order formulation of the three-dimensional elastic wave equation.
By rewriting the discretized elastic wave dynamics in Schr\"{o}dinger form, and applying an eigendecomposition to the finite dimensional coefficient matrix acting on the velocity and stress component space, we obtained a Hamiltonian representation that is compatible with tensor product based circuit construction.
Based on this representation, we construct first-order and second-order Trotter circuits for the time evolution operator.

We then analyzed the approximation error and gate complexity of the proposed circuits.
For the first-order Trotter formula, we derived both a norm scaling bound and a sharper commutator scaling bound, and showed that the latter improves the leading $n$-dependence of the CNOT complexity.
For the second-order Trotter formula, we obtained the corresponding one step error bound and global gate complexity estimate.
These results give explicit gate complexity bounds in terms of the discretization parameter, the simulation time, the target accuracy, and the material parameters.

We also compared the asymptotic resource scaling of the proposed quantum algorithm with that of a classical finite difference method.
While the classical method requires $\mathcal{O}(N^3)$ memory for a grid with $N$ points per coordinate, the quantum representation uses only $\mathcal{O}(\log{N})$ qubits.
The quantum time complexity is expressed in terms of circuit-level CNOT counts, with explicit dependence on $T,\epsilon$, and the material parameters through the Trotter error analysis.

Finally, numerical experiments confirmed that the proposed first-order Trotter circuit accurately approximates the exact time evolution for the parameter range considered here.
At the state-level, the fidelity with the exact solution remained close to 1, and at the field-level, the reconstructed quantities $v_z$ and $\sigma_{zz}$ showed close agreement with the reference solution.
These results support the validity of the proposed approach both theoretically and numerically.

Several extensions remain for future work.
One direction is to investigate whether the present structured decomposition, which separates the low dimensional component space from the discretized spatial degrees of freedom, can be adapted to an LCHS formulation.
Another independent direction for future work is to extend the present framework to spatially varying material parameters and more general geometries.
For sufficiently simple spatial dependence, such effects may be incorporated by slightly enlarging the finite dimensional register for the local component degrees of freedom, while preserving the separation from the spatial register induced by discretization.

\section*{Acknowledgements}
We thank Katsuhiro Endo, Tadashi Kadowaki, Shunya Minami, and Yohichi Suzuki for helpful discussions. 
A part of this work was performed for Council for Science, Technology and Innovation (CSTI),
Cross-ministerial Strategic Innovation Promotion Program (SIP), ``Promoting the application of advanced
quantum technology platforms to social issues''(Funding agency : QST).
\bibliography{ref}
\clearpage
\onecolumngrid
\appendix
\section{Auxiliary Lemmas and Deferred Proof for the First-Order Commutator Bound}\label{app:first-order-commutator}
In the following, $\trnorm{\cdot}$ denotes trace norm as $\trnorm{A}:= \tr \left[\sqrt{A^{\dagger}A}\right]$.
\begin{lemma}\label{thm:property-of-block-matrix}
    Let $G$ be an operator and define
    \begin{align}
	F =
	\begin{bmatrix}
		O & G\\
		G^{\dagger} & O
	\end{bmatrix},
    \end{align}
    where $G^{\dagger}$ denotes the adjoint of $G$.
    Then, the following relations hold:
    \begin{align}
	&\norm*{F}
	\leq
	\max\left\{\norm*{G},\norm*{G^{\dagger}}\right\},\label{eq:norm1}\\
	&\trnorm{F}
	= \trnorm{G}+\trnorm{G^{\dagger}},\label{eq:norm2}\\
	&\trnorm{F^2}
	=
	2\trnorm{G^{\dagger}G}.\label{eq:norm3}
    \end{align}
\end{lemma}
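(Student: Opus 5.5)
The plan is to compute $F^\dagger F$ and $F^2$ explicitly in block form and read off all three relations from them. Since $F$ is Hermitian ($F^\dagger=F$), we have
\begin{align}
F^\dagger F = F^2 =
\begin{bmatrix}
GG^{\dagger} & O\\
O & G^{\dagger}G
\end{bmatrix}.
\end{align}
This block-diagonal form is the only computation needed; everything else follows from elementary facts about block-diagonal positive semidefinite matrices.

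For Eq.~\eqref{eq:norm1}, I will take the square root of the operator norm of this block-diagonal matrix:
\begin{align}
\norm*{F}^2=\norm*{F^\dagger F}=\max\{\norm*{GG^\dagger},\norm*{G^\dagger G}\}=\max\{\norm*{G}^2,\norm*{G^\dagger}^2\}.
\end{align}
This actually gives equality, which in particular implies the stated inequality. Alternatively, one can bound $\norm*{Fx}^2=\norm*{Gx_2}^2+\norm*{G^\dagger x_1}^2$ for $x=(x_1,x_2)$.

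For Eq.~\eqref{eq:norm2}, I will use $\sqrt{F^\dagger F}=\mathrm{diag}(\sqrt{GG^\dagger},\sqrt{G^\dagger G})$, which holds because the square root of a block-diagonal PSD matrix acts blockwise. Taking traces gives $\tr\sqrt{GG^\dagger}+\tr\sqrt{G^\dagger G}$. By definition, the first term is $\trnorm{G^\dagger}$ (since $(G^\dagger)^\dagger G^\dagger=GG^\dagger$) and the second is $\trnorm{G}$. One could also note that both equal the sum of singular values of $G$, so $\trnorm{F}=2\trnorm{G}$. The only care needed here is that $G$ may be rectangular, as with $C^{(\alpha)}$, so the two diagonal blocks have different sizes. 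This does not matter, since the trace of a block-diagonal matrix is the sum of the block traces.

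For Eq.~\eqref{eq:norm3}, $F^2$ is already PSD and block diagonal, so $\trnorm{F^2}=\tr F^2=\tr(GG^\dagger)+\tr(G^\dagger G)$. By cyclicity of the trace this equals $2\tr(G^\dagger G)=2\trnorm{G^\dagger G}$, where the last step uses that $G^\dagger G\ge0$, so its trace norm equals its trace.

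There is no real obstacle in this argument. The only point requiring attention is justifying that $\sqrt{\cdot}$ of a block-diagonal PSD matrix is computed blockwise; this follows from uniqueness of the PSD square root, since the blockwise root is PSD and squares to the original matrix.
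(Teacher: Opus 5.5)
Your proposal is correct and takes essentially the same approach as the paper: both compute the block-diagonal form $F^{\dagger}F=F^{2}=\mathrm{diag}(GG^{\dagger},G^{\dagger}G)$ and read off the three relations blockwise. Your version is slightly more explicit than the paper's. You note that Eq.~\eqref{eq:norm1} holds with equality, you justify that the PSD square root acts blockwise, and you use $F^{2}\ge 0$ to identify $\trnorm{F^{2}}$ with the trace of $F^{2}$ directly, where the paper instead writes out $\sqrt{(F^{2})^{\dagger}F^{2}}$ blockwise.
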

\begin{proof}
Eq. (\ref{eq:norm1}) follows straightforwardly.
For Eq. (\ref{eq:norm2}), the block diagonal structure of $F^{\dagger}F$ ensures that
\begin{align}
		F^{\dagger}F
		=
		\begin{bmatrix}
			GG^{\dagger} & O\\
			O & G^{\dagger}G
		\end{bmatrix}.
\end{align}
Taking the trace yields the desired result.
Similarly, for Eq. (\ref{eq:norm3}), $\sqrt{\left(F^2\right)^{\dagger}\left(F^2\right)}$ have a block diagonal form:
\begin{align}
    \sqrt{\left(F^2\right)^{\dagger}\left(F^2\right)}
    =
    \begin{bmatrix}
        \sqrt{GG^{\dagger}GG^{\dagger}} & O\\
        O & \sqrt{G^{\dagger}GG^{\dagger}G}
    \end{bmatrix}.
\end{align}
Taking the trace norm yields
\begin{align}
    \trnorm{F^2}
    &=
    \tr \left[\sqrt{\left(F^2\right)^{\dagger}\left(F^2\right)}\right]\notag\\
    &=
    \tr\left[\sqrt{GG^{\dagger}GG^{\dagger}}\right] + \tr\left[\sqrt{G^{\dagger}GG^{\dagger}G}\right]\notag\\
    &=
    \tr\left[GG^{\dagger}\right] + \tr\left[G^{\dagger}G\right]\notag\\
    &=
    2\tr\left[G^{\dagger}G\right]\notag\\
    &=
    2\tr\left[\sqrt{G^{\dagger}GG^{\dagger}G}\right]\notag\\
    &=
    2\trnorm{G^{\dagger}G}.
\end{align}
\end{proof}
\begin{lemma}\label{lemma:BAB-bounded}
Let $A^{(\alpha)}$ and $B$ be defined by Eq.~\eqref{eq:A-definition} and Eq.~\eqref{eq:B}, respectively.
For each $\alpha$, the following bounds hold:
\begin{align}
	\norm\Big{B_{\mathrm{cell}}^{-\frac{1}{2}}A^{(\alpha)}B_{\mathrm{cell}}^{-\frac{1}{2}}}
	&\leq
	\rho^{-\frac{1}{2}}\norm\Big{\mathcal{S}^{-\frac{1}{2}}_{\mathrm{comp}}},\notag\\
		\trnormBig{B_{\mathrm{cell}}^{-\frac{1}{2}}A^{(\alpha)}B_{\mathrm{cell}}^{-\frac{1}{2}}}
	&=
	6\rho^{-\frac{1}{2}}\norm\Big{\mathcal{S}^{-\frac{1}{2}}_{\mathrm{comp}}},\notag\\
		\trnormBig{\left(B_{\mathrm{cell}}^{-\frac{1}{2}}A^{(\alpha)}B_{\mathrm{cell}}^{-\frac{1}{2}}\right)^2}
	&\leq
	6\rho^{-1}\norm\Big{\mathcal{S}^{-1}_{\mathrm{comp}}}\label{eq:normtri3}.
\end{align}
where $\rho$ is the mass density and $S_{\mathrm{comp}}$ is the compliance matrix.
\end{lemma}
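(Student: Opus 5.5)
The plan is to reduce everything to Lemma~\ref{thm:property-of-block-matrix} by computing the $16\times16$ matrix $B_{\mathrm{cell}}^{-1/2}A^{(\alpha)}B_{\mathrm{cell}}^{-1/2}$ explicitly in block form. Since $B_{\mathrm{cell}}$ is block diagonal with blocks $\rho I_{3\times3}$, $S_{\mathrm{comp}}$ and $I_{7\times7}$, and $S_{\mathrm{comp}}>0$, its inverse square root is block diagonal with blocks $\rho^{-1/2}I_{3\times3}$, $S_{\mathrm{comp}}^{-1/2}$ and $I_{7\times7}$. Multiplying out against Eq.~\eqref{eq:A-definition}, the padding block stays zero, and the physical $9\times9$ block has the form
\begin{align}
F^{(\alpha)}=
\begin{bmatrix}
O_{3\times3} & G^{(\alpha)}\\
G^{(\alpha)\dagger} & O_{6\times6}
\end{bmatrix},
\qquad
G^{(\alpha)}:=\rho^{-1/2}C^{(\alpha)}S_{\mathrm{comp}}^{-1/2}\in\mathbb{R}^{3\times6}.
\end{align}
Here we use that $S_{\mathrm{comp}}^{-1/2}$ is symmetric. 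Zero padding changes neither the operator norm nor the trace norm, so it suffices to bound $F^{(\alpha)}$.

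The key structural fact is that each $C^{(\alpha)}$ in Eq.~\eqref{eq:c-matrix} consists of three distinct standard basis row vectors. Hence $C^{(\alpha)}C^{(\alpha)\mathsf T}=I_{3\times3}$, so $C^{(\alpha)}$ is a co-isometry with $\norm*{C^{(\alpha)}}=1$. Submultiplicativity then gives $\norm*{G^{(\alpha)}}\le\rho^{-1/2}\norm*{S_{\mathrm{comp}}^{-1/2}}$. Combining this with Eq.~\eqref{eq:norm1} (and $\norm*{G^\dagger}=\norm*{G}$) yields the first inequality.

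For the trace norm, I will use Eq.~\eqref{eq:norm2} together with $\trnorm{G^\dagger}=\trnorm{G}$. This gives $\trnorm{F^{(\alpha)}}=2\trnorm{G^{(\alpha)}}$. Since $G^{(\alpha)}$ has rank at most $3$, its trace norm is a sum of at most three singular values, each at most $\norm*{G^{(\alpha)}}$. Therefore $\trnorm{F^{(\alpha)}}\le 6\rho^{-1/2}\norm*{S_{\mathrm{comp}}^{-1/2}}$.

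For the squared term, Eq.~\eqref{eq:norm3} gives $\trnorm{(F^{(\alpha)})^2}=2\trnorm{G^{(\alpha)\dagger}G^{(\alpha)}}$. Because $G^{(\alpha)\dagger}G^{(\alpha)}\ge0$, this equals $2\tr[G^{(\alpha)}G^{(\alpha)\dagger}]$, which is the sum of at most three squared singular values. It is therefore at most $6\norm*{G^{(\alpha)}}^2\le 6\rho^{-1}\norm*{S_{\mathrm{comp}}^{-1/2}}^2$. Finally, $\norm*{S_{\mathrm{comp}}^{-1/2}}^2=\norm*{S_{\mathrm{comp}}^{-1}}$ by positive definiteness (spectral calculus), which gives the third bound.

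The main obstacle is the middle relation, which the statement writes as an equality. The argument above naturally yields only ``$\le$'', with equality exactly when the three singular values of $C^{(\alpha)}S_{\mathrm{comp}}^{-1/2}$ all equal $\norm*{S_{\mathrm{comp}}^{-1/2}}$. That does not hold for the isotropic $S_{\mathrm{comp}}$ of Eq.~\eqref{eq:Scomp-of-isotropic} in general. I would therefore prove the inequality version. This is all that is used in Propositions~\ref{thm:1st-trotter-error-norm-scaling} and \ref{thm:1st-trotter-error}, since the bound enters those results only as an upper bound.
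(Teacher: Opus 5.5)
Your proposal is correct and follows essentially the same route as the paper: you reduce to the physical $9\times 9$ block with $G=\rho^{-1/2}C^{(\alpha)}\mathcal{S}_{\mathrm{comp}}^{-1/2}$ and apply Lemma~\ref{thm:property-of-block-matrix}. The only cosmetic difference is that you count at most three singular values of $G$, whereas the paper uses $\trnorm{XY}\le\norm*{X}\trnorm{Y}$ with $\trnorm{C^{(\alpha)}}=3$. You are also right that the middle relation can only be established as ``$\le$'': the paper's own chain likewise yields an inequality there, the stated equality fails in general (for the isotropic $\mathcal{S}_{\mathrm{comp}}$ it holds only at $\nu=0$), and the upper bound is all that the subsequent propositions use.
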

\begin{proof}
We first rewrite
\begin{align}
	B_{\mathrm{cell}}^{-\frac{1}{2}}A^{(\alpha)}B_{\mathrm{cell}}^{-\frac{1}{2}}
	=
	\begin{bmatrix}
		\widetilde{B}_{\mathrm{cell}}^{-\frac{1}{2}}\widetilde{A}^{(\alpha)}\widetilde{B}_{\mathrm{cell}}^{-\frac{1}{2}} & O\\
		O & O
	\end{bmatrix},
\end{align}
where
\begin{align}
	\widetilde{B}_{\mathrm{cell}}=
	\begin{bmatrix}
		\rho I_{3\times 3} & O_{3\times 6}\\
		O_{6\times 3} & \mathcal{S}_{\mathrm{comp}}
	\end{bmatrix},
	\qquad
	\widetilde{A}^{(\alpha)}=
	\begin{bmatrix}
		O_{3\times 3} & C^{(\alpha)}\\
		(C^{(\alpha)})^{T} & O_{6 \times 6}
	\end{bmatrix}.
\end{align}
Hence
\begin{align}
	\norm\Big{B_{\mathrm{cell}}^{-\frac{1}{2}}A^{(\alpha)}B_{\mathrm{cell}}^{-\frac{1}{2}}}
	&=
	\norm\Big{\widetilde{B}_{\mathrm{cell}}^{-\frac{1}{2}}\widetilde{A}^{(\alpha)}\widetilde{B}_{\mathrm{cell}}^{-\frac{1}{2}}},\notag\\
	\trnormBig{B_{\mathrm{cell}}^{-\frac{1}{2}}A^{(\alpha)}B_{\mathrm{cell}}^{-\frac{1}{2}}}
	&=
	\trnormBig{\widetilde{B}_{\mathrm{cell}}^{-\frac{1}{2}}\widetilde{A}^{(\alpha)}\widetilde{B}_{\mathrm{cell}}^{-\frac{1}{2}}},\notag\\
	\trnormBig{\left(B_{\mathrm{cell}}^{-\frac{1}{2}}A^{(\alpha)}B_{\mathrm{cell}}^{-\frac{1}{2}}\right)^2}
	&=
	\trnormBig{\left(\widetilde{B}_{\mathrm{cell}}^{-\frac{1}{2}}\widetilde{A}^{(\alpha)}\widetilde{B}_{\mathrm{cell}}^{-\frac{1}{2}}\right)^2}.
\label{eq:tilde-onoff-trnorm2}
\end{align}
By direct multiplication,
\begin{align}
	\widetilde{B}_{\mathrm{cell}}^{-\frac{1}{2}}\widetilde{A}^{(\alpha)}\widetilde{B}_{\mathrm{cell}}^{-\frac{1}{2}}
	=
	\begin{bmatrix}
		O & G\\
		G^{\dagger} & O
	\end{bmatrix},
	\qquad
	G=\rho^{-\frac{1}{2}}C^{(\alpha)}\mathcal{S}_{\mathrm{comp}}^{-\frac{1}{2}}.
\end{align}
Applying Lemma~\ref{thm:property-of-block-matrix}, together with
\(\trnorm{AB}\le \norm*{A}\trnorm{B}\) and \(\trnorm{AB}\le \trnorm{A}\norm*{B}\), we obtain
\begin{align}
	\norm\Big{\widetilde{B}_{\mathrm{cell}}^{-\frac{1}{2}}\widetilde{A}^{(\alpha)}\widetilde{B}_{\mathrm{cell}}^{-\frac{1}{2}}}
	&\le
	\max\left\{\norm\Big{\rho^{-\frac{1}{2}}C^{(\alpha)}\mathcal{S}_{\mathrm{comp}}^{-\frac{1}{2}}},\norm\Big{\mathcal{S}_{\mathrm{comp}}^{-\frac{1}{2}}(C^{(\alpha)})^T\rho^{-\frac{1}{2}}}\right\}\notag\\
	&\le
	\norm\Big{\rho^{-\frac{1}{2}}}\norm\Big{\mathcal{S}_{\mathrm{comp}}^{-\frac{1}{2}}}
	\max\left\{\norm\Big{C^{(\alpha)}},\norm\Big{(C^{(\alpha)})^T}\right\}\notag\\
	&\le
	\rho^{-\frac{1}{2}}\norm\Big{\mathcal{S}_{\mathrm{comp}}^{-\frac{1}{2}}},\notag\\
	\trnormBig{\widetilde{B}_{\mathrm{cell}}^{-\frac{1}{2}}\widetilde{A}^{(\alpha)}\widetilde{B}_{\mathrm{cell}}^{-\frac{1}{2}}}
	&=
	\trnormBig{\rho^{-\frac{1}{2}}C^{(\alpha)}\mathcal{S}_{\mathrm{comp}}^{-\frac{1}{2}}}
	+\trnormBig{\mathcal{S}_{\mathrm{comp}}^{-\frac{1}{2}}(C^{(\alpha)})^{T}\rho^{-\frac{1}{2}}}\notag\\
	&\le
	\norm\Big{\rho^{-\frac{1}{2}}}\norm\Big{\mathcal{S}_{\mathrm{comp}}^{-\frac{1}{2}}}
	\left(\trnormBig{C^{(\alpha)}}+\trnormBig{(C^{(\alpha)})^T}\right)\notag\\
	&\le
	6\rho^{-\frac{1}{2}}\norm\Big{\mathcal{S}_{\mathrm{comp}}^{-\frac{1}{2}}},\notag\\
	\trnormBig{\left(\widetilde{B}_{\mathrm{cell}}^{-\frac{1}{2}}\widetilde{A}^{(\alpha)}\widetilde{B}_{\mathrm{cell}}^{-\frac{1}{2}}\right)^2}
	&=
	2\trnormBig{\mathcal{S}_{\mathrm{comp}}^{-\frac{1}{2}}(C^{(\alpha)})^{T}\rho^{-1}C^{(\alpha)}\mathcal{S}_{\mathrm{comp}}^{-\frac{1}{2}}}\notag\\
	&\le
	2\norm\Big{\rho^{-\frac{1}{2}}}^2
	\norm\Big{\mathcal{S}_{\mathrm{comp}}^{-\frac{1}{2}}}^2
	\norm\Big{C^{(\alpha)}}\trnormBig{(C^{(\alpha)})^{T}}\notag\\
	&\le
	6\rho^{-1}\norm\Big{\mathcal{S}_{\mathrm{comp}}^{-1}}.
\end{align}
Combining this with Eq.~\eqref{eq:tilde-onoff-trnorm2} proves the lemma.
\end{proof}
\begin{proof}[Proof of Proposition~\ref{thm:1st-trotter-error}]	
From Proposition~\ref{prop:frist-order-trotter-error-original}, applying the triangle inequality gives
\begin{align}\label{eq:t-error}
	\norm*{U_1(\tau)- e^{-i\mathcal{H}\tau}}
	\leq
	\frac{\tau^2}{2}
	\left[
	\sum_{\alpha=1}^{3}\Xi_{\alpha}^{\mathrm{same}}
	+
	\sum_{\alpha<\beta}\Xi_{\alpha\beta}^{\mathrm{cross}}
	\right],
\end{align}
where
\begin{align}
\Xi_{\alpha}^{\mathrm{same}}
&:=
\sum_{j=0}^{15}\sum_{k=1}^{n}
\norm\Bigg{\left[H^{(\alpha)}_{jk},
\sum_{l=1}^{15}\sum_{m=1}^{n}H^{(\alpha)}_{lm}
+\sum_{m=k+1}^{n}H^{(\alpha)}_{jm}\right]},\\
\Xi_{\alpha\beta}^{\mathrm{cross}}
&:=
\sum_{j=0}^{15}\sum_{k=1}^{n}
\norm\Bigg{\left[H^{(\alpha)}_{jk},
\sum_{l=0}^{15}\sum_{m=1}^{n}H^{(\beta)}_{lm}\right]}.
\end{align}
We first evaluate $\Xi_{\alpha}^{\mathrm{same}}$.
Using $H_{jk}^{(\alpha)}=P_j^{(\alpha)}\otimes \frac{i\lambda_j^{(\alpha)}}{2h}S_k^{(\alpha)}$
and
\begin{align}
[A_1\otimes B_1, A_2\otimes B_2]
=
[A_1,A_2]\otimes B_1B_2 + A_2A_1\otimes [B_1,B_2],
\end{align}
we obtain
\begin{align}
\left[H^{(\alpha)}_{jk},H^{(\alpha)}_{lm}\right]
=
-\frac{\lambda_j^{(\alpha)}\lambda_l^{(\alpha)}}{4h^2}
\left(
\left[P_j^{(\alpha)},P_l^{(\alpha)}\right]\otimes S_k^{(\alpha)}S_m^{(\alpha)}
+
P_l^{(\alpha)}P_j^{(\alpha)}\otimes\left[S_k^{(\alpha)},S_m^{(\alpha)}\right]
\right).
\end{align}
Since $P_j^{(\alpha)} = \ketbra*{\phi_j^{(\alpha)}}$ are orthogonal rank one projectors, $[P_j^{(\alpha)},P_l^{(\alpha)}]=0$ and $P_j^{(\alpha)}P_l^{(\alpha)} = \delta_{jl}P_j^{(\alpha)}$.
Hence 
\begin{align}
\left[H^{(\alpha)}_{jk},H^{(\alpha)}_{lm}\right]
=
-\delta_{jl}\frac{(\lambda_j^{(\alpha)})^2}{4h^2}
P_j^{(\alpha)}\otimes\left[S_k^{(\alpha)},S_m^{(\alpha)}\right].
\label{eq:h-exchange-same-alpha}
\end{align}
From the structure of $S_k^{(\alpha)}$
(Ref.~\cite{Hamiltoniansimulationhyperbolicpartialdifferentialequationsscalablequantumcircuits2024}, Eq.~(42)), we have
\begin{align}
	[S_k^{(\alpha)}, S_m^{(\alpha)}] = 0,
\end{align}
for $n>m>k>1$, and
\begin{align}
	\norm*{[S_k^{(\alpha)}, S_m^{(\alpha)}]}=1
\end{align}
for $n>m>k=1$, so that the only nonzero contributions are the pair with exactly one of $(k,m)$ equal to $1$, and
\begin{align}
\norm*{[H^{(\alpha)}_{j1},H^{(\alpha)}_{jm}]}
=
\frac{1}{4h^2}(\lambda_j^{(\alpha)})^2
\qquad (m\ge 2).
\end{align}
Therefore
\begin{align}
\Xi_{\alpha}^{\mathrm{same}}
&=
\sum_{j=0}^{15}\sum_{k=1}^{n}
\norm\Bigg{\left[H^{(\alpha)}_{jk},
\sum_{l=1}^{15}\sum_{m=1}^{n}H^{(\alpha)}_{lm}
+\sum_{m=k+1}^{n}H^{(\alpha)}_{jm}\right]}\\
&=
\sum_{j=0}^{15}\sum_{k=1}^{n}
\norm\Bigg{\sum_{m=k+1}^{n}\left[H^{(\alpha)}_{jk},H^{(\alpha)}_{jm}\right]}\notag\\
&=
\sum_{j=0}^{15}
\norm\Bigg{\sum_{m=2}^{n}\left[H^{(\alpha)}_{j1},H^{(\alpha)}_{jm}\right]}
\le
\sum_{j=0}^{15}\sum_{m=2}^{n}
\norm*{\left[H^{(\alpha)}_{j1},H^{(\alpha)}_{jm}\right]}\notag\\
&=
\frac{n-1}{4h^2}\sum_{j=0}^{15}\abs*{\lambda_j^{(\alpha)}}^2
=
\frac{n-1}{4h^2}
\trnormBigg{\left(B_{\mathrm{cell}}^{-1/2}A^{(\alpha)}B_{\mathrm{cell}}^{-1/2}\right)^2}.
\label{eq:normsum1}
\end{align}
Next, we evaluate $\Xi_{\alpha\beta}^{\mathrm{cross}}$ for $\alpha\neq \beta$.
Since $S_k^{(\alpha)}$ and $S_m^{(\beta)}$ act on different axis registers, they commute: $[S_k^{(\alpha)},S_m^{(\beta)}]=0$.
Hence
\begin{align}
\left[H^{(\alpha)}_{jk},H^{(\beta)}_{lm}\right]
=
-\frac{1}{4h^2}
\left[\lambda_j^{(\alpha)}\ketbra*{\phi_j^{(\alpha)}},
\lambda_l^{(\beta)}\ketbra*{\phi_l^{(\beta)}}\right]
\otimes
S_k^{(\alpha)}S_m^{(\beta)}.
\end{align}
So
\begin{align}
\Xi_{\alpha\beta}^{\mathrm{cross}}
&=
\sum_{j=0}^{15}\sum_{k=1}^{n}
\norm\Bigg{\sum_{l=0}^{15}\sum_{m=1}^{n}\left[H^{(\alpha)}_{jk},H^{(\beta)}_{lm}\right]}\notag\\
&=
\sum_{j=0}^{15}\frac{1}{4h^2}\sum_{k=1}^{n}
\norm\Bigg{
\left[\lambda_j^{(\alpha)}\ketbra*{\phi_j^{(\alpha)}},
B_{\mathrm{cell}}^{-1/2}A^{(\beta)}B_{\mathrm{cell}}^{-1/2}\right]
\otimes
S_k^{(\alpha)}\!\left(\sum_{m=1}^{n}S_m^{(\beta)}\right)
}\notag\\
&=
\sum_{j=0}^{15}\frac{1}{4h^2}\sum_{k=1}^{n}
\norm\Bigg{
\left[\lambda_j^{(\alpha)}\ketbra*{\phi_j^{(\alpha)}},
B_{\mathrm{cell}}^{-1/2}A^{(\beta)}B_{\mathrm{cell}}^{-1/2}\right]}
\norm*{S_k^{(\alpha)}}\norm*{\sum_{m=1}^{n}S_m^{(\beta)}}\notag\\
&\le
\sum_{j=0}^{15}\sum_{k=1}^{n}\frac{1}{2h^2}
\norm\Bigg{
\left[\lambda_j^{(\alpha)}\ketbra*{\phi_j^{(\alpha)}},
B_{\mathrm{cell}}^{-1/2}A^{(\beta)}B_{\mathrm{cell}}^{-1/2}\right]}\notag\\
&\le
\sum_{j=0}^{15}\sum_{k=1}^{n}\frac{1}{h^2}\abs*{\lambda_j^{(\alpha)}}
\norm*{B_{\mathrm{cell}}^{-1/2}A^{(\beta)}B_{\mathrm{cell}}^{-1/2}}\notag\\
&=
\frac{n}{h^2}
\trnormBigg{B_{\mathrm{cell}}^{-1/2}A^{(\alpha)}B_{\mathrm{cell}}^{-1/2}}
\norm*{B_{\mathrm{cell}}^{-1/2}A^{(\beta)}B_{\mathrm{cell}}^{-1/2}}.
\label{eq:normsum2}
\end{align}
In the second equality, we use Eq.~\ref{eq:eigendecomposition}.
Also, we used
\(
\norm*{S_k^{(\alpha)}}=1
\)
(Ref.~\cite{Hamiltoniansimulationhyperbolicpartialdifferentialequationsscalablequantumcircuits2024}, Eq.~(34)),
\(
\norm*{\sum_{m=1}^{n}S_m^{(\beta)}}\le 2
\)
(Ref.~\cite{LectureNotesQuantumAlgorithmsScientificComputation2022}, Proposition~4.12),
and
\(
\norm*{[X,Y]}\le 2\norm*{X}\norm*{Y}
\).

Applying Lemma~\ref{lemma:BAB-bounded} to Eqs.~\eqref{eq:normsum1} and \eqref{eq:normsum2},
\begin{align}
\Xi_{\alpha}^{\mathrm{same}}
&\le
\frac{n-1}{4h^2}\cdot
6\rho^{-1}\norm*{\mathcal{S}_{\mathrm{comp}}^{-1}},\\
\Xi_{\alpha\beta}^{\mathrm{cross}}
&\le
\frac{n}{h^2}\cdot
6\rho^{-1}\norm*{\mathcal{S}_{\mathrm{comp}}^{-1}}.
\end{align}
Substituting these into Eq.~\eqref{eq:t-error} gives
\begin{align}
\norm*{U_1(\tau)- e^{-i\mathcal{H}\tau}}
&\le
\frac{\tau^2}{2}
\left[
3\cdot\frac{n-1}{4h^2}\cdot 6\rho^{-1}\norm*{\mathcal{S}_{\mathrm{comp}}^{-1}}
+
3\cdot\frac{n}{h^2}\cdot 6\rho^{-1}\norm*{\mathcal{S}_{\mathrm{comp}}^{-1}}
\right]\notag\\
&=
\frac{9\tau^2}{4h^2}\rho^{-1}\norm*{\mathcal{S}_{\mathrm{comp}}^{-1}}(5n-1),
\end{align}
which is Eq.~\eqref{eq:1st-trotter-error}.
\end{proof}

\section{Operator Norm of the Isotropic Compliance Matrix}\label{app:isotropic-compliance}
\begin{proposition}[Operator norm of $S_{\mathrm{comp}}^{-1}$ under isotropic assumption]
Assume a linear isotropic medium.
Let $S_{\mathrm{comp}}$ be the compliance matrix in Eq.~\eqref{eq:Scomp-of-isotropic}, i.e.,
\begin{align}
    S_{\mathrm{comp}}=\frac{1}{E}
    \begin{pmatrix}
        1 & -\nu & -\nu &0&0&0\\
        -\nu & 1 & -\nu &0&0&0\\
        -\nu & -\nu & 1 &0&0&0\\
        0&0&0& 1+\nu &0&0\\
        0&0&0& 0 &1+\nu&0\\
        0&0&0& 0 &0&1+\nu
    \end{pmatrix},
\end{align}
with $ E>0,\,-1<\nu<1/2$.
Then
\begin{align}
    \norm*{S_{\mathrm{comp}}^{-1}}
    =
    \begin{cases}
        \dfrac{E}{1-2\nu}, & 0\le \nu < \dfrac12,\\
        \dfrac{E}{1+\nu}, & -1<\nu<0.
    \end{cases}
\end{align}
\end{proposition}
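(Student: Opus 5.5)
The approach is to diagonalize $S_{\mathrm{comp}}$ explicitly and then read off the largest eigenvalue of its inverse. Since $S_{\mathrm{comp}}$ is real symmetric, $S_{\mathrm{comp}}^{-1}$ is also symmetric, and its operator norm equals $\max_i 1/|\mu_i|$ over the eigenvalues $\mu_i$ of $S_{\mathrm{comp}}$. The matrix is block diagonal: a $3\times3$ normal-stress block $\frac{1}{E}M$ with $M=(1+\nu)I_3-\nu J_3$, where $J_3$ is the all-ones matrix, and a $3\times3$ shear block $\frac{1+\nu}{E}I_3$.

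First, compute the spectrum of $M$. The matrix $J_3$ has eigenvalue $3$ on the vector $(1,1,1)^{\mathsf T}$ and eigenvalue $0$ on its two-dimensional orthogonal complement. Hence the eigenvalues of $M$ are $1-2\nu$, with multiplicity one, and $1+\nu$, with multiplicity two. Together with the shear block, the eigenvalues of $S_{\mathrm{comp}}$ are therefore $\frac{1-2\nu}{E}$, with multiplicity one, and $\frac{1+\nu}{E}$, with multiplicity five. For $-1<\nu<1/2$ and $E>0$, both values are strictly positive, so $S_{\mathrm{comp}}>0$ is invertible.

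Consequently, the eigenvalues of $S_{\mathrm{comp}}^{-1}$ are $\frac{E}{1-2\nu}$ and $\frac{E}{1+\nu}$. Because the matrix is positive definite, its operator norm is the larger of the two:
\begin{align}
\norm*{S_{\mathrm{comp}}^{-1}}=\max\left\{\frac{E}{1-2\nu},\frac{E}{1+\nu}\right\}.
\end{align}

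To finish, compare the denominators. We have $1-2\nu\le 1+\nu$ if and only if $\nu\ge 0$. So for $0\le\nu<1/2$ the maximum is $E/(1-2\nu)$, and for $-1<\nu<0$ it is $E/(1+\nu)$. At $\nu=0$ the two expressions coincide, so the case boundary is consistent.

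There is no real obstacle here. The only points needing care are that the symmetric structure justifies equating the norm with the extreme eigenvalue, and that the positivity conditions on $\nu$ keep the inverse well defined. The equivalent form $\max\{3K,2\mu\}$ stated in the remark then follows by substituting $K=E/(3(1-2\nu))$ and $\mu=E/(2(1+\nu))$.
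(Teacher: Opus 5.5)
Your proposal is correct and follows essentially the same route as the paper: split $S_{\mathrm{comp}}$ into blocks, write the normal-stress block as $\frac{1+\nu}{E}I_3-\frac{\nu}{E}J$, use $\spec J=\{3,0,0\}$ to get eigenvalues $\frac{1-2\nu}{E}$ (once) and $\frac{1+\nu}{E}$ (five times), and take the largest eigenvalue of the positive definite inverse. Your explicit comparison $1-2\nu\le 1+\nu \iff \nu\ge 0$ is the step the paper leaves implicit when it reads off the two cases.
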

\begin{proof}
For a square matrix $M$, let $\spec{M}$ denote its spectrum (set of eigenvalues).
Write
\begin{align}
S_{\mathrm{comp}}
=
\begin{bmatrix}
S_{\mathrm{comp}}^{0} & 0\\
0 & \dfrac{1+\nu}{E}I_3
\end{bmatrix},
\qquad
S_{\mathrm{comp}}^{0}
=
\frac{1+\nu}{E}I_3-\frac{\nu}{E}J,
\end{align}
where $J$ is the $3\times 3$ all ones matrix.
Since $\spec{J}=\{3,0,0\}$, we obtain
\begin{align}
\spec(S_{\mathrm{comp}}^{0})
=
\left\{
\frac{1-2\nu}{E},
\frac{1+\nu}{E},
\frac{1+\nu}{E}
\right\}.
\end{align}
Hence
\begin{align}
\spec(S_{\mathrm{comp}})
=
\left\{
\frac{1-2\nu}{E},
\underbrace{\frac{1+\nu}{E},\ldots,\frac{1+\nu}{E}}_{5\ \text{times}}
\right\},
\end{align}
So
\begin{align}
\spec(S_{\mathrm{comp}}^{-1})
=
\left\{
\frac{E}{1-2\nu},
\underbrace{\frac{E}{1+\nu},\ldots,\frac{E}{1+\nu}}_{5\ \text{times}}
\right\}.
\end{align}
Because $S_{\mathrm{comp}}^{-1}$ is real symmetric positive definite, its operator norm equals its largest eigenvalue.
Therefore
\begin{align}
\norm*{S_{\mathrm{comp}}^{-1}}
=
\begin{cases}
\dfrac{E}{1-2\nu}, & 0\le \nu < \dfrac12,\\
\dfrac{E}{1+\nu},  & -1<\nu<0.
\end{cases}
\end{align}
\end{proof}

\section{Classical time integration of the same semidiscrete system}
\label{app:classical-same-ode}

This appendix provides the details behind the classical comparison in Sec.~\ref{sec:comparison-classical-same-ode}. We work throughout on the physical nine component sector introduced there. Define
\begin{align}
\mathcal X:=\mathbb{C}^{3N^3},
\qquad
\mathcal Y:=\mathbb{C}^{6N^3},
\qquad
\mathcal Z:=\mathcal X\oplus \mathcal Y.
\end{align}
Then any transformed state vector is written as
\begin{align}
\bm{z}=
\begin{bmatrix}
\bm{q}\\
\bm{r}
\end{bmatrix}
\in \mathcal Z,
\qquad
\bm{q}\in \mathcal X,
\quad
\bm{r}\in \mathcal Y,
\end{align}
and \(L_h:\mathcal Y\to\mathcal X\) is given by Eq.~\eqref{eq:classical-Lh-main}.

\begin{proposition}[Block structure, norm identity, and norm preservation]
\label{prop:classical-Kh-app}
The operator \(K_h\) in Eq.~\eqref{eq:classical-Kh-main} is anti-Hermitian and satisfies
\begin{align}
\|K_h\|=\|L_h\|.
\label{eq:classical-Kh-Lh-same}
\end{align}
Moreover,
\begin{align}
\|L_h\|
\le
\frac{3}{h}\rho^{-1/2}\|S_{\mathrm{comp}}^{-1/2}\|
=
\frac{3v}{h}.
\label{eq:classical-Lh-bound-app}
\end{align}
Consequently, the exact semidiscrete flow preserves the norm:
\begin{align}
\|\tilde{\bm{u}}_h(t)\|=\|\tilde{\bm{u}}_h(0)\|.
\end{align}
\end{proposition}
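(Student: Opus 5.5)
First, compute $K_h$ explicitly from $\widetilde B^{-1/2}=\mathrm{diag}(\rho^{-1/2}I_3,\,S_{\mathrm{comp}}^{-1/2})\otimes I^{\otimes 3n}$ and $\widetilde A=\sum_\alpha \widetilde A^{(\alpha)}\otimes D^{(\alpha)}$. Multiplying blockwise gives an upper-right block
\[
L_h=\sum_\alpha \rho^{-1/2}C^{(\alpha)}S_{\mathrm{comp}}^{-1/2}\otimes D^{(\alpha)}.
\]
The lower-left block is $\sum_\alpha \rho^{-1/2}S_{\mathrm{comp}}^{-1/2}(C^{(\alpha)})^{\mathsf T}\otimes D^{(\alpha)}$.

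Since $D^{(\alpha)}$ is real and anti-symmetric (the ghost-point central difference, as noted in Sec.~\ref{sec:preliminaries}), we have $D^{(\alpha)*}=-D^{(\alpha)}$. The matrices $C^{(\alpha)}$ and $S_{\mathrm{comp}}^{-1/2}$ are real, and $S_{\mathrm{comp}}^{-1/2}$ is symmetric. Together these identify the lower-left block as $-L_h^*$. Anti-Hermiticity then follows, either directly from the block form or from $K_h^*=\widetilde B^{-1/2}\widetilde A^*\widetilde B^{-1/2}=-K_h$, using that $\widetilde A$ is anti-Hermitian and $\widetilde B^{-1/2}$ is Hermitian.

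For $\|K_h\|=\|L_h\|$, compute
\[
K_h^*K_h=\mathrm{diag}(L_hL_h^*,\,L_h^*L_h).
\]
Hence $\|K_h\|^2=\max\{\|L_hL_h^*\|,\|L_h^*L_h\|\}=\|L_h\|^2$. This sharpens Lemma~\ref{thm:property-of-block-matrix}, Eq.~\eqref{eq:norm1}, to an equality. The sign in the lower block does not matter, since we can conjugate by $\mathrm{diag}(I,-I)$ or simply note that the product is block diagonal.

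The bound on $\|L_h\|$ follows from the triangle inequality over $\alpha$ and $\|X\otimes Y\|=\|X\|\|Y\|$. We use $\|C^{(\alpha)}\|=1$, since each $C^{(\alpha)}$ has orthonormal rows (it is a partial permutation), and $\|D^{(\alpha)}\|\le 1/h$, as already used in the proof of Proposition~\ref{thm:2nd-trotter-error} via Ref.~\cite{LectureNotesQuantumAlgorithmsScientificComputation2022}, Proposition~4.12. Together these give $3h^{-1}\rho^{-1/2}\|S_{\mathrm{comp}}^{-1/2}\|$. Positive definiteness gives $\|S_{\mathrm{comp}}^{-1/2}\|=\|S_{\mathrm{comp}}^{-1}\|^{1/2}$, so the bound equals $3v/h$.

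Finally, norm preservation follows because $e^{tK_h}$ is unitary for anti-Hermitian $K_h$. Equivalently, $\tfrac{d}{dt}\|\tilde{\bm u}_h\|^2=\langle\tilde{\bm u}_h,(K_h+K_h^*)\tilde{\bm u}_h\rangle=0$.

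The only step needing care is confirming that the lower-left block is exactly $-L_h^*$. That requires the anti-symmetry of $D^{(\alpha)}$ together with the realness and symmetry of $S_{\mathrm{comp}}^{-1/2}$; once these are checked, the rest is routine.
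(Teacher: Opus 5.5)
Your proof is correct and follows the paper's argument essentially step for step. The steps are: anti-Hermiticity from $\widetilde B^{-1/2}\widetilde A\widetilde B^{-1/2}$; the block-diagonal form of $K_h^*K_h$, giving $\|K_h\|=\|L_h\|$; the triangle inequality with $\|C^{(\alpha)}\|=1$ and $\|D^{(\alpha)}\|\le 1/h$; and the vanishing time derivative of $\|\tilde{\bm u}_h\|^2$. You also fill in two details the paper leaves implicit: the explicit check that the lower-left block is exactly $-L_h^*$, and the identity $\|S_{\mathrm{comp}}^{-1/2}\|=\|S_{\mathrm{comp}}^{-1}\|^{1/2}$ behind the final equality with $3v/h$.
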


\begin{proof}
Since \(\widetilde{A}\) is anti-Hermitian and \(\widetilde{B}\) is Hermitian positive definite,
\begin{align}
K_h=\widetilde{B}^{-1/2}\widetilde{A}\widetilde{B}^{-1/2}
\end{align}
is anti-Hermitian. From the block form
\begin{align}
K_h=
\begin{bmatrix}
O & L_h\\
-L_h^* & O
\end{bmatrix},
\end{align}
we have
\begin{align}
K_h^*K_h
=
\begin{bmatrix}
L_hL_h^* & O\\
O & L_h^*L_h
\end{bmatrix},
\end{align}
and therefore
\begin{align}
\|K_h\|^2
=
\|K_h^*K_h\|
=
\max\{\|L_hL_h^*\|,\|L_h^*L_h\|\}
=
\|L_h\|^2,
\end{align}
which proves Eq.~\eqref{eq:classical-Kh-Lh-same}. Next,
\begin{align}
\|L_h\|
&\le
\sum_{\alpha=1}^{3}
\rho^{-1/2}\|C^{(\alpha)}\|\,
\|S_{\mathrm{comp}}^{-1/2}\|\,
\|D^{(\alpha)}\|.
\end{align}
Each \(C^{(\alpha)}\) is a coordinate selection matrix with \(\|C^{(\alpha)}\|=1\), and \(\|D^{(\alpha)}\|\le 1/h\), exactly as used earlier in the proof of Proposition~\ref{thm:2nd-trotter-error}. Hence
\begin{align}
\|L_h\|
\le
\frac{3}{h}\rho^{-1/2}\|S_{\mathrm{comp}}^{-1/2}\|
=
\frac{3v}{h}.
\end{align}
Finally, since \(K_h\) is anti-Hermitian,
\begin{align}
\dv{t}\|\tilde{\bm{u}}_h(t)\|^2
=
\langle K_h\tilde{\bm{u}}_h(t),\tilde{\bm{u}}_h(t)\rangle
+
\langle \tilde{\bm{u}}_h(t),K_h\tilde{\bm{u}}_h(t)\rangle
=
0,
\end{align}
so \(\|\tilde{\bm{u}}_h(t)\|\) is conserved.
\end{proof}

\begin{proposition}[Exact subflows and one step update]
\label{prop:classical-update-app}
Let
\begin{align}
K_{1,h}
:=
\begin{bmatrix}
O & L_h\\
O & O
\end{bmatrix},
\qquad
K_{2,h}
:=
\begin{bmatrix}
O & O\\
-L_h^* & O
\end{bmatrix}.
\end{align}
Then
\begin{align}
K_{1,h}^2=K_{2,h}^2=O,
\qquad
e^{\tau K_{1,h}}=I+\tau K_{1,h},
\qquad
e^{\tau K_{2,h}}=I+\tau K_{2,h}.
\end{align}
Consequently, for \(\tilde{\bm{u}}_h^m=[\bm{q}^m;\bm{r}^m]\) and the intermediate vector
\(
\tilde{\bm{u}}_h^{m+\frac12}:=[\bm{q}^{m+\frac12};\bm{r}^m],
\)
the next step \(\tilde{\bm{u}}_h^{m+1}=\Psi_\tau \tilde{\bm{u}}_h^m\) is given by
\begin{align}
\bm{q}^{m+\frac12}
&=
\bm{q}^m+\frac{\tau}{2}L_h\bm{r}^m,
\\
\bm{r}^{m+1}
&=
\bm{r}^m-\tau L_h^*\bm{q}^{m+\frac12},
\\
\bm{q}^{m+1}
&=
\bm{q}^{m+\frac12}+\frac{\tau}{2}L_h\bm{r}^{m+1}.
\end{align}
\end{proposition}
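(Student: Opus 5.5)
The plan is to verify the nilpotency claims directly from the block structure, exponentiate via the truncated series, and then compose the three factors of $\Psi_\tau$ in Eq.~\eqref{eq:classical-Psi-main} acting on $\tilde{\bm{u}}_h^m=[\bm{q}^m;\bm{r}^m]$.

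\textbf{Step 1: nilpotency.} Block multiplication gives
\begin{align}
K_{1,h}^2=
\begin{bmatrix}
O & L_h\\
O & O
\end{bmatrix}
\begin{bmatrix}
O & L_h\\
O & O
\end{bmatrix}
=
\begin{bmatrix}
O & O\\
O & O
\end{bmatrix}.
\end{align}
The $(1,1)$ block is $O\cdot O+L_h\cdot O$, and the others vanish similarly. The same computation for the strictly lower block-triangular $K_{2,h}$ gives $K_{2,h}^2=O$. Consequently every term of order two or higher in the exponential series vanishes, so $e^{sK_{i,h}}=I+sK_{i,h}$ for every scalar $s$, in particular for $s=\tau/2$ and $s=\tau$.

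\textbf{Step 2: apply the factors right to left.} First apply
\begin{align}
e^{\frac{\tau}{2}K_{1,h}}=\begin{bmatrix} I & \frac{\tau}{2}L_h\\ O & I\end{bmatrix}
\end{align}
to $[\bm{q}^m;\bm{r}^m]$. This returns $[\bm{q}^m+\frac{\tau}{2}L_h\bm{r}^m;\bm{r}^m]$, which is exactly $\tilde{\bm{u}}_h^{m+\frac12}$ with the stated $\bm{q}^{m+\frac12}$.

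Next apply
\begin{align}
e^{\tau K_{2,h}}=\begin{bmatrix} I & O\\ -\tau L_h^* & I\end{bmatrix}.
\end{align}
It leaves the top block unchanged and replaces the bottom by $\bm{r}^m-\tau L_h^*\bm{q}^{m+\frac12}=\bm{r}^{m+1}$.

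Finally, $e^{\frac{\tau}{2}K_{1,h}}$ again leaves the bottom block $\bm{r}^{m+1}$ unchanged and sets the top to $\bm{q}^{m+\frac12}+\frac{\tau}{2}L_h\bm{r}^{m+1}=\bm{q}^{m+1}$. The composite is $\Psi_\tau\tilde{\bm{u}}_h^m$, which yields the three update formulas.

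\textbf{Main obstacle.} There is no real obstacle; the argument is bookkeeping. The only point requiring care is the ordering of the operator product. In $\Psi_\tau$ as written, the rightmost factor acts first. Because the first and last factors are identical, the result is insensitive to reading the product in reverse for the outer factors. Still, the middle step must use the already-updated $\bm{q}^{m+\frac12}$, and this is what the right-to-left order guarantees. I would also state explicitly that the intermediate vector is only a notational device and not an approximation of $\tilde{\bm{u}}_h$ at time $t_{m+\frac12}$.
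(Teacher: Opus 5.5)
Your proposal is correct and takes essentially the same route as the paper. Nilpotency of the strictly block-triangular $K_{1,h}$ and $K_{2,h}$ truncates the exponential series, and applying the three factors of $\Psi_\tau$ successively to $[\bm{q}^m;\bm{r}^m]$ gives the update formulas; your explicit block forms of $e^{\frac{\tau}{2}K_{1,h}}$ and $e^{\tau K_{2,h}}$ just spell out bookkeeping the paper leaves implicit.
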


\begin{proof}
Both \(K_{1,h}\) and \(K_{2,h}\) are strictly triangular block matrices, hence nilpotent of index two. Therefore the exponential series truncate after the linear term. Substituting these exact subflows into
\begin{align}
\Psi_\tau
=
e^{\frac{\tau}{2}K_{1,h}}
e^{\tau K_{2,h}}
e^{\frac{\tau}{2}K_{1,h}}
\end{align}
and applying the factors successively to \(\tilde{\bm{u}}_h^m=[\bm{q}^m;\bm{r}^m]\) yields the stated update formulas.
\end{proof}

\begin{proposition}[Power boundedness]
\label{prop:classical-stability-app}
Suppose
\begin{align}
\tau\|L_h\|\le \eta<2.
\label{eq:classical-stability-condition-app}
\end{align}
Then, for all \(m\in\mathbb N\),
\begin{align}
\|\Psi_\tau^m\|\le C_\eta,
\qquad
C_\eta:=\left(1-\frac{\eta^2}{4}\right)^{-1/2}.
\label{eq:classical-power-bounded-app}
\end{align}
\end{proposition}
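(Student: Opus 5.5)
The plan is to reduce $\Psi_\tau$ to a family of $2\times 2$ matrices through the singular value decomposition of $L_h$, and then to read off power boundedness from an explicit similarity transformation in each block. Write $L_h=\sum_i s_i\,\bm{a}_i\bm{b}_i^*$ with orthonormal $\{\bm{a}_i\}\subset\mathcal X$, orthonormal $\{\bm{b}_i\}\subset\mathcal Y$, and singular values $0\le s_i\le\|L_h\|$. Then $\mathcal Z$ splits orthogonally into the two-dimensional subspaces $\mathrm{span}\{(\bm{a}_i,0),(0,\bm{b}_i)\}$ together with $\ker L_h^*\oplus\ker L_h$. The update formulas of Proposition~\ref{prop:classical-update-app} leave each of these subspaces invariant. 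On the kernel part $\Psi_\tau$ acts as the identity. On the $i$-th block it acts as
\begin{align}
M(\theta)=\begin{bmatrix}1-\tfrac{\theta^2}{2} & \theta\left(1-\tfrac{\theta^2}{4}\right)\\ -\theta & 1-\tfrac{\theta^2}{2}\end{bmatrix},\qquad \theta=\tau s_i\in[0,\eta].
\end{align}
This is obtained by composing $\begin{bmatrix}1&\theta/2\\0&1\end{bmatrix}\begin{bmatrix}1&0\\-\theta&1\end{bmatrix}\begin{bmatrix}1&\theta/2\\0&1\end{bmatrix}$. Since the decomposition is orthogonal, $\|\Psi_\tau^m\|=\max\{1,\max_i\|M(\tau s_i)^m\|\}$. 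It therefore suffices to show $\|M(\theta)^m\|\le C_\eta$ for all $\theta\in[0,\eta]$ and all $m$.

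For the $2\times2$ bound, note that $\det M=1$ and $\operatorname{tr}M=2-\theta^2\in(-2,2]$ when $\theta<2$. The eigenvalues are therefore $e^{\pm i\varphi}$ with $\cos\varphi=1-\theta^2/2$. I would symmetrize with the diagonal matrix $S=\mathrm{diag}(1,c)$, where $c=\sqrt{1-\theta^2/4}$. A direct check gives
\begin{align}
S^{-1}MS=\begin{bmatrix}1-\tfrac{\theta^2}{2} & \theta c\\ -\theta c & 1-\tfrac{\theta^2}{2}\end{bmatrix}.
\end{align}
This matrix is a rotation, since $(1-\theta^2/2)^2+\theta^2c^2=1$, and so it is orthogonal. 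Hence
\begin{align}
M^m=S(S^{-1}MS)^mS^{-1},\qquad \|M^m\|\le\|S\|\,\|S^{-1}\|=\frac{1}{c}=\left(1-\frac{\theta^2}{4}\right)^{-1/2}\le C_\eta,
\end{align}
where the final inequality uses monotonicity in $\theta\le\eta$. Because $C_\eta\ge1$, this also covers the kernel part.

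The main obstacle is the block reduction, in particular checking that the singular-vector pairs really decouple the three-stage update. I would verify this by applying each factor $e^{\frac{\tau}{2}K_{1,h}}$ and $e^{\tau K_{2,h}}$ separately to $(\bm{a}_i,0)$ and $(0,\bm{b}_i)$, using $L_h\bm{b}_i=s_i\bm{a}_i$ and $L_h^*\bm{a}_i=s_i\bm{b}_i$. Zero singular values and the kernel directions need separate handling: there both factors act as the identity.
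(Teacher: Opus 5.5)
Your proposal follows the paper's proof essentially step for step: the orthogonal splitting into the blocks $E_j$ and $E_{\ker}$ via the SVD of $L_h$, the same $2\times 2$ matrix $M_{\sigma}(\tau)$, and a diagonal similarity by $\mathrm{diag}(1,q_\sigma)$ to a rotation, giving the bound $q_\sigma^{-1}\le C_\eta$. One small slip: with $S=\mathrm{diag}(1,c)$ it is $SMS^{-1}$, not $S^{-1}MS$, that equals the rotation (the paper writes $M_\sigma=S_\sigma^{-1}Q_\sigma S_\sigma$), but since $\|S\|\,\|S^{-1}\|=1/c$ either way, your bound is unaffected.
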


\begin{proof}
Let
\begin{align}
L_h=\sum_{j=1}^{r}\sigma_j\,u_jv_j^*
\end{align}
be a singular value decomposition of \(L_h\) corresponding to its nonzero singular values, where
\(\sigma_j>0\), \(\{u_j\}_{j=1}^r\subset\mathcal X\) is orthonormal, and
\(\{v_j\}_{j=1}^r\subset\mathcal Y\) is orthonormal. Then
\begin{align}
\mathcal X
&=
\operatorname{span}\{u_1,\dots,u_r\}\oplus \ker L_h^*,
\\
\mathcal Y
&=
\operatorname{span}\{v_1,\dots,v_r\}\oplus \ker L_h.
\end{align}
For each \(j=1,\dots,r\), define
\begin{align}
E_j
:=
\operatorname{span}\left\{
\begin{bmatrix}
u_j\\
0
\end{bmatrix},
\begin{bmatrix}
0\\
v_j
\end{bmatrix}
\right\}
\subset \mathcal Z=\mathcal X\oplus\mathcal Y,
\end{align}
and define
\begin{align}
E_{\ker}:=\ker L_h^*\oplus\ker L_h.
\end{align}
Then
\begin{align}
\mathcal Z
=
\left(\bigoplus_{j=1}^{r}E_j\right)\oplus E_{\ker},
\end{align}
and this decomposition is orthogonal.

On \(E_{\ker}\), applying \(L_h\) and \(L_h^*\) annihilate any vector $r$ and $q$, so \(\Psi_\tau\) acts as the identity. On \(E_j\), using Proposition~\ref{prop:classical-update-app} and the relations
\begin{align}
L_h v_j=\sigma_j u_j,
\qquad
L_h^*u_j=\sigma_j v_j,
\end{align}
the restriction of \(\Psi_\tau\) to \(E_j\) is represented in the basis
\(
\left\{
\begin{bmatrix}u_j\\0\end{bmatrix},
\begin{bmatrix}0\\v_j\end{bmatrix}
\right\}
\)
by
\begin{align}
M_{\sigma_j}(\tau)
=
\begin{bmatrix}
1-\frac{\tau^2\sigma_j^2}{2}
&
\tau\sigma_j\left(1-\frac{\tau^2\sigma_j^2}{4}\right)
\\[1ex]
-\tau\sigma_j
&
1-\frac{\tau^2\sigma_j^2}{2}
\end{bmatrix}.
\label{eq:classical-Msigma-app}
\end{align}
Therefore \(\Psi_\tau\) is unitarily equivalent to
\begin{align}
\left(\bigoplus_{j=1}^{r}M_{\sigma_j}(\tau)\right)\oplus I_{E_{\ker}},
\end{align}
and hence
\begin{align}
\|\Psi_\tau^m\|
=
\max\left\{
1,\max_{1\le j\le r}\|M_{\sigma_j}(\tau)^m\|
\right\}.
\label{eq:classical-norm-reduction-app}
\end{align}

Fix \(\sigma\in\{\sigma_1,\dots,\sigma_r\}\) and define
\begin{align}
q_\sigma
:=
\sqrt{1-\frac{\tau^2\sigma^2}{4}},
\qquad
c_\sigma
:=
1-\frac{\tau^2\sigma^2}{2},
\qquad
s_\sigma
:=
\tau\sigma q_\sigma.
\end{align}
By Eq.~\eqref{eq:classical-stability-condition-app}, \(0<q_\sigma\le 1\). Moreover,
\begin{align}
c_\sigma^2+s_\sigma^2=1,
\end{align}
so
\begin{align}
Q_\sigma
:=
\begin{bmatrix}
c_\sigma & s_\sigma\\
-s_\sigma & c_\sigma
\end{bmatrix}
\end{align}
is unitary. Let
\begin{align}
S_\sigma
:=
\begin{bmatrix}
1 & 0\\
0 & q_\sigma
\end{bmatrix}.
\end{align}
A direct calculation gives
\begin{align}
M_\sigma(\tau)=S_\sigma^{-1}Q_\sigma S_\sigma.
\end{align}
Therefore,
\begin{align}
\|M_\sigma(\tau)^m\|
&=
\|S_\sigma^{-1}Q_\sigma^m S_\sigma\|
\le
\|S_\sigma^{-1}\|\,\|Q_\sigma^m\|\,\|S_\sigma\|
=
q_\sigma^{-1}.
\end{align}
Since \(\tau\sigma\le \tau\|L_h\|\le \eta\),
\begin{align}
q_\sigma^{-1}
=
\left(1-\frac{\tau^2\sigma^2}{4}\right)^{-1/2}
\le
\left(1-\frac{\eta^2}{4}\right)^{-1/2}
=
C_\eta.
\end{align}
Combining this with Eq.~\eqref{eq:classical-norm-reduction-app} proves Eq.~\eqref{eq:classical-power-bounded-app}.
\end{proof}

\begin{proposition}[One step local error]
\label{prop:classical-local-error-app}
If
\begin{align}
\tau\|L_h\|\le 1,
\end{align}
then
\begin{align}
\|e^{\tau K_h}-\Psi_\tau\|
\le
\frac{1}{2}\tau^3\|L_h\|^3.
\label{eq:classical-local-error-app}
\end{align}
\end{proposition}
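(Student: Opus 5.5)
The plan is to reduce everything to $2\times 2$ blocks, exactly as in the proof of Proposition~\ref{prop:classical-stability-app}. Take the singular value decomposition $L_h=\sum_j\sigma_j u_jv_j^*$. This gives the orthogonal splitting $\mathcal Z=(\bigoplus_j E_j)\oplus E_{\ker}$, and both $e^{\tau K_h}$ and $\Psi_\tau$ leave every $E_j$ and $E_{\ker}$ invariant. On $E_{\ker}$ both operators are the identity, so the difference vanishes there. It follows that
\begin{align}
\|e^{\tau K_h}-\Psi_\tau\|=\max_j\|e^{\tau K_\sigma}-M_{\sigma}(\tau)\|,\qquad
K_\sigma=\begin{bmatrix}0&\sigma\\-\sigma&0\end{bmatrix},\ \sigma=\sigma_j.
\end{align}
So it suffices to prove $\|e^{\tau K_\sigma}-M_\sigma(\tau)\|\le\tfrac12\tau^3\sigma^3$ for every $0<\sigma\le\|L_h\|$, with $x:=\tau\sigma\le 1$. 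Since $x\le \tau\|L_h\|$, this yields the stated bound.

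On one block the exact flow is the rotation
\begin{align}
R(x)=\begin{bmatrix}\cos x&\sin x\\-\sin x&\cos x\end{bmatrix}.
\end{align}
By Eq.~\eqref{eq:classical-Msigma-app},
\begin{align}
R(x)-M_\sigma(\tau)=\begin{bmatrix}a&b\\c&a\end{bmatrix},\qquad
a=\cos x-1+\tfrac{x^2}{2},\quad b=\sin x-x+\tfrac{x^3}{4},\quad c=x-\sin x.
\end{align}
Taylor's theorem with alternating-series remainders gives, for $0\le x\le1$,
\begin{align}
0\le a\le \tfrac{x^4}{24},\qquad 0\le c\le\tfrac{x^3}{6},\qquad \tfrac{x^3}{12}\le b\le\tfrac{x^3}{12}+\tfrac{x^5}{120}.
\end{align}
The lower bound on $b$ holds because $\sin x-x\ge -x^3/6$ and $x^3/4-x^3/6=x^3/12$.

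To bound the operator norm of this $2\times2$ matrix, I would write it as $aI$ plus the off-diagonal part. The off-diagonal part has norm $\max(|b|,|c|)$. Hence
\begin{align}
\|R(x)-M_\sigma(\tau)\|\le \tfrac{x^4}{24}+\max\{\tfrac{x^3}{12}+\tfrac{x^5}{120},\tfrac{x^3}{6}\}\le \tfrac{x^3}{24}+\tfrac{x^3}{6}=\tfrac{5}{24}x^3\le\tfrac12x^3,
\end{align}
where the middle step uses $x\le1$.

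This step is where the hypothesis $\tau\|L_h\|\le1$ enters: it is needed to control the higher-order Taylor terms uniformly. The remaining obstacle is the careful justification that the exact exponential decomposes along the same $E_j$. That follows because $K_h$ maps $[u_j;0]$ to $[0;-\sigma_jv_j]$ and $[0;v_j]$ to $[\sigma_ju_j;0]$, and annihilates $E_{\ker}$.

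As a fallback, if one prefers to avoid the SVD, a BCH-type integral representation of $e^{\tau K_h}-\Psi_\tau$ through the nested commutators $[K_{1,h},[K_{1,h},K_{2,h}]]$ and $[K_{2,h},[K_{1,h},K_{2,h}]]$ would give a bound of the form $c\,\tau^3\|L_h\|^3$. Tracking the constant down to $\tfrac12$ is harder that way, so the block reduction is the preferred route.
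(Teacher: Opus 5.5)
Your proof is correct, but it takes a different route from the paper.

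\emph{The paper's route.} It works at the operator level, with no spectral decomposition. Using $K_{1,h}^2=K_{2,h}^2=O$, it writes exactly $\Psi_\tau=I+\tau K_h+\tfrac{\tau^2}{2}K_h^2+\tfrac{\tau^3}{4}K_{1,h}K_{2,h}K_{1,h}$. It then uses $K_h^3=K_{1,h}K_{2,h}K_{1,h}+K_{2,h}K_{1,h}K_{2,h}$ to isolate the cubic defect $-\tfrac{\tau^3}{12}K_{1,h}K_{2,h}K_{1,h}+\tfrac{\tau^3}{6}K_{2,h}K_{1,h}K_{2,h}$, which it bounds by $\tfrac14\tau^3\|L_h\|^3$ via the triangle inequality. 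The exponential tail is bounded by $\tfrac{e}{24}\tau^4\|L_h\|^4<\tfrac14\tau^3\|L_h\|^3$, and this is the only place the hypothesis $\tau\|L_h\|\le 1$ is used.

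\emph{Your route.} You reuse the orthogonal splitting $\mathcal{Z}=(\bigoplus_j E_j)\oplus E_{\ker}$ from the power-boundedness proof. You correctly check that $K_h$, and hence $e^{\tau K_h}$, also leaves each $E_j$ invariant and vanishes on $E_{\ker}$. This reduces the problem to comparing the rotation $R(x)$ with $M_\sigma(\tau)$ on each block. Your entries $a,b,c$, their alternating-series bounds, and the bound of the $2\times 2$ norm by $|a|+\max\{|b|,|c|\}$ are all right. They give the sharper constant $\tfrac{5}{24}$.

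\emph{What each buys.} Your approach is more precise because it sees the block structure. The two cubic terms map $\mathcal{Y}\to\mathcal{X}$ and $\mathcal{X}\to\mathcal{Y}$ respectively, so their sum actually has norm $\tfrac16\tau^3\|L_h\|^3$, not the $\tfrac14$ the triangle inequality gives. It also recycles machinery the appendix already needs. The paper's approach is purely algebraic and self-contained. It relies only on nilpotency of the two pieces and $\|K_{1,h}\|=\|K_{2,h}\|=\|K_h\|=\|L_h\|$. It would therefore carry over to any splitting into two square-zero parts, including ones where the off-diagonal blocks are not negative adjoints of each other, so that no simultaneous orthogonal $2\times 2$ block decomposition exists. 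The BCH fallback you mention is not needed.
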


\begin{proof}
By Proposition~\ref{prop:classical-update-app},
\begin{align}
\Psi_\tau
&=
\left(I+\frac{\tau}{2}K_{1,h}\right)
\left(I+\tau K_{2,h}\right)
\left(I+\frac{\tau}{2}K_{1,h}\right)
\\
&=
I+\tau K_h+\frac{\tau^2}{2}K_h^2+\frac{\tau^3}{4}K_{1,h}K_{2,h}K_{1,h},
\label{eq:classical-Psi-expansion-app}
\end{align}
since \(K_{1,h}^2=K_{2,h}^2=O\). On the other hand,
\begin{align}
e^{\tau K_h}
=
I+\tau K_h+\frac{\tau^2}{2}K_h^2+\frac{\tau^3}{6}K_h^3+\mathcal{R}_4(\tau),
\end{align}
where
\begin{align}
\|\mathcal{R}_4(\tau)\|
\le
\sum_{m=4}^{\infty}\frac{\tau^m\|K_h\|^m}{m!}
\le
\frac{\tau^4\|K_h\|^4}{24}e^{\tau\|K_h\|}.
\end{align}
Because \(K_h=K_{1,h}+K_{2,h}\) and \(K_{1,h}^2=K_{2,h}^2=O\),
\begin{align}
K_h^3=K_{1,h}K_{2,h}K_{1,h}+K_{2,h}K_{1,h}K_{2,h}.
\end{align}
Subtracting Eq.~\eqref{eq:classical-Psi-expansion-app} from the exponential expansion yields
\begin{align}
e^{\tau K_h}-\Psi_\tau
=
-\frac{\tau^3}{12}K_{1,h}K_{2,h}K_{1,h}
+
\frac{\tau^3}{6}K_{2,h}K_{1,h}K_{2,h}
+
\mathcal{R}_4(\tau).
\end{align}
Now \(\|K_{1,h}\|=\|K_{2,h}\|=\|L_h\|\) and, by Proposition~\ref{prop:classical-Kh-app}, \(\|K_h\|=\|L_h\|\). Therefore
\begin{align}
\left\|
-\frac{\tau^3}{12}K_{1,h}K_{2,h}K_{1,h}
+
\frac{\tau^3}{6}K_{2,h}K_{1,h}K_{2,h}
\right\|
\le
\frac{\tau^3}{4}\|L_h\|^3.
\end{align}
If \(\tau\|L_h\|\le 1\), then
\begin{align}
\|\mathcal{R}_4(\tau)\|
\le
\frac{e}{24}\tau^4\|L_h\|^4
\le
\frac{e}{24}\tau^3\|L_h\|^3
<
\frac{1}{4}\tau^3\|L_h\|^3.
\end{align}
Adding the two bounds proves Eq.~\eqref{eq:classical-local-error-app}.
\end{proof}

\begin{proposition}[Global operator-norm error]
\label{thm:classical-global-error-app}
Let \(T=M\tau\), and suppose
\begin{align}
\tau\|L_h\|\le \min\{1,\eta\},
\qquad
0<\eta<2.
\end{align}
Then
\begin{align}
\|e^{T K_h}-\Psi_\tau^M\|
\le
\frac{C_\eta}{2}T\tau^2\|L_h\|^3,
\label{eq:classical-global-error-app}
\end{align}
\end{proposition}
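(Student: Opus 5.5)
The plan is the standard Lady Windermere's fan (telescoping) argument, combining Proposition~\ref{prop:classical-local-error-app} for the one-step error with Proposition~\ref{prop:classical-stability-app} for power boundedness. Write $E:=e^{\tau K_h}$ and note $e^{TK_h}=E^M$. The key algebraic identity is
\begin{align}
E^M-\Psi_\tau^M=\sum_{m=0}^{M-1}\Psi_\tau^{\,M-1-m}\left(E-\Psi_\tau\right)E^{m},
\end{align}
which one verifies by noting that the sum telescopes: the $m$-th summand equals $\Psi_\tau^{M-1-m}E^{m+1}-\Psi_\tau^{M-m}E^{m}$.

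Next I bound each factor of a summand.
\begin{itemize}
\item Since $K_h$ is anti-Hermitian (Proposition~\ref{prop:classical-Kh-app}), $E$ is unitary, so $\|E^m\|=1$.
\item The hypothesis $\tau\|L_h\|\le\eta<2$ makes Proposition~\ref{prop:classical-stability-app} applicable. It gives $\|\Psi_\tau^{M-1-m}\|\le C_\eta$; for exponent zero this holds trivially because $C_\eta\ge1$.
\item The hypothesis $\tau\|L_h\|\le1$ lets Proposition~\ref{prop:classical-local-error-app} apply. It gives $\|E-\Psi_\tau\|\le\frac12\tau^3\|L_h\|^3$.
\end{itemize}

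Summing the $M$ terms with the triangle inequality and submultiplicativity gives
\begin{align}
\|e^{TK_h}-\Psi_\tau^M\|\le M\,C_\eta\,\tfrac12\tau^3\|L_h\|^3=\tfrac{C_\eta}{2}T\tau^2\|L_h\|^3,
\end{align}
using $M\tau=T$.

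The only point needing care is the ordering in the telescoping identity. The power of $\Psi_\tau$ must sit on the left and the power of the exact flow on the right, or the reverse; either works, since one factor is controlled by unitarity and the other by power boundedness. No uniform-in-$m$ issue arises because $C_\eta$ is independent of $m$. I do not expect any genuine obstacle: both hypotheses of the proposition are exactly those needed for the two cited propositions.
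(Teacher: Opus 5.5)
Your proposal is correct and is essentially the paper's own proof. Both use the same telescoping (Lady Windermere's fan) argument, combining unitarity of $e^{tK_h}$, the power bound $\|\Psi_\tau^r\|\le C_\eta$, and the one-step bound $\frac12\tau^3\|L_h\|^3$. The only difference is that the paper puts the powers of $e^{\tau K_h}$ on the left and $\Psi_\tau^r$ on the right, which, as you note, is immaterial.
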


\begin{proof}
Write
\begin{align}
\tilde{\bm{u}}_h(T)-\tilde{\bm{u}}_h^M
=
\left(e^{T K_h}-\Psi_\tau^M\right)\tilde{\bm{u}}_h^0.
\end{align}
Using the telescoping identity
\begin{align}
A^M-B^M
=
\sum_{r=0}^{M-1}A^{M-1-r}(A-B)B^r
\end{align}
with \(A=e^{\tau K_h}\) and \(B=\Psi_\tau\), we obtain
\begin{align}
e^{T K_h}-\Psi_\tau^M
=
\sum_{r=0}^{M-1}
e^{(M-1-r)\tau K_h}
\left(e^{\tau K_h}-\Psi_\tau\right)
\Psi_\tau^r.
\end{align}
Since \(K_h\) is anti-Hermitian, \(e^{tK_h}\) is unitary and thus has norm one. By Proposition~\ref{prop:classical-stability-app},
\begin{align}
\|\Psi_\tau^r\|\le C_\eta,
\end{align}
and by Proposition~\ref{prop:classical-local-error-app},
\begin{align}
\|e^{\tau K_h}-\Psi_\tau\|
\le
\frac{1}{2}\tau^3\|L_h\|^3.
\end{align}
Therefore
\begin{align}
\|e^{T K_h}-\Psi_\tau^M\|
&\le
\sum_{r=0}^{M-1}
1\cdot \frac{1}{2}\tau^3\|L_h\|^3\cdot C_\eta
=
\frac{C_\eta}{2}M\tau^3\|L_h\|^3
=
\frac{C_\eta}{2}T\tau^2\|L_h\|^3.
\end{align}
This proves Eq.~\eqref{eq:classical-global-error-app}.
\end{proof}

\begin{proposition}[Classical cost for the same semidiscrete ODE]
\label{thm:classical-cost-app}
Fix a target final-time operator-norm error bound \(\epsilon\in(0,1)\). A sufficient step size condition is
\begin{align}
\tau
\le
\min\left\{
\frac{\eta}{\|L_h\|},
\sqrt{\frac{2\epsilon}{C_\eta T\|L_h\|^3}}
\right\},
\qquad
0<\eta\le 1,
\label{eq:classical-step-condition-app}
\end{align}
and hence
\begin{align}
m_{\mathrm{cl}}
=
\mathcal{O}\!\left(
\max\left\{
T\|L_h\|,
\sqrt{\frac{T^3\|L_h\|^3}{\epsilon}}
\right\}
\right).
\label{eq:classical-step-count-app}
\end{align}
Using Eq.~\eqref{eq:classical-Lh-bound-app}, this becomes
\begin{align}
m_{\mathrm{cl}}
=
\mathcal{O}\!\left(
\max\left\{
\frac{Tv}{h},
\frac{T^{3/2}v^{3/2}}{h^{3/2}\epsilon^{1/2}}
\right\}
\right).
\label{eq:classical-step-count-v-app}
\end{align}
Moreover, one step has arithmetic cost \(\mathcal{O}(N^3)\) and memory cost \(\mathcal{O}(N^3)\). Therefore
\begin{align}
W_{\mathrm{cl}}
&=
\mathcal{O}\!\left(
N^3
\max\left\{
\frac{Tv}{h},
\frac{T^{3/2}v^{3/2}}{h^{3/2}\epsilon^{1/2}}
\right\}
\right),
\label{eq:classical-total-cost-app}
\\
M_{\mathrm{cl}}
&=
\mathcal{O}(N^3).
\end{align}
\end{proposition}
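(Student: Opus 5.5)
The plan is to derive the step-size condition directly from the two previously established propositions, then count steps and per-step cost. First, I will impose $\tau\|L_h\|\le\eta$ with $0<\eta\le 1$. This makes the hypotheses of both Proposition~\ref{prop:classical-stability-app} and Proposition~\ref{thm:classical-global-error-app} hold simultaneously, because $\min\{1,\eta\}=\eta$ and $\eta<2$. Then Proposition~\ref{thm:classical-global-error-app} gives $\|e^{TK_h}-\Psi_\tau^M\|\le \tfrac{C_\eta}{2}T\tau^2\|L_h\|^3$. Requiring the right-hand side to be at most $\epsilon$ is equivalent to $\tau\le\sqrt{2\epsilon/(C_\eta T\|L_h\|^3)}$. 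Together with the stability constraint, this gives Eq.~\eqref{eq:classical-step-condition-app}.

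Second, I will choose $M=\lceil T/\tau_{\max}\rceil$, where $\tau_{\max}$ is the minimum in Eq.~\eqref{eq:classical-step-condition-app}, and set $\tau=T/M\le\tau_{\max}$. Then $M\le 1+\max\{T\|L_h\|/\eta,\ \sqrt{C_\eta T^3\|L_h\|^3/(2\epsilon)}\}$. Fixing $\eta$, say $\eta=1$ so that $C_\eta=2/\sqrt3$, the constants are absorbed and we get Eq.~\eqref{eq:classical-step-count-app}. The additive $1$ is harmless: we may assume $T\|L_h\|\gtrsim 1$, since otherwise a single step already suffices, which is the degenerate regime. Substituting $\|L_h\|\le 3v/h$ from Proposition~\ref{prop:classical-Kh-app} gives Eq.~\eqref{eq:classical-step-count-v-app}. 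Both terms in the maximum are monotone increasing in $\|L_h\|$, so replacing $\|L_h\|$ by its upper bound is legitimate.

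Third, I will handle the per-step cost. By Proposition~\ref{prop:classical-update-app}, one step consists of two applications of $L_h$, one of $L_h^*$, and $\mathcal O(N^3)$ vector operations. Each $L_h$ is a sum over three axes of Kronecker products of a fixed $3\times6$ matrix $\rho^{-1/2}C^{(\alpha)}S_{\mathrm{comp}}^{-1/2}$, which is precomputed once, with the tridiagonal $D^{(\alpha)}$. Applying it to a vector of length $6N^3$ therefore costs $\mathcal O(N^3)$ operations: each grid point involves two neighbor differences per axis and a constant-size matrix–vector product. Storage is a constant number of vectors of length $9N^3$, so memory is $\mathcal O(N^3)$. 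Multiplying the per-step cost by $m_{\mathrm{cl}}$ yields Eq.~\eqref{eq:classical-total-cost-app}.

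The only mildly delicate point is the bookkeeping in the second step. Specifically, I must ensure that the ceiling and the dependence on the fixed constant $\eta$ are legitimately hidden in $\mathcal O(\cdot)$, and that the requirement $\eta\le1$ is what makes the local-error hypothesis $\tau\|L_h\|\le1$ automatic. Everything else is direct substitution into earlier propositions.
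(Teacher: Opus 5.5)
Your proposal is correct and follows the paper's own argument. You combine the stability restriction $\tau\|L_h\|\le\eta$ with the accuracy requirement from the global error bound, set $m_{\mathrm{cl}}=T/\tau$, substitute $\|L_h\|\le 3v/h$, and multiply by the $\mathcal{O}(N^3)$ cost of two applications of $L_h$ and one of $L_h^*$ per step. Your handling of the ceiling, the fixed $\eta$, and monotonicity in $\|L_h\|$ is more careful than the paper's.

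One small correction concerns the additive $1$ from the ceiling. It is absorbed whenever the maximum is bounded below by a constant, not only when $T\|L_h\|\gtrsim 1$. If $T\|L_h\|<1$ but $\epsilon$ is very small, a single step does not suffice. In that case, however, the second term of the maximum is large, so the bound still holds.
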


\begin{proof}
Equation~\eqref{eq:classical-step-condition-app} combines the stability restriction from Proposition~\ref{prop:classical-stability-app} with the accuracy restriction from Theorem~\ref{thm:classical-global-error-app}. Equation~\eqref{eq:classical-step-count-app} follows from \(m_{\mathrm{cl}}=T/\tau\), and Eq.~\eqref{eq:classical-step-count-v-app} follows from Eq.~\eqref{eq:classical-Lh-bound-app}.

For the cost per step, each \(D^{(\alpha)}\) is a nearest neighbor finite difference operator on the three-dimensional grid, and each \(C^{(\alpha)}\) is a fixed \(3\times 6\) matrix. Hence one application of \(L_h\) or \(L_h^*\) requires \(\mathcal{O}(N^3)\) arithmetic operations. Since the update in Proposition~\ref{prop:classical-update-app} applies \(L_h\) twice and \(L_h^*\) once, together with vector additions and scalar multiplications, one time step has arithmetic cost \(\mathcal{O}(N^3)\). Multiplying by the step count gives Eq.~\eqref{eq:classical-total-cost-app}.
\end{proof}

\end{document}